\newtheorem{theorem}{Theorem}[section]
\newtheorem{lemma}[theorem]{Lemma}
\newtheorem{proposition}[theorem]{Proposition}
\newtheorem{corollary}[theorem]{Corollary}
\theoremstyle{definition}
\newtheorem{definition}[theorem]{Definition}
\theoremstyle{remark}
\newcommand{\SWAP}{SW\hspace{-1.2mm}AP}
\newcommand{\SoS}{SoS_1(\mathcal{Proj})}
\newcommand{\knote}[1]{\footnote{{\color{red} {\bf Kevin}: {#1}}}}
\newcommand{\lunote}[1]{\footnote{{\color{red} {\bf Lu}: {#1}}}}
\newcommand{\cnote}[1]{\footnote{{\color{red} {\bf Chaithanya}: {#1}}}}
\newcommand{\jnote}[1]{\footnote{{\color{red} {\bf Jun}: {#1}}}}
\newcommand{\onote}[1]{\footnote{{\color{blue} {\bf Ojas}: {#1}}}}
\renewcommand{\knote}[1]{}
\renewcommand{\lunote}[1]{}
\renewcommand{\cnote}[1]{}
\renewcommand{\jnote}[1]{}
\renewcommand{\onote}[1]{}
\renewcommand\footnotemark{}
\begin{document}

\title{An SU(2)-symmetric Semidefinite Programming Hierarchy for Quantum Max Cut}

%\author{Jun Takahashi}
%\email{juntakahashi@unm.edu}
%\author{Chaithanya Rayudu}
%\email{chaithanyarss@unm.edu}
%\author{Cunlu Zhou}
%\email{czhou@unm.edu}
%\affiliation{Department of Physics and Astronomy and Center for Quantum Information and Control, University of New Mexico, Albuquerque, New Mexico 87131, USA}

%\author{Robbie King}
%\email{wking@caltech.edu}
%\affiliation{Department of Computing and Mathematical Sciences, Caltech, Pasadena, CA, USA}

%\author{Kevin Thompson}
%\email{kevthom@sandia.gov}
%\author{Ojas Parekh}
%email{odparek@sandia.gov}
%\affiliation{Sandia National Laboratories, Albuquerque, NM, USA}

\newcommand{\UNM}{Department of Physics and Astronomy and Center for Quantum Information and Control, University of New Mexico, Albuquerque, New Mexico 87131, USA}
\newcommand{\Caltech}{Department of Computing and Mathematical Sciences, Caltech, Pasadena, CA, USA}
\newcommand{\Sandia}{Sandia National Laboratories, Albuquerque, NM, USA}
\newcommand{\ISSP}{The Institute for Solid State Physics, The University of Tokyo, Kashiwa, Chiba, Japan}
\newcommand{\CSIQUDS}{Department of Computer Science \& Institut quantique, Universit\'e de Sherbrooke, QC, Canada}

\author{Jun Takahashi}
\affiliation{\UNM}
\affiliation{\ISSP}

\author{Chaithanya Rayudu}
\affiliation{\UNM}
\author{Cunlu Zhou}
\affiliation{\UNM}
\affiliation{\CSIQUDS}

\author{Robbie King}
\affiliation{\Caltech}

\author{Kevin Thompson}
\author{Ojas Parekh}
\affiliation{\Sandia}

\date{\today}

\begin{abstract}
Understanding and approximating extremal energy states of local Hamiltonians is a central problem in quantum physics and complexity theory.
Recent work has focused on developing approximation algorithms for local Hamiltonians, and in particular the ``Quantum Max Cut'' ({\scshape QMaxCut}) problem, which is closely related to the antiferromagnetic Heisenberg model. 
In this work, we introduce a family of semidefinite programming (SDP) relaxations based on the Navascu{\'e}s-Pironio-Ac{\'i}n (NPA) hierarchy which is tailored for {\scshape QMaxCut} by taking into account its SU(2) symmetry. 
We show that the hierarchy converges to the optimal {\scshape QMaxCut} value at a finite level, which is based on a characterization of the algebra of SWAP operators.
We give several analytic proofs and computational results showing exactness/inexactness of our hierarchy at the lowest level on several important families of graphs.

We also discuss relationships between SDP approaches for {\scshape QMaxCut} and frustration-freeness in condensed matter physics and numerically demonstrate that the SDP-solvability practically becomes an efficiently-computable generalization of frustration-freeness.
Furthermore, by numerical demonstration we show the potential of SDP algorithms to perform as an approximate method to compute physical quantities and capture physical features of some Heisenberg-type statistical mechanics models even away from the frustration-free regions.

\end{abstract}

\maketitle

\section{Introduction}

The study of spin models plays a fundamental role both in physics and computer science. While 
%many physically relevant 
most models are %expected to be 
generally
too difficult to solve exactly \cite{bet31zur,lie62ord,pid17}, they provide insights into 
%uniquely quantum 
physical phenomena by %providing good representations of important systems in condensed matter physics 
serving as an effective description of condensed matter systems \cite{san10com,sac23qua}. 
%,sha81exa,maj69nex,san07evi,afl}. 
%One prominent model in the field of condensed matter physics is the antiferromagnetic Heisenberg model, which has been extensively studied and has provided insights into various physical phenomena. 
The antiferromagnetic Heisenberg model has been well-studied in physics and forms the focus of a recent flurry of work in optimization \cite{gha19alm, par22opt, ans20bey, par21app}, with the goal of extending the rich field of approximation algorithms to quantum problems. 
%applying classical techniques to approximate quantum problems REF.  %on approximation algorithms in theoretical computer science, and the search for quantum extensions have recently gathered %in computer science has shed light on approximation algorithms, recently gathering 
%attention to the antiferromagnetic Heisenberg model from an optimization point of view as well. 
Already as a classical spin model, the Ising model plays a central role in the intersection between statistical physics and combinatorial optimization. The problem of computing the ground state energy for an antiferromagnetic Ising model on an arbitrary graph is known to be equivalent to the classical ``Max Cut" ({\scshape MaxCut}) problem, one of the {\sf NP}-complete problems originally listed by Karp \cite{karp1972red}. While computing the ground state energy exactly is therefore hopeless in general, the celebrated Goemans-Williamson (GW) algorithm \cite{goe95imp} obtains an approximate solution %which has an approximation guarantee of $0.878\ldots$, 
which is optimal under the assumption of the {\it Unique Games Conjecture} \cite{khot07} and {\sf P}$\neq${\sf NP}. 

The {\it quantum} Max Cut ({\scshape QMaxCut}) problem is closely related to the antiferromagnetic quantum {\it Heisenberg} model 
and plays a crucial role in understanding the 
%serves as a stepping stone for extending our understanding of
hardness of approximation of Local Hamiltonian Problems.  
Finding the ground state of the antiferromagnetic  Heisenberg model corresponds to finding the {\it maximum} energy state of {\scshape QMaxCut}, yet the complexity of approximating these problems likely differs.  Polynomial-time approximation algorithms with constant-factor guarantees are known for {\scshape QMaxCut} on arbitrary interaction graphs, while these are not expected to exist for the antiferromagnetic Heisenberg model.  The {\scshape QMaxCut} Hamiltonian was designed to bear similarity to {\scshape MaxCut}, and this has enabled new types of approximation algorithms for quantum local Hamiltonians that draw inspiration from classical approximations for {\scshape MaxCut} and other constraint satisfaction problems.
%and the difference in convention allows us to discuss approximation algorithms from the computer science perspective. 
%these problems differ in how well they can be approximated, 
{\scshape QMaxCut} parallels {\scshape MaxCut} in that the decision version of the problem is known to be {\sf QMA}-complete \cite{pid17}, but unlike {\scshape MaxCut} the precise approximability of {\scshape QMaxCut} remains largely enigmatic. Recent works have been steadily improving the achievable approximation factor \cite{gha19alm, par22opt, ans20bey, par21app, kin22, lee22}, as well as conjecturing limitations on the achievable approximation factor \cite{hwa22uni}, but these upper and lower bounds have a sizable gap. In contrast, many combinatorial optimization problems are conjectured to be optimally approximated by techniques similar to the GW algorithm \cite{khot07}. 
%Hence, an important open problem is to find a simple approximation algorithm that is optimal under some well motivated complexity theoretic conjecture, just as the GW algorithm for the MC problem. In fact many combinatorial optimization problems are conjectured to be optimally approximated 
In general %results 
techniques of this form can be regarded as the first order of a family of approximation algorithms derived from the Lasserre hierarchy. %In general many combinatorial approximation algorithms can be optimally approximated, up to unique games, by similar techniques.

The Lasserrre hierarchy (or its dual, the Sum of Squares Hierarchy) is the tool of choice for many combinatorial optimization problems, with a well developed theory and practice (see e.g., \cite{lau09sum}). For a given problem this hierarchy corresponds to a set of semidefinite programs of increasing size and complexity (with increasing level).  At high level these hierarchies converge to the optimal solution of combinatorial optimization problems under fairly general assumptions \cite{las01glo, par03sem}, and at low level they relax the optimization problem.
%, producing an ``approximation''\cnote{this has the potential to create confusion for someone who doesn't know about these hierarchies} to the optimal solution.  
The hierarchy has the benefit of providing an explicit proof that the objective achieved by the SDP bounds the optimal objective value (a ``sum-of-squares'' proof).  On the other hand, there are known limitations on these hierarchies, with very simple objective functions provably non-convergent until the SDP reaches exponential size \cite{gri01lin}.

Navascu{\'e}s, Pironio, and Ac{\'i}n (NPA) \cite{pir10con, nav08con} generalized Lasserre's construction \cite{las01glo} to the quantum setting, producing a powerful tool for quantum information problems. Working directly with quantum states is infeasible on classical computers since they require exponential resources in space and time in general, so in many cases NPA and similar hierarchies provide new avenues for understanding quantum systems. Such hierarchies are also referred to as noncommutative or quantum Lasserre hierarchies.  Many authors have used hierarchies to characterize quantum correlations \cite{nav07bou, nav08con}, design entanglement witnesses \cite{bac17}, and probe questions in entangled games \cite{kem07ent,kem09uni,Bam15sum,joh16ext,wat18alg,cui20gen,ji22mip,nav07bou}. In quantum Chemistry \cite{maz07red,nak01var} it is generally referred to as the {\it variational $2$-RDM method} and is used to provide computational bounds on the electronic structure problem when the dimension is too large for direct computation. More generally, SDP relaxations have been used for studying quantum many-body problems in various settings \cite{kho21sca,hai20var,bau12low,bar12sol}. Our primary application of interest is using NPA for the local Hamiltonian problem, along the lines of a recent thrust of work in quantum optimization \cite{bra16pro, gha19alm, par21app, par22opt, hwa22uni, has22opt}.

The main difference between NPA-like hierarchies and the Lasserre hierarchy is that NPA relaxes optimization over non-commuting rather than commuting variables. One might expect that quantum optimization landscape would parallel the classical one and that largely the same techniques would be useful for a breadth of problems, however, it appears that quantum optimization is richer in many ways. There are not known techniques which apply to many different problems, and, contrary to the classical case, it is known that the simplest ``first order'' algorithm is {\it not} optimal \cnote{By saying not optimal, we might be underselling the nuances in the quantum setting a bit more than we might want to {\bf Jun}: I think this is about the Pauli NPA, so it's OK. But maybe people will be confused exactly in that way is your point?}\knote{I beefed it up a little} for {\scshape QMaxCut} \cite{par22opt}, which sharply contrasts with the case for {\scshape MaxCut} \cite{goe95imp}. 
Interestingly, it is unclear at this point what form the optimal algorithm should take or even if there is an optimal classical algorithm. Since QMA-hard problems have witnesses which are highly entangled, it is likely difficult to describe them and to determine what kind of quantum state/algorithm is best for the problem.  %Essentially the difference between the hierarchies is that the Lasserre hierarchy optimizes over commuting variables while NPA treats the variables as non-commuting entities.  
%There has been a recent flurry of progress on using the NPA hierarchy in quantum approximation algorithms, 
%especially in understanding the anti-ferromagnetic Heisenberg model or ``quantum Max Cut'' ({\scshape QMaxCut}).  {\scshape QMaxCut} is a natural QMA-hard analog of {\scshape MaxCut}, with much less currently understood about its complexity of approximation.  
%especially in understanding the complexity of approximating {\scshape QMaxCut}. 
%The performance of approximation algorithms has been steadily improving using increasingly advanced ansatz, but it remains unclear what the best ansatz is for approximation algorithms since it is not understood what form the ground state should take for generic {\scshape QMaxCut} instances.
%For MC it is known that the lowest level of the Lasserre hierarchy is essentially optimal for approximation under the assumption of the \textit{Unique Games Conjecture} (UGC) and $\textsf{P} \neq \textsf{NP}$, but for {\scshape QMaxCut} it is unclear what level should be optimal as well as what ``kind'' of the NPA hierarchy should be used. 
%From this perspective, the GW algorithm fits nicely as the lowest level of the Lasserre hierarchy achieving optimal approximation ratio for MC with respect to UGC and $\textsf{P}\neq\textsf{NP}$.
Consequently, it is unclear what the best form of NPA is for {\scshape QMaxCut}, since NPA is defined using abstract non-commutative operators, and it could be that the optimal approximation algorithm takes advantage of a clever choice of the operators. 
%Symmetry is a clear theme in much of physics REF so it is reasonable to expect that the relaxation used for an approximation algorithm might take into account the important symmetry present in the quantum optimization problem. 
The generic $2$-Local Hamiltonian problem generalizes many classical problems \cite{woc03}, including those which are inapproximable (with constant approximation factor) under ${\sf P}\neq {\sf NP}$ \cite{zuc06}, so it is reasonable to expect that the optimal approximation algorithm takes advantage of the specific family or Hamiltonians it is designed for. There is precedent in this direction in that symmetry has already been used to drastically reduce the size of SDP relaxations on both the quantum \cite{ioa21} and classical \cite{gat04} side.
%since the optimal form of the ground state is not known, it is not known what Hence it reamins unclear what set of SDPs to use for Unlike MC, since the optimal algorithm is not know and since the optimal ansatz is not known, it remains unclear what set of SDPs deriving from NPA should be used.There is generalyl freedom However for {\scshape QMaxCut}, it remains unclear what level is optimal as well as what ``kind'' of the NPA hierarchy shall be used, since unlike the Lasserre hierarchy, the NPA hierarchy naturally allows a degree of freedom in the basis choice. 
One immediate inconvenience of the Pauli-based NPA hierarchy used in the past \cite{gha19alm, par21, bra19} is that the first level of the hierarchy fails to solve {\scshape QMaxCut} for the simplest types of nontrivial instances one can think of (star graphs \cite{par21app}). This again is in sharp contrast with the classical {\scshape MaxCut} case, since the GW algorithm solves {\it all} of the bipartite graph instances exactly, which includes the star graphs as the simplest subclass. 
So far, some works have focused on an NPA hierarchy based on using the Pauli operators as variables, as well as hinted at another kind of hierarchy using the anti-ferromagnetic local terms of the Hamiltonian as non-commuting variables in the optimization \cite{par22opt}.

%Since arbitrary quantum states\knote{probably need to be more clear in preceding text that we're talking about classical alg for quantum problems} cannot be efficiently described on a classical computer the key design choice in approximation algorithms for {\scshape QMaxCut} is the ``ansatz'' or ``kind'' of state which the algorithm produces.  Many papers focus on product states \cite{bra16pro, gha19alm, par22opt}, with a recent thrust dedicated to finding approximation algorithms using entangled ansatz \cite{ans20bey, par21app, ans21imp}.  Extremal eigenvectors for QMA-complete problems are expected to be highly entangled in general, so the product state ansatz is vastly limiting. Therefore, it is an important research direction to find algorithms which natively provide states that are highly entangled.  In order to use NPA to find entangled states, it has been important in these works to understand what properties of entanglement NPA can ``mimic''.  More generally understanding which kind of instances NPA produces an exact solution at a low level has been a key component.   

\subsection{Our Contributions}

\begin{comment}
Things to mention/highlight:
\begin{enumerate}
    \item Implications for approximating QMaxCut
    \item We give explicit SoS proofs for many examples
    \item Separation between real and complex NPA.  Real version is important because it is easier to analyze and reason about, and this has enabled approximations for QMaxCut
    \item Evidence for perhaps unexpected behavior on cycles (in contrast to classical Max Cut)
    \item Easy-to-read table of analytical/numerical results (i.e., which graphs are solvable at which level)
    \item include a guide to the rest of the paper
\end{enumerate}
\end{comment}

%Our starting point is the definition of a set of SDPs approximating {\scshape QMaxCut} which respect the natural symmetry present in the problem.  
\onote{[Explain QMaxCut]}
\onote{[Summarize contributions]}
{\scshape QMaxCut} is the problem of solving for the largest eigenvalue of a class of instances of the $2$-Local Hamiltonian problem.  {\scshape QMaxCut} instances are parameterized by weighted graphs.  Given a vertex set $V$ and a function $w$ from pairs of vertices to $\mathbb{R}_{\geq 0}$ such a Hamiltonian is written as 
$$
H=\sum_{\substack{i, j \in V\\i<j}} w_{ij} \frac{\mathbb{I}-X_i X_j -Y_i Y_j -Z_i Z_j}{4},
$$
where $X_i$, $Y_i$, and $Z_i$ stand for Pauli matrices $X$, $Y$, and $Z$ on qubit $i$.  {\scshape QMaxCut} Hamiltonians are naturally invariant under conjugation by any local unitary transformation on all qubits, so Schur-Weyl duality implies that the optimal eigenstate lies in an irreducible representation of the symmetric group.\onote{[sentence relating SWAP and symmetric group]} Hence in defining NPA it is sensible to use permutation operators or, equivalently, polynomials in the $2$-local {\scshape QMaxCut} (Heisenberg) terms.  We demonstrate that an abstractly defined {\it operator program} has objective matching the extremal eigenvalue and that the objective of NPA defined using this operator program converges at some finite level to the optimal solution.  We learned upon completing the present work that this observation is already implicit in some results in representation theory \cite{pro76, pro21, lit34}.  However, a unique contribution of our work is an explicit and self-contained description of the SWAP operator program that is accessible to the broader communities such as quantum information and computer science, as well as additional context for its role in the local Hamiltonian problem.  We show that the (weaker) real valued version of the NPA hierarchy agrees with the standard one at level-$1$, while giving an explicit example which we (numerically) demonstrate separates the real and complex versions in general.  The real version is studied in many works \cite{par21app, ans20bey, par22opt} so we motivate these works while also providing evidence that they could be improved.  
The lowest level of our SDP family roughly corresponds to the {\it second} lowest level of the Pauli hierarchy used in those works, 
but achieves the same approximation factor with the lowest level. 
This simplification not only leads to a runtime speedup by an order of magnitude, but also can lead to deeper implications for fully understanding the complexity landscape of 
{\scshape QMaxCut}.

In the direction of improving existing algorithms, we give several new families of graphs where we demonstrate exactness/inexactness of our family of SDPs.  In existing approximation algorithms \cite{par21app, par22opt, kin22, lee22} a deep understanding of instances which the low level SDP gets correct is an integral part of the analysis (the so called ``star bound''), so it is possible that results established here could lead to approximation algorithms with better performance.  One particularly prominent example where we demonstrate SDP exactness is for weighted star graphs.  We are aware of an unpublished proof of this preceding our results \cite{per_comm}, but here we provide a different proof of this fact which gives a pleasing ``geometric'' interpretation of monogamy of entanglement inequalities in the context of NPA hierarchies.  
The weighted star bound seems likely to have many applications; here we demonstrate that it implies exactness for another family of graphs, the ``double star'' graphs.  
We complement this with many other classes of graphs where we can show exactness, some of which correspond to condensed matter physics models including the Majumdar Ghosh-model and the Shastry-Sutherland model.  Additionally, we provide two families (complete graphs with odd number of vertices, and ``crown graphs" with certain weights) of graphs where we can analytically prove looseness of NPA at the first level. In fact we are able to provide an analytic characterization of when low levels of the hierarchy are exact on crown graphs.

%and show that at level-$1$ the real version of NPA agrees with the complex version.  Since many approximation algorithms use only the real version \cite{par21app, ans20bey, par22opt} this provides motivation for those works.  Indeed, the second lowest level of NPA defined with Pauli terms is used in \cite{par21app} and roughly corresponds to the lowest level of NPA defined using {\scshape QMaxCut} terms (one could use the lowest level of NPA with {\scshape QMaxCut} terms in that work instead and achieve the same approximation factor).  

%Our above results yield a characterization of the algebra of SWAP operators, using polynomial equations of degree at most two over the operators.  To the best of our knowledge, this is the first such characterization.  
%Previous works have conjectured such characterizations using polynomial equations of higher degree 
%It has been rather ambiguous what degree a complete characterization requires \cite{ili18squ,egg01sep}, and in this context, it may be surprising that a degree-two characterization is possible.\onote{Can we give some applications for understanding the SWAP algebra?  What do the previous works say?} 
%The algebra of the SWAP operators we introduce could be regarded as a variant of the well-studied Temperley-Lieb algebra \cite{tem71rel}, potentially giving new directions for studying the mathematical structure of Heisenberg models with general interaction graphs.\jnote{Probably overselling, but some kind of an application...} 
\knote{Deleted paragraph here}
%We discuss several families of {\scshape QMaxCut} instances and provide analytic proofs that the NPA hierarchy we introduced obtains the exact solution at the lowest level.  These families include %many natural classes including 
%natural simple graphs like
%weighted star graphs and (nearly) complete bipartite graphs, as well as some condensed matter physics models including the Majumdar Ghosh-model and the Shastry-Sutherland model, as we will see later.  
%Additionally, we provide two families (complete graphs with odd number of vertices, and ``crown graphs" with certain weights) of graphs where we can analytically prove looseness of NPA at the first level. 
%Although it is far from a complete categorization on when the lowest level NPA obtains the exact ground state, it serves as a theoretical steppingstone to understand the strength and weakness of the most natural NPA hierarchy for {\scshape QMaxCut}. 

Equipped with the new SDP family, we then provide extensive numerical results studying the exactness/inexactness of NPA at low levels. 
We first provide results for an exhaustive search among all possible unweighted graphs up to $8$ vertices, and then proceed to physically interesting cases with up to $60$ vertices. 
With the exhaustive search, we find no unifying features among examples where NPA is exact, and examples which are seemingly ``simple'' where the optimal SDP objective at low level is far from the extremal eigenvalue as well. For cycles we find that neither the first level of our SDP family or second level the hierarchy previously considered in \cite{par21app, par22opt} is exact at low levels in sharp contrast to MaxCut where the lowest level is exact on all even cycles and the second level is exact on all cycles\cite{bar14}\jnote{We don't actually have numerical evidence about Lv2 Proj, if I remember correctly. We couldn't go to large enough systems to test if it's actually failing for large cycles.}. 
%We find that the landscape of exact solvability with NPA is extremely complicated, with many complicated-appearing graphs being exactly solved with no apparent unifying feature. 
It is impossible to rigorously certify that NPA achieves the optimal eigenvalue using purely numerics, since we have many cases where the optimal SDP objective is only different from the extremal eigenvalue in the 4th or 5th decimal place. 
%and many examples have optimal SDP objective which agrees with the extremal eigenvalue to 4 or 5 decimal places, so a careful analysis is required. 
We classify graphs according to how the error of the SDP optimal solution behaves as a function of the tolerance parameter for the SDP. This lets us confidently conclude from numerics, whether the NPA is giving the exact extremal eigenvalue or not. 
 In doing so, we are able to explicitly show separation of different NPA hierarchies, which is otherwise subtle. 

Moreover, we run numerical simulations on some condensed matter physics models, demonstrating that the the lowest level of our NPA hierarchy obtains exact ground states of ``frustration-free" quantum spin systems such as the Majumdar-Ghosh and Shastry-Sutherland models. We point out that this is a natural consequence from the connection between frustration-freeness and sum of squares proof, showing that the NPA hierarchy as a whole is essentially a generalization of the frustration-free notion. 

The salient feature of our numerical results is that the SDP seems to predict many important physical properties even on instances where it is not achieving the optimal eigenvalue.  For instance, in models with a phase transition, the SDP also appears to reflect that, by having a discontinuous optimal SDP objective as a function of the parameters. 
Additionally, the SDP obtains the correct decaying exponent for the correlation function as on the Heisenberg spin chain, even though there is strong evidence that it does not correctly predict the optimal energy.  This suggests the capability of SDP solutions to exhibit nontrivial long-range entangled features of a critical ground state to some extent.  Using ``pseudo entanglement'' to model quantum systems and predict their physical properties seems to be a relatively open and exciting research direction with only a few results known \cite{has22per}. Since simulating large quantum systems is intractable on classical computers, the NPA hierarchy provides the possibility of probing features of quantum systems using (non-physical) pseudo states on a classical computer which would be unobtainable otherwise. 
This type of numerical analysis is only possible with our projector-based NPA hierarchy, since with the Pauli-based NPA hierarchy, the matrix size for SDP grows faster. Although the scaling difference is theoretically only a constant factor, the largest computable system size being $\sim 60$ qubits rather than $\sim 20$ makes a practical difference in terms of how deeply we can actually probe their performance. Moreover, in the projector SDP formulation, most of the variables in the moment matrix are free variables, an important feature that can significantly improve the numerical efficiency of solving SDPs when implemented in SDP solvers like MOSEK \cite{mosek}. This difference has enabled us to conduct both the exhaustive search and probing statistical physics model of sizes beyond what is reachable with exact diagonalization.

\textbf{Note Added:}  \onote{[Our papers have some shared results but are larelgy complementary; our focus is more on understanding SDPs for appproximating or understanding ground states of local Hamiltonians.  More complicated or powerful formulations are possible [other paper], but it's important to understand ``simple'' formulations, because they're easier to analyze, SDPs are slow to solve, and it's good to understand their power.  Thus there is strong motivation to understand the power of the simplest SDPs that offer strong bounds.]}

After preparing this draft we became aware of an independent group of researchers with complementary results to ours \cite{other_guys}.  The two papers have different themes in that our paper is largely focused on understanding the performance of low level SDP relaxations for Quantum Max Cut problems, while \cite{other_guys} establishes a more sophisticated hierarchy and uses representation theory to analyze the extremal energies for certain Quantum Max Cut instances.  There is clearly value in understanding powerful SDP relaxations, but we argue that it is also important to understand ``simple'' formulations because they are easier to analyze and SDPs are generally practically slow to solve.  Thus there is a strong motivation to understand the simplest SDPs which offer strong bounds.  We establish numerically and analytically that low levels of the hierarchy are exact on certain families of graphs with an eye toward solid state physics and approximation algorithms, while \cite{other_guys} is able to calculate the exact extremal eigenvalue for Hamiltonians which have a {\it signed clique decomposition}.  Here the two papers are very different in that we focus on the SDP solution rather than the exact solution for the Hamiltonian problem. \cite{other_guys} also investigates non-commutative Groebner bases for the $\SWAP$ algebra which we do not touch on and establishes finite convergence of their hierarchy at a lower level that we were able to show (\Cref{prop:fin_conv} in this work versus Theorem 4.8 in \cite{other_guys}).  We expect that both papers have much to offer one another, but we leave the full set of implications from the combined results for future work.  

\paragraph{Subsequent Work.  }  After the preprint of this work many authors improved on the best known approximation factors for Quantum Max Cut \cite{lee2024improved, apte2025improved, huber2024second, gribling2025improved, lee2022optimizing}, with the current state of the art for general graphs being $0.611$ and the best conjectured achievable approximation factor being $0.625$\cite{apte2025conjectured}.  Additional new directions include the study of the complexity of Quantum Max Cut and more general Hamiltonians \cite{kallaugher_et_al:LIPIcs.ITCS.2025.63, piddock2025quantum}, the study of approximation algorithms for generalizations of Quantum Max Cut to higher dimensions \cite{jorquera2024monogamy, carlson2023approximation}, the study of approximation algorithms for the EPR Hamiltonian \cite{ju2025improved, tao2025refined, apte20250} as well as approximating constrained quantum problems \cite{parekh2024constrained, culf2024approximation}.  Generalizations of the weighted star inequality given in this paper have been used as crucial components in the improved rounding algorithms (i.e. \cite{lee2024improved, apte2025improved}).  Code used for the numerics of this paper can be found at \cite{czqubit_projector_sdp_qmaxcut}.

\section{Notation}\label{sec:notation}
The Pauli matrices are defined as:
\begin{align*}
\label{eq:paulis}
X=\begin{bmatrix}
0 & 1 \\
1 & 0
\end{bmatrix},
\,\,\,\,\,\,
&Y=\begin{bmatrix}
0 & -i \\
i & 0
\end{bmatrix}, 
Z=\begin{bmatrix}
1 & 0 \\
0 & -1
\end{bmatrix},\\
\text{and}\,\,\,
&\mathbb{I}=\begin{bmatrix}
1 & 0 \\
0 & 1
\end{bmatrix}.
\end{align*}
\noindent Subscripts indicate quantum subsystems among $n$ qubits.  For instance, the notation $\sigma_i$ is used to denote a Pauli matrix $\sigma \in \{X,Y,Z\}$ acting on qubit $i$, i.e., $\sigma_i := \mathbb{I} \otimes \mathbb{I} \otimes \ldots \otimes \sigma \otimes \ldots \otimes \mathbb{I} \in \mathbb{C}^{2^n \times 2^n}$, where the $\sigma$ occurs at position $i$.  $\mathbb{I}$ will also be used to denote identity matrices of arbitrary context dependent size.  

We will be considering weighted graphs, $(V, \{w_{ij}\}_{ij\in V\times V})$, where each weight is non-negative.  Without loss of generality we can assume the graph is complete by possibly setting some weights to zero, so we need not include an edge set in the description of the graph.  The complex conjugate transpose of a given matrix will take the standard notation, $A^*$, and we will denote the $\max/\min$ eigenvalue of a given operator $A$ as $\mu_{max}(A)/\mu_{min}(A)$ respectively.  We will be considering Hermitian operators and operators which differ from a Hermitian matrix by a similarity transform, i.e., $T A T^{-1}$ is Hermitian, so this can be well-defined by
\begin{align*}
\mu_{min} (A) :=\min \lambda\in \mathbb{R}: \det(\lambda \mathbb{I} - A)=0;\\
\mu_{max} (A) :=\max \lambda\in \mathbb{R} : \det(\lambda \mathbb{I} - A)=0. 
\end{align*}
The associated eigenvectors for the max / min eigenvalues will be denoted as $|\mathrm{GS}\rangle$ regardless of max or min, since it will be clear from the context.

We will need to discuss matrix/scalar variables and will generally denote these with lower case letters, while upper case letters will be generally used to denote assignments to those variables.  Polynomials in matrix variables will generally be denoted with Greek letters.  

In later sections, we will be considering special families of graphs to give exactness and inexactness results. The notation $K_n$ is used to denote the complete graph with uniform weights on $n$ vertices. In other words, $K_n = (V, w)$ with $|V|=n$ and $w_{ij}=1$ for all $(i,j)\in V\times V$. Similarly, $K_{n,m}$ denotes the complete bipartite graph with $n$ and $m$ vertices on the two sides, i.e., $K_n = (V, w)$ with $V= V_A \cup V_B$ and $w_{ij}=1$ for all $(i,j)\in V_A\times V_B$ but $0$ otherwise. Here, $|V_A|=n,~|V_B|=m$ and $V_A \cap V_B = \varnothing$. Complete tripartite graphs are denoted similarly as $K_{n,m,l}$ with the vertex set separating to three nonoverlapping subsets. 

In this paper, we occasionally use the term ``Gram vectors''. 
For any positive semidefinite matrix $M\in\mathbb{K}^{k\times k}$, there always exist a matrix $V\in\mathbb{K}^{k\times k}$ such that $V^*V=M$. 
It follows that there must exist a collection of vectors $|\alpha\rangle \in \mathbb{K}^k$ such that $M_{\alpha\beta} = \langle \alpha |\beta \rangle \text{ for all } \alpha,\beta \in\{1,2,\ldots,k\}$; 
namely, these vectors correspond to the columns of $V$.  
We refer to these vectors $|\alpha\rangle$ as the \emph{Gram vectors} of $M$. 
The field $\mathbb K$ can be either $\mathbb{C}$ or $\mathbb{R}$ depending on the context. 
There are infinitely many possible choices of the Gram vectors for a given matrix $M\succeq0$, so we use this term only when referring to properties that are independent of this choice (i.e., are well-defined) or when discussing a specific choice suffices. 
\section{NPA Hierarchy}
\subsection{Phrasing the Local Hamiltonian problem as an Operator Program}

Operator programs are a powerful and flexible way of stating difficult problems.  These are generally stated as the problem of optimization over non-commuting (nc) polynomials over sets of non-commuting variables ($\{a_i\}$).  In this context the variables $\{a_i\}$ are unspecified complex matrices of some finite, fixed, unspecified size ($a_i$ has the same size as $a_j$ so that multiplication is well defined).  It could be the case that the objective goes to infinity as the matrices get larger or that the objective converges to some fixed value in the limit of large matrices, but for the cases we will consider here an optimal feasible solution to the problem will consist of matrices of finite size, so the programs discussed here are all well-defined and explicitly obtain their maxima/minima.  %In fact, each of the programs described here has an explicit upper bound on the optimal matrix size which is exponential in the size of the problem.  
Depending on the convention \cite{pir10con}, one often also includes variables $\{a_i^*\}$ for denoting the complex conjugate of the matrix variables, however, in this paper these are redundant since we will always optimize over Hermitian matrices.  Polynomials in these variables will consist of linear combinations of monomials in the nc variables.  The set of monomials of degree $\leq \ell$ is denoted $\Gamma_\ell=\{a_{i_1} a_{i_2} ...a_{i_q}: q \leq \ell\}$ so an arbitrary degree-$\ell$ nc polynomial can be denoted $\theta(\{a_i\})=\sum_{\phi \in \Gamma_\ell} \theta_\phi \phi$ where $\theta_\phi \in \mathbb{C}$ for all $\phi$.  $\Gamma_\ell$ will always contain a term of degree $0$, $\mathbb{I}$.  $\mathbb{I}$ varies inside the program since it will have size matching the $\{a_i\}$ but will always denote an identity of the appropriate size.  
\begin{definition}\label{def:op_prog} Given nc polynomials $\theta$ and $\{\eta_i\}$ with $\theta^*=\theta$, an operator program $\mathcal{O}$ is an optimization problem of the following form:
% \begin{align}
% \min/\max \,\,\, \braket{\boldsymbol{\phi}| q(\{\mathbf{A}_i\})|\boldsymbol{\phi}}\\
% s.t. \,\,\,\, p_j(\{\mathbf{A}_i\})&=0 \,\, \text{  for all $j$}\\
% \braket{\boldsymbol{\phi}| \boldsymbol{\phi}}&=1\\
% \label{eq:herm_ness}\mathbf{A}_i&=\mathbf{A}_i^* \,\,\, \forall i
% \end{align}
\begin{align}
\min/\max \quad & \braket{g| \theta(\{a_i\})|g}\\
\mathrm{s.t.} \quad & \eta_j(\{a_i\}) =0 \,\, \text{  for all $j$},\\
& \label{eq:herm_ness}a_i =a_i^*,\, \forall i, \\
& \braket{g| g} =1.
\end{align}
\end{definition}

To build intuition we will first consider an ncp optimization problem where the constraints force the variables to be commuting, and hence the problem reduces to a combinatorial optimization problem.  Given a graph $(V, w)$, the {\scshape MaxCut} problem is equivalent to the following optimization problem:
\begin{align}
MC(V, w) :=&\max \,\, \sum_{ij} w_{ij} \frac{1-z_i z_j}{2} \nonumber\\
s.t. &\,\, z_i \in \{\pm 1\} \,\, \forall \,\, i \in [n].
\end{align}
We could also have phrased {\scshape MaxCut} as a local Hamiltonian problem where the local terms of the Hamiltonian are diagonal in the $Z$ basis \cite{woc03}.  In this case the largest eigenvalue would be $MC(V, w)$.  Since the matrix is diagonal, the extremal eigenvector can be assumed to be a computational basis state WLOG and this basis state provides the optimal assignment for Max Cut: 
\begin{align}\label{eq:ham_max_cut_def}
    MC(V, w)=&\max \,\, \bra{g} \sum_{ij} w_{ij} \frac{\mathbb{I}-Z_i Z_j}{2} \ket{g}\nonumber \\
    s.t. &\,\, \braket{g|g}=1.
\end{align}
Note that the operators above are {\it not} variables, they are the explicitly defined Pauli matrices from \cref{sec:notation}.  Additionally the vector $\ket{g}$ is a vector variable of fixed size, unlike \Cref{def:op_prog}.  A natural direction for stating \cref{eq:ham_max_cut_def} as a ncp optimization problem is ``promoting'' the actual Pauli matrices $Z_i$ to matrix variables $z_i$.  This would lead to a {\it relaxation} where the optimal solution to the operator program would be at least the solution to the relevant {\scshape MaxCut} instance. 
 To get the objectives to match we will need to explicitly enforce constraints on $z_i$ which are satisfied by $Z_i$.  We must demand that the $z_i$ commute, as well as that they square to the identity.  The resulting operator program is
\begin{align}\label{eq:op_max_cut_def}
    \max \quad & \bra{g} \sum_{ij} w_{ij} \frac{\mathbb{I}-z_i z_j}{2} \ket{g} \\ 
    \label{eq:mc1}\mathrm{s.t.}  \quad &z_i^2=\mathbb{I} \,\, \forall \,\, i \in[n],\\
     \label{eq:mc2}&z_i z_j-z_j z_i=0 \,\, \forall \,\,i, j \in [n],\\
     \label{eq:mc_end}&z_i^*=z_i\,\, \forall \,\, i \in [n],\\
     &\braket{g|g}=1.
\end{align}

 \begin{proposition}
     The program defined in \cref{eq:op_max_cut_def} has optimal objective $MC(V, w)$.
 \end{proposition}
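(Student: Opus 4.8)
The plan is to prove the two inequalities separately: that the operator program is a relaxation of {\scshape MaxCut} (which gives ``$\geq MC(V,w)$''), and that every feasible point of the operator program has objective value at most $MC(V,w)$ (which gives ``$\leq MC(V,w)$''). Since a finite-dimensional feasible point attaining $MC(V,w)$ will be exhibited, the maximum is genuinely attained, consistent with the discussion preceding \Cref{def:op_prog}.

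For the lower bound, I would take an optimal $\pm 1$ assignment $Z^\star \in \{\pm 1\}^n$ for the {\scshape MaxCut} instance $(V,w)$, set each matrix variable $z_i$ to the $1\times 1$ matrix $[Z^\star_i]$, and let $\ket{g}$ be the one-dimensional unit vector. Constraints \eqref{eq:mc1}--\eqref{eq:mc_end} and normalization hold trivially, and the objective evaluates to $\sum_{ij} w_{ij}\tfrac{1-Z^\star_i Z^\star_j}{2}=MC(V,w)$. (Equivalently one may plug in the explicit diagonal Pauli operators $Z_i$ together with the optimal computational basis state, exactly as in \eqref{eq:ham_max_cut_def}.)

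For the upper bound, let $(\{z_i\},\ket{g})$ be any feasible point, with the $z_i$ being $d\times d$ matrices. By \eqref{eq:mc_end} each $z_i$ is Hermitian and by \eqref{eq:mc2} they commute pairwise, so they admit a common orthonormal eigenbasis $\{\ket{k}\}_{k=1}^d$; by \eqref{eq:mc1} the eigenvalues lie in $\{\pm 1\}$, so $z_i\ket{k}=\lambda_i^{(k)}\ket{k}$ with $\lambda_i^{(k)}\in\{\pm1\}$. Expanding the objective in this basis,
\[
\braket{g|\textstyle\sum_{ij} w_{ij}\tfrac{\mathbb{I}-z_i z_j}{2}|g}=\sum_{k=1}^d\abs{\braket{k|g}}^2\sum_{ij} w_{ij}\tfrac{1-\lambda_i^{(k)}\lambda_j^{(k)}}{2}.
\]
For each fixed $k$ the inner sum is precisely the cut value of the $\pm 1$ assignment $i\mapsto\lambda_i^{(k)}$, hence at most $MC(V,w)$; since $\sum_k\abs{\braket{k|g}}^2=\braket{g|g}=1$, the whole expression is at most $MC(V,w)$. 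Combining the two bounds yields the claimed equality.

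I do not expect a genuine obstacle here: the argument is the standard ``relaxation plus rounding-for-free'' pattern. The only points needing a little care are the simultaneous unitary diagonalization of the commuting Hermitian family $\{z_i\}$ (so that the objective decomposes as a convex combination of classical cut values) and the remark that the optimum is attained in finite dimension, which is immediate from the one-dimensional feasible point constructed for the lower bound.
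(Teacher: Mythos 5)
Your proof is correct and follows essentially the same route as the paper: simultaneous diagonalization of the commuting Hermitian involutions $\{z_i\}$ so the objective becomes a convex combination of classical cut values, plus the observation that the program relaxes {\scshape MaxCut}. The only differences are cosmetic — you expand $\ket{g}$ in the common eigenbasis explicitly instead of the paper's ``WLOG $\ket{\psi}$ is a basis state'' step, and you exhibit a concrete $1\times 1$ feasible point for the lower bound rather than just invoking the relaxation property.
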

\begin{proof}
    Let $z_i=Z_i'$ and $\ket{g}=\ket{\psi}$ be the optimal solution to \cref{eq:op_max_cut_def}.  $Z_i'$ all square to the identity and are Hermitian so they have at most two eigenvalues, $\{\pm 1\}$.  Since the $Z_i'$ all commute we can construct a basis which simultaneously diagonalizes all the $Z_i'$.  The objective is diagonal in this basis so we may assume WLOG that $\ket{\psi}$ is one of these basis elements and that $\bra{\psi} Z_i' \ket{\psi} \in \{\pm 1\}$.  Let us define $z_i'=\bra{\psi} Z_i' \ket{\psi} \in \{\pm 1\}$, so $(z_i')^2=1$.  By the eigenvector property,
    \begin{align*}
    \bra{\psi} \sum_{ij} w_{ij} \frac{\mathbb{I}-Z_i' Z_j'}{2} \ket{\psi}
    %=\sum_{ij} w_{ij} \frac{1-\bra{\psi}Z_i Z_j\ket{\psi}}{2} 
    &=\sum_{ij} w_{ij} \frac{1-\bra{\psi}Z_i'\ket{\psi} \bra{\psi} Z_j'\ket{\psi}}{2} \\
    &=\sum_{ij} w_{ij} \frac{1-z_i' z_j'}{2},
    \end{align*}
    so the optimal objective of \cref{eq:op_max_cut_def} is less than or equal to $MC(V, w)$.  We already know that optimal objective of \cref{eq:op_max_cut_def} is greater than or equal to $MC(V, w)$ since it is a relaxation.
    
\end{proof}

Naturally we may consider a generic $2$-Local Hamiltonian problem and ask similar questions.  Arbitrary $2$-Local Hamiltonians may be written as $H=\sum_{ij} H_{ij}$ where $H_{ij}$ acts only on qubits $i$ and $j$.  We can express each $H_{ij}$ in the Pauli basis as 
\begin{equation}
    H_{ij}=\sum_{\sigma, \gamma \in \{\mathbb{I}, X, Y, Z\}} c_{\sigma, \gamma}^{ij} \,\, \sigma_i \gamma_j,
\end{equation}
for $c_{\sigma, \gamma}^{ij} \in \mathbb{R}$.  This lets us express the overall Hamiltonian as
\begin{equation}
    H=\sum_{ij} \sum_{\sigma, \gamma \in \{\mathbb{I}, X, Y, Z\}} c_{\sigma, \gamma}^{ij} \,\, \sigma_i \gamma_j.
\end{equation}
The maximum eigenvalue problem is then
\begin{align}
   \mu_{max}(H)= &\max \,\, \bra{g} \sum_{ij} \sum_{\sigma, \gamma \in \{\mathbb{I}, X, Y, Z\}} c_{\sigma, \gamma}^{ij} \,\, \sigma_i \gamma_j \ket{g} \nonumber \\
   s.t. &\,\, \braket{g|g}=1.
\end{align}
We may promote the Pauli matrices above to operator variables, $X_i \rightarrow x_i, Y_i \rightarrow y_i, Z_i \rightarrow z_i$, to get a relaxation, but we will need to know what constraints to enforce to ensure that the operator problem has the same objective as the explicit local Hamiltonian problem we have in mind, just as for {\scshape MaxCut}.  Enforcing constraints of the form \cref{eq:mc1,eq:mc2,eq:mc_end} plus additional anti-commutation constraints is sufficient: 
\begin{definition}
Given a $2$-Local Hamiltonian $H$ on $n$ qubits,  
\begin{align}
\mathcal{Pauli}(H) &:= \max\quad \bra{g}  \left(\sum_{ij} \sum_{\sigma, \gamma \in \{\mathbb{I}, x, y, z\}} c_{\sigma, \gamma}^{ij}\,\, \sigma_i \gamma_i\right)\ket{g}\\
\nonumber \mathrm{s.t.} \quad &\text{for all distinct $j, k\in [n]$}:\\
&\mathbb{I}=x_j^2=y_j^2=z_j^2, \\
\label{pauli_anti_commute_const} &\{x_j, y_j\}=0, \,\,\{x_j, z_j\}=0, \,\,\{y_j, z_j\}=0,\\
&a_j b_k-b_k a_j=0 \,\, \forall \,\,a, b \in \{x, y, z\},\\
&x_j^*=x_j, \,\,y_j^*=y_j, \,\,z_j^*=z_j,\\
&\braket{g|g}=1.
\end{align}
\end{definition}
\begin{proposition}[Theorem 2.3 in \cite{cha17}]
    $\mu_{max}(H)=\mathcal{Pauli}(H)$.
\end{proposition}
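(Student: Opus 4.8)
The plan is to prove the two inequalities $\mathcal{Pauli}(H)\ge \mu_{max}(H)$ and $\mathcal{Pauli}(H)\le \mu_{max}(H)$ separately, mirroring the {\scshape MaxCut} case above. The first is immediate from the fact that $\mathcal{Pauli}(H)$ is a relaxation: take $x_i=X_i$, $y_i=Y_i$, $z_i=Z_i$ (the genuine Pauli matrices on $n$ qubits) and $\ket{g}$ a top eigenvector of $H$. These operators square to $\mathbb{I}$, anticommute on a site, commute across sites, are Hermitian, and satisfy $X_iY_i=iZ_i$ (and cyclic), so the point is feasible, and its objective value is $\bra{g}H\ket{g}=\mu_{max}(H)$.

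For the reverse inequality, let $(X_i',Y_i',Z_i',\ket{g})$ be an optimal feasible point acting on a finite-dimensional Hilbert space $\mathcal{H}$. The key claim is that the assignment sending a genuine $n$-qubit Pauli string $\prod_i\sigma_i$ to the corresponding product $\prod_i\sigma_i'$, extended linearly, is a well-defined unital $*$-homomorphism $\pi\colon M_{2^n}(\mathbb{C})\to B(\mathcal{H})$. Well-definedness at the linear level is automatic because the $4^n$ genuine Pauli strings are linearly independent in $M_{2^n}(\mathbb{C})$ (they are trace-orthogonal), so there are no linear relations among Pauli strings that $\pi$ would have to respect; one only needs $\pi$ to be multiplicative, and the entire multiplication table of the $n$-qubit Pauli group is generated by the single-site relations $\sigma_i'^2=\mathbb{I}$, the single-site anticommutations, the single-site products $X_i'Y_i'=iZ_i'$ (and cyclic), and the cross-site commutations, all of which hold by feasibility. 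That $\pi$ preserves $*$ follows from Hermiticity of $X_i',Y_i',Z_i'$, and $\pi$ is unital by construction.

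Since $M_{2^n}(\mathbb{C})$ is simple, the nonzero homomorphism $\pi$ is injective, and by the representation theory of full matrix algebras every unital $*$-representation of $M_{2^n}(\mathbb{C})$ on a finite-dimensional Hilbert space is unitarily equivalent to an ampliation of the defining one: there exist an integer $m\ge 1$ and a unitary $U\colon\mathcal{H}\to\mathbb{C}^{2^n}\otimes\mathbb{C}^m$ with $U\pi(A)U^*=A\otimes\mathbb{I}_m$ for all $A\in M_{2^n}(\mathbb{C})$. Applying this to $A=\sum_{ij}\sum_{\sigma,\gamma}c^{ij}_{\sigma,\gamma}\,\sigma_i\gamma_j=H$ shows the objective operator $\sum_{ij}\sum_{\sigma,\gamma}c^{ij}_{\sigma,\gamma}\,\sigma_i'\gamma_j'$ equals $U^*(H\otimes\mathbb{I}_m)U$, so its largest eigenvalue is $\mu_{max}(H\otimes\mathbb{I}_m)=\mu_{max}(H)$. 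Hence $\bra{g}\big(\sum c^{ij}_{\sigma,\gamma}\sigma_i'\gamma_j'\big)\ket{g}\le\mu_{max}(H)$, and maximizing over feasible points gives $\mathcal{Pauli}(H)\le\mu_{max}(H)$.

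The step I expect to be the crux is the well-definedness and faithfulness of $\pi$, i.e.\ verifying that the imposed relations form a \emph{complete} presentation of the Pauli algebra. The involution and anticommutation relations alone only force each $Z_i'$ to equal $\pm(-iX_i'Y_i')$ inside the copy of $M_2(\mathbb{C})$ generated by $X_i',Y_i'$, and the ``$-$'' branch is the orientation-reversed (complex-conjugated) representation on site $i$, which can \emph{strictly raise} the maximum eigenvalue of the objective — e.g.\ for $H=-\ket{\Psi^-}\bra{\Psi^-}$ the twisted objective attains $1/2$ whereas $\mu_{max}(H)=0$. So the argument genuinely relies on the program also imposing the orientation-fixing relations $X_i'Y_i'=iZ_i'$ (and cyclic permutations); with these in place the homomorphism $\pi$ is forced and the proof closes, as in \cite{cha17}.
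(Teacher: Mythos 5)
Your two-inequality structure and the rigidity argument (a feasible point induces a unital $*$-representation of $M_{2^n}(\mathbb{C})$, which by simplicity is unitarily equivalent to $A\mapsto A\otimes\mathbb{I}_m$, so the objective operator becomes $U^*(H\otimes\mathbb{I}_m)U$ and its expectation is at most $\mu_{max}(H)$) is exactly the route the paper has in mind; the paper itself only sketches this and defers to Theorem 2.3 of \cite{cha17}, so your write-up is, if anything, more complete than what appears in the text.

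The one genuine point of friction is the constraint set. Your argument -- as you yourself stress in the final paragraph -- needs the orientation-fixing relations $x_jy_j=iz_j$ (and cyclic) to pin down the multiplication table and force the homomorphism $\pi$. But the definition of $\mathcal{Pauli}(H)$ as printed imposes only $x_j^2=y_j^2=z_j^2=\mathbb{I}$, same-site anticommutation, cross-site commutation, and Hermiticity; the product relations are absent (a commented-out earlier version of the definition in the source does impose $x_jy_j=iz_j$, suggesting the omission is an editing artifact). With only the printed relations, the universal algebra per site is $M_2(\mathbb{C})\oplus M_2(\mathbb{C})$: the central Hermitian involution $\omega_j=-ix_jy_jz_j$ may take either sign, and on a sector with $\omega_j=-1$ the triple acts as the conjugate (transposed) Paulis. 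Consequently the printed program computes $\max_{S\subseteq[n]}\mu_{max}\bigl(H^{\Gamma_S}\bigr)$, where $\Gamma_S$ denotes partial transposition on the qubits in $S$, and your counterexample $H=-\ket{\Psi^-}\bra{\Psi^-}$ (program value at least $1/2$, while $\mu_{max}(H)=0$) correctly shows this can strictly exceed $\mu_{max}(H)$. So: read with the product relations included (as in \cite{cha17}, and as surely intended), your proof is correct and matches the intended argument; read literally against the displayed constraints, the proposition itself fails for general $2$-local $H$, and you have located precisely the missing constraint -- the paper's remark that the Pauli group has ``only the trivial and defining'' irreducible representations glosses over exactly the conjugate representation you exhibit.
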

The proof of this statement proceeds by showing that {\it any} operators which satisfy the relations above must be equal to the Pauli matrices up to overall unitary and tensoring with identity matrices.  In a sense the smallest feasible solution to $\mathcal{Pauli}(H)$ are the Pauli matrices themselves and larger solutions must have the same objective.  In the language of representation theory, the Pauli group has only two irreducible representations: the trivial representation and the defining representation.\knote{Edited here}

\subsection{{\scshape QMaxCut} as an Operator Program}\label{subsec:qmcop}

While the $\mathcal{Pauli}$ program is very nice because of its generality, Hamiltonians are often best studied with the natural symmetry present taken into account.  Our interest is in a specific family of Local Hamiltonians known as ``Quantum Max Cut'' ({\scshape QMaxCut}) in many works \cite{gha19alm, par22opt, ans20bey, par21app}, so our aim is to produce the ``natural'' operator programs for these Hamiltonians.  Given a weighted graph $(V, w)$ with non-negative weights $w_{ij}\geq0$, the corresponding {\scshape QMaxCut} instance is defined on $n=|V|$ qubits \footnote{In later sections, $n$ is not always $|V|$ depending on the graph we focus on, which should be clear from the context.} by 
\begin{equation}\label{eq:QMCHamDef}
QMC(V, w):= \mu_{max} \left( \sum_{ij} w_{ij} H_{ij}\right),
\end{equation}
where $H_{ij} := \frac{1}{4} \left(\mathbb{I} - X_i X_j-Y_i Y_j -Z_i Z_j \right)$. 
The term $H_{ij}$ is a projector to the singlet state $|\psi^-_{ij}\rangle := (|0_i1_j\rangle - |1_i0_j\rangle)/\sqrt{2}$. Note that the singlet state is order sensitive ($|\psi^-_{ij}\rangle = - |\psi^-_{ji}\rangle$), but the Hamiltonian is not ($H_{ij}=H_{ji}$). 
This Hamiltonian has been well-studied in physics for decades, serving as central model for quantum magnetism. It has the nice property that it is rotation-invariant; that is, for any single-qubit unitary $U$, we have $(U^\dag)^{\otimes n} H_{ij} U^{\otimes n} = H_{ij}$. 
$H \succeq 0$ since we only consider non-negative weights $w$ ($\succeq 0$ denotes that a matrix is positive semidefinite.).  

It will be convenient for us to have a definition of another Hamiltonian which is simply an affine shift of the {\scshape QMaxCut} Hamiltonian.  If we define the usual quantum SWAP operators as 
\begin{equation}\label{eq:swap_def}
P_{ij}=\begin{bmatrix} 
1 & 0 & 0 & 0\\
0 & 0 & 1 & 0\\
0 & 1 & 0 & 0\\
0 & 0 & 0 & 1
\end{bmatrix}_{ij}=\frac{\mathbb{I} +X_i X_j +Y_i Y_j +Z_i Z_j}{2},
\end{equation}
we can then define
\begin{equation}
\SWAP(V, w)=\mu_{min} \left(\sum_{ij \in E} w_{ij} P_{ij} \right).
\end{equation}
The extremal eigenvalues are related as:
\begin{equation}
\SWAP (V, w)= \sum_{jk } w_{jk} -2 QMC (V, w).
\end{equation}

Our approach is to promote the operators $H_{ij}$ and $P_{ij}$ to variables, but we are left with the same question of deciding what constraints to include to accurately capture the local Hamiltonian problem.  Our work naturally extends that of ~\cite{par22opt}, who used such operators to obtain an optimal approximation for {\scshape QMaxCut} using product states.  The following sets of constraints are sufficient for {\scshape QMaxCut} and SWAP Hamiltonians respectively:

\begin{definition}[$\mathcal{Proj}(V, w)$]  Given Hamiltonian $H=\sum_{ij} w_{ij} H_{ij}$ corresponding to graph $(V, w)$, define 
\begin{align}
\mathcal{Proj}(H)=&\mathcal{Proj}(V, w):= \max \ \bra{g} \left(\sum_{jk} w_{jk} h_{jk}  \right) \ket{g}&\\
\nonumber \mathrm{s.t.}\quad &\forall \text{ distinct }\,\, i, j, k, l\in [n]:\\
  &h_{ij}^2 = h_{ij}, \label{eq:singproj1}\\
  &h_{ij}h_{kl} = h_{kl} h_{ij}, \label{eq:singprojcomm}\\
  &h_{ij} h_{jk} +h_{jk} h_{ij}= \frac{1}{2}(h_{ij}+h_{jk} -h_{ik}),\label{eq:anticommproj}\\
  &h_{ij}^* =h_{ij}=h_{ji},\label{eq:singproj2}\\
  &\braket{g|g}=1.\label{eq:singprojnormalize}
\end{align}
\end{definition}

\begin{definition}[$\mathcal{Perm}(V, w)$]  Given Hamiltonian $H=\sum_{ij} w_{ij} P_{ij}$ corresponding to graph $(V, w)$, define 
\begin{align}
\mathcal{Perm}(H)=&\mathcal{Perm}(V, w):=\min \ \bra{g} \left(  \sum_{jk} w_{jk} p_{jk}  \right)\ket{g}&\\
\nonumber \mathrm{s.t.} \quad & \forall \text{ distinct }\,\, i, j, k, l\in [n]:\\
\label{eq:sym_const_1}&p_{ij}^2=\mathbb{I},\\
\label{eq:sym_const_2}&p_{ij} p_{kl}=p_{kl}p_{ij},\\
\label{eq:anti_comm}&p_{ij} p_{jk} +p_{jk} p_{ij} =p_{ij} +p_{jk} +p_{ik}-\mathbb{I},\\
\label{eq:herm_const} &p_{ij}^*=p_{ij}=p_{ji},\\
&\braket{g|g}=1.
\end{align}
\end{definition}

It is easy to verify that $\mathcal{Proj}$ and $\mathcal{Perm}$ are equivalent in the sense that optimal objectives of $\mathcal{Perm}$ and $\mathcal{Proj}$ are affine shifts of one another: 
\begin{equation}
\mathcal{Perm} (V, w)= \sum_{jk \in E} w_{jk} -2 \mathcal{Proj} (V, w).
\end{equation}  This can be verified by observing that if $\{P_{jk}'\}$ is a feasible solution for $\mathcal{Perm}$ then $\{(\mathbb{I}- P_{jk}')/2\}$ is a feasible solution for $\mathcal{Proj}$ and that if $\{H_{jk}'\}$ is feasible for $\mathcal{Proj}$ then $\{\mathbb{I}-2 H_{jk}'\}$ is feasible for $\mathcal{Perm}$.  

The intuition behind the $\mathcal{Perm}$ constraints can most easily be understood in the context of the representation theory of the symmetric group.  It is well known that the symmetric group has a ``finite presentation''.  Loosely speaking this means that there is a finite set of generators such that any element of the symmetric group can be written as the product of generators, and any product of elements from the symmetric group can be inferred from some finite set of multiplication rules on those generators.  Using standard notation, the symmetric group $S_n$ is generated by transpositions $(i, j)$ subject to the following rules:
\begin{align}
    \nonumber \forall \text{ distinct }\,\, i, j, k, l\in [n]:\\
\label{eq:fin_pres_1}(i, j)^2&=1,\\
\label{eq:fin_pres_2}(i, j) (k, l)&=(k, l)(i, j),\\
\label{eq:fin_pres_3}(i, j) (j, k) (i, j)&=(j, k)(i, j)(j, k).
\end{align}
Since all multiplicative identities can be derived from these rules, if we have operators $p_{ij}$ which satisfy analogous relations then multiplication of products of monomials in $\{p_{ij}\}$ must behave exactly like products of transpositions.  Hence feasible solutions $\{p_{ij}'\}$ must correspond to a representation of the symmetric group (see Appendix \ref{sec:rep_theory}).
%Note that \cref{eq:sym_const_1} corresponds to \cref{eq:fin_pres_1}, and \cref{eq:sym_const_2} corresponds to \cref{eq:fin_pres_2}, but \cref{eq:fin_pres_3} has no corresponding equation. 
Note that there is a correspondence between \cref{eq:sym_const_1} and \cref{eq:fin_pres_1} as well as between \cref{eq:sym_const_2} and \cref{eq:fin_pres_2}, but apparently none for \cref{eq:fin_pres_3}.
Instead, the operator program $\mathcal{Perm}$ contains an additional ``anti-commuting constraint'' \cref{eq:anti_comm}.  $\mathcal{Perm}$ actually has an implicit constraint corresponding to \cref{eq:fin_pres_3} (\Cref{prop:add_const}), so the constraints present enforce that the operators $p_{ij}$ ``look like'' a finite presentation of the symmetric group, plus an additional anti-commuting constraint. 
%An understanding of 
This fact is crucial for understanding why the operator programs listed are accurately capturing the relevant local Hamiltonian problems (\cref{thm:perm_is_opt}).  \Cref{eq:sym_const_1,eq:sym_const_2,eq:anti_comm,eq:herm_const}
%\jnote{Just to confirm, this one also needs Eq 35 (AC) right?}\knote{Yes, thank you for checking.  Prop 3.7 explicitly uses Hermitianness of the variables.  You have a simpler proof int eh appendix though which doesnt use it?  }\jnote{Yes, I have a shorter proof with less assumptions used for a stronger result. I replaced the proof with that because I think it's simply better and it makes things easier in \cref{app:relationderivation}. I left the previous proof in the comments.} 
force the operators to correspond to a representation of the Symmetric group, and \cref{eq:anti_comm} further forces the operators to correspond to the correct representation for the Hamiltonian.  

\begin{proposition}\label{prop:add_const}
    For all distinct $i, j, k\in [n]$, $\mathcal{Perm}$ satisfies the additional constraint 
    \begin{equation}\label{eq:totalswap}
    p_{ij} p_{jk}p_{ij}
    %=p_{jk}p_{ij}p_{jk}
    =p_{ik},
    \end{equation}
    and $\mathcal{Proj}$ satisfies the additional constraint
    \begin{equation}\label{eq:quarterformula}
    4 h_{ij}h_{jk}h_{ij}=h_{ij}.
    %=4h_{jk}h_{ij}h_{jk}-h_{jk}=0.
    \end{equation}
\end{proposition}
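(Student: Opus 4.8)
The plan is to establish the $\mathcal{Perm}$ identity \cref{eq:totalswap} directly from \cref{eq:sym_const_1,eq:anti_comm}, and then read off the $\mathcal{Proj}$ identity \cref{eq:quarterformula} from it: substituting the affine relation $p_{xy} = \mathbb{I} - 2h_{xy}$ (under which, as noted above, a feasible point of $\mathcal{Proj}$ is sent to a feasible point of $\mathcal{Perm}$) into $p_{ij}p_{jk}p_{ij} = p_{ik}$ and simplifying with $h_{ij}^2 = h_{ij}$ (\cref{eq:singproj1}) and \cref{eq:anticommproj} collapses to $4h_{ij}h_{jk}h_{ij} = h_{ij}$. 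Equivalently, one can rerun the argument below verbatim in the projector variables with the auxiliary element $s$ below replaced by $h_{ij} + h_{jk} + h_{ik}$, at the cost of tracking a few extra constants.

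Fix distinct $i,j,k$ and write $a = p_{ij}$, $b = p_{jk}$, $c = p_{ik}$, and $s := a+b+c$. First I would record that \cref{eq:anti_comm}, instantiated with each of the three indices in turn playing the role of the shared index, yields the three relations $ab+ba = ac+ca = bc+cb = s - \mathbb{I}$; crucially, all three right-hand sides are the \emph{same} element $s - \mathbb{I}$. Combining these with $a^2 = b^2 = c^2 = \mathbb{I}$ and expanding gives two facts about $s$: first $s^2 = 3s$ (from expanding $(a+b+c)^2$), and second the ``near-commutation'' $as + sa = 2s$ (from expanding $as + sa$), together with its symmetric counterparts $bs+sb = cs+sc = 2s$.

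The crux is to upgrade $as+sa = 2s$ to genuine commutation $as = sa$. For this I would evaluate $sas$ in two ways using only $s^2 = 3s$ and $as + sa = 2s$: on the one hand $sas = s(2s - sa) = 2s^2 - s^2 a = 6s - 3sa$, and on the other $sas = (2s - as)s = 2s^2 - as^2 = 6s - 3as$; equating the two gives $as = sa$. Feeding $as = sa$ back into $as + sa = 2s$ forces $as = s$, i.e.\ $\mathbb{I} + ab + ac = a+b+c$. Finally, left-multiplying $ab + ba = a+b+c-\mathbb{I}$ by $a$ gives $b + aba = \mathbb{I} + ab + ac - a = (a+b+c) - a = b+c$, hence $aba = c$, which is \cref{eq:totalswap}.

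I do not expect a real obstacle here: the one non-routine step is recognizing that the sum $s = p_{ij}+p_{jk}+p_{ik}$ is the right gadget and that computing $sas$ two ways is exactly what converts the weak relation $as+sa = 2s$ into $as = sa$ --- this is precisely the place where the constraints of $\mathcal{Perm}$ say strictly more than the bare symmetric-group presentation \cref{eq:fin_pres_3}. The remaining care is purely bookkeeping: checking that \cref{eq:anti_comm} genuinely produces all three ``anticommutator'' relations with a common right-hand side, and (for the $\mathcal{Proj}$ route) getting the numerical factors right in the expansion of $(\mathbb{I}-2h_{ij})(\mathbb{I}-2h_{jk})(\mathbb{I}-2h_{ij})$.
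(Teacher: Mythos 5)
Your argument is correct, and it takes a genuinely different route from the paper's. The paper works directly in $\mathcal{Proj}$: it solves \cref{eq:anticommproj} for $h_{jk}$, substitutes into $h_{ij}h_{jk}h_{ij}$ to get $h_{ij}h_{jk}h_{ij} = h_{ij} - 3\,h_{ij}h_{ik}h_{ij}$, then performs the symmetric substitution for $h_{ik}$ to express $h_{ij}h_{jk}h_{ij}$ back in terms of itself, and solves the resulting linear equation $h_{ij}h_{jk}h_{ij} = -2h_{ij} + 9\,h_{ij}h_{jk}h_{ij}$; the $\mathcal{Perm}$ identity is then obtained by the affine substitution $h_{ij} = (\mathbb{I}-p_{ij})/2$. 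Your proof instead works in $\mathcal{Perm}$ around the auxiliary element $s = p_{ij}+p_{jk}+p_{ik}$, showing $s^2 = 3s$, $as+sa = 2s$, upgrading this to $as = sa$ by computing $sas$ two ways, and hence $as = s$, from which $aba = c$ falls out. The paper's route is shorter with no auxiliary gadget, but yours is arguably more structurally illuminating: $s/3$ is an idempotent (essentially the $S_3$-symmetrizer on the sites $\{i,j,k\}$), and the absorption $p_{ij}s = s$ is a clean, reusable fact saying that each transposition fixes the range of $s$. One small polish: when you expand $(\mathbb{I}-2h_{ij})(\mathbb{I}-2h_{jk})(\mathbb{I}-2h_{ij}) = \mathbb{I}-2h_{ik}$ to recover \cref{eq:quarterformula}, after simplifying with $h_{ij}^2=h_{ij}$ and \cref{eq:anticommproj} the left side reduces to $\mathbb{I} + 2h_{ij} - 2h_{ik} - 8h_{ij}h_{jk}h_{ij}$, so the $h_{ik}$ terms cancel and you are left with $2h_{ij} = 8h_{ij}h_{jk}h_{ij}$ as claimed; it is worth writing this out, since it is the only place the numerical factor $4$ appears.
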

\begin{proof}
    From the anticommutation relation \cref{eq:anticommproj}, we can expand $h_{jk}=h_{ij}+h_{ik}-2(h_{ij}h_{ik}+h_{ik}h_{ij})$ to obtain
    \begin{eqnarray}
    h_{ij}h_{jk}h_{ij}&=&
    h_{ij}\bigl(h_{ij}+h_{ik}-2(h_{ij}h_{ik}+h_{ik}h_{ij})\bigr)h_{ij}\,\,\,\,\,\,\,\,\\
        &=&h_{ij}-3h_{ij}h_{ik}h_{ij},
    \end{eqnarray}
    where we used \cref{eq:singprojnormalize} as well. 
    Repeating the same substitution for $h_{ik}$ in the second term gives 
$h_{ij}h_{jk}h_{ij}=h_{ij}-3\bigl(h_{ij}-3h_{ij}h_{jk}h_{ij}\bigr)$,  
    which results in \cref{eq:quarterformula} after solving the linear equation. To obtain \cref{eq:totalswap}, apply the same proof with $h_{ij}=(\mathbb{I}-p_{ij})/2$. 
\end{proof}

This additional constraint \cref{eq:quarterformula} is actually one of the basic relations in the Temperley-Lieb algebra \cite{tem71rel} describing the 1-dimensional Heisenberg chain and variants. The factor of 4 could be understood in relation with other algebraic structures used for analyzing the Heisenberg model \cite{san05gro,bea06som}. 
Furthermore, an immediate corollary of \cref{prop:add_const} is that 
\begin{equation}\label{eq:sym_const_3}
p_{ij}p_{jk}p_{ij}=p_{ik}=p_{jk}p_{ij}p_{jk},
\end{equation}giving the constraint corresponding to \cref{eq:fin_pres_3}.  Reasoning as above one can obtain:

\begin{comment}
\begin{proof}
By \cref{eq:anti_comm},
\begin{align}\label{eq:removing_const}
    &\quad p_{ij} p_{jk}p_{ij}-p_{jk}p_{ij}p_{jk}\\
    \nonumber &=(-p_{jk}p_{ij}+p_{ij}+p_{jk}+p_{ik}-\mathbb{I}) p_{ij}-(-p_{ij} p_{jk}+p_{ij}+p_{jk}+p_{ik}-\mathbb{I})p_{jk}\\
    \nonumber &=(p_{ij}+p_{jk}+p_{ik}) p_{ij}-(p_{ij}+p_{jk}+p_{ik})p_{jk}.
\end{align}
By applying \cref{eq:anti_comm} to every term on the expanded R.H.S we may derive:
\begin{align}
    &\quad (p_{ij}+p_{jk}+p_{ik}) p_{ij}-(p_{ij}+p_{jk}+p_{ik})p_{jk}\\
    &=-p_{ij}(p_{ij}+p_{jk}+p_{ik}) +p_{jk}(p_{ij}+p_{jk}+p_{ik})\\
    \nonumber 
    &=-((p_{ij}+p_{jk}+p_{ik}) p_{ij}-(p_{ij}+p_{jk}+p_{ik})p_{jk})^*,
\end{align}
where the last equality follows from the Hermitian assumption on the variables $p_{ij}$.  It is easily seen that the L.H.S of \cref{eq:removing_const} is Hermitian, and we have demonstrated that the R.H.S of \cref{eq:removing_const} is skew-Hermitian hence they are both zero.  For the $\mathcal{Proj}$ constraint apply the same proof with $h_{ij}=(\mathbb{I}-p_{ij})/2$.
\end{proof}
\end{comment}

\begin{theorem}[\cite{pro76}]\label{thm:perm_is_opt}
\knote{edited here}$\mathcal{Perm}(V, w)=\SWAP(V, w)$.
\end{theorem}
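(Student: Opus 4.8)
The plan is to prove the two inequalities $\mathcal{Perm}(V,w) \leq \SWAP(V,w)$ and $\mathcal{Perm}(V,w) \geq \SWAP(V,w)$ separately. The direction $\mathcal{Perm}(V,w) \leq \SWAP(V,w)$ is the easy one: the actual quantum SWAP operators $P_{ij}$ from \cref{eq:swap_def}, acting on $(\mathbb{C}^2)^{\otimes n}$, satisfy all the constraints \cref{eq:sym_const_1,eq:sym_const_2,eq:anti_comm,eq:herm_const} — the first three are standard identities for transpositions in the defining representation (they can be checked directly on basis states, or deduced from the fact that $P_{ij}$ generate the natural $S_n$ action by permuting tensor factors, together with the identity $P_{ij}P_{jk} + P_{jk}P_{ij} = P_{ij} + P_{jk} + P_{ik} - \mathbb{I}$ which holds because both sides agree on the decomposition of $\mathbb{C}^2\otimes\mathbb{C}^2\otimes\mathbb{C}^2$ into irreducibles), so taking $p_{ij} = P_{ij}$ and $\ket{g}$ the minimal eigenvector of $\sum w_{jk}P_{jk}$ gives a feasible point of $\mathcal{Perm}$ achieving objective $\SWAP(V,w)$. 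Since $\mathcal{Perm}$ is a minimization, $\mathcal{Perm}(V,w) \leq \SWAP(V,w)$.

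**The hard direction** is $\mathcal{Perm}(V,w) \geq \SWAP(V,w)$, i.e.\ showing that \emph{every} feasible solution $\{p_{ij}', \ket{g}\}$ has objective at least $\SWAP(V,w)$. The strategy is to show that any feasible tuple $\{p_{ij}'\}$ generates a $*$-representation of the group algebra $\mathbb{C}[S_n]$ on the (finite-dimensional) Hilbert space where $\ket g$ lives, \emph{and moreover} that the additional anticommutation constraint \cref{eq:anti_comm} forces this representation to contain only those irreducible components on which $\sum w_{jk} p_{jk}$ behaves exactly as in the defining representation. First I would invoke the discussion following the $\mathcal{Perm}$ definition together with \Cref{prop:add_const}: constraints \cref{eq:sym_const_1,eq:sym_const_2} plus the derived relation \cref{eq:sym_const_3} show the $p_{ij}'$ satisfy the Coxeter relations \cref{eq:fin_pres_1,eq:fin_pres_2,eq:fin_pres_3}, hence the assignment $(i,j) \mapsto p_{ij}'$ extends to a unital algebra homomorphism $\rho : \mathbb{C}[S_n] \to \mathcal{B}(\mathcal{H})$; since the $p_{ij}'$ are Hermitian and square to $\mathbb{I}$, $\rho$ is a $*$-representation, so $\mathcal{H}$ decomposes into irreducibles of $S_n$. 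On any irreducible component $V_\lambda$, $\rho$ restricts to a direct sum of copies of the Specht module $V_\lambda$, and $\sum w_{jk}\rho((j,k))$ restricted there is unitarily equivalent to $\bigoplus (\text{copies of } \sum w_{jk}\pi_\lambda((j,k)))$. The key remaining point is that the eigenvalues of $\sum_{jk} w_{jk}\pi_\lambda((j,k))$ over all irreducibles $\lambda$ of $S_n$ are exactly the eigenvalues of $\sum_{jk}w_{jk}P_{jk}$ on $(\mathbb{C}^2)^{\otimes n}$ \emph{minus} possibly some — but this is where constraint \cref{eq:anti_comm} enters: the defining representation of $S_n$ on $(\mathbb{C}^2)^{\otimes n}$ decomposes (Schur--Weyl) into exactly those $V_\lambda$ with $\lambda$ having at most two rows, and the anticommutation relation, which is the $3$-local identity specific to the two-row case, fails on any $V_\lambda$ with three or more rows. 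Thus $\rho$ can only contain two-row irreducibles, each of which does appear in $(\mathbb{C}^2)^{\otimes n}$, so every eigenvalue of $\sum w_{jk}p_{jk}'$ is an eigenvalue of $\sum w_{jk}P_{jk}$, whence $\bra g (\sum w_{jk}p_{jk}')\ket g \geq \mu_{min}(\sum w_{jk} P_{jk}) = \SWAP(V,w)$.

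**The main obstacle** I anticipate is the last step: proving that constraint \cref{eq:anti_comm} really is equivalent (given the Coxeter relations) to restricting to at-most-two-row irreducibles. Concretely one must show (a) that \cref{eq:anti_comm} holds in every two-row Specht module — this can be done by reducing to $S_3$ acting on the relevant $2$-dimensional and $1$-dimensional pieces and checking the quadratic identity for the three transpositions explicitly, then noting the relation only involves three indices so it lifts to all of $S_n$ by restriction to the parabolic subgroup $S_3 \times S_{n-3}$; and (b) that it \emph{fails} in any irreducible with a Young diagram having a third row, which one can detect by exhibiting, inside such a module, a copy of the sign-type or standard $S_3$-subrepresentation where $\pi((i,j))\pi((j,k)) + \pi((j,k))\pi((i,j)) \neq \pi((i,j)) + \pi((j,k)) + \pi((i,k)) - 1$ — equivalently, observing that the left side minus the right side is a central-on-$S_3$ element distinguishing the three $S_3$-irreducibles and is nonzero on the sign representation. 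An alternative, perhaps cleaner route to $(b)$ is dimension counting via characters: the operator $\sum_{\text{all }i<j}(i,j)$ (the sum of \emph{all} transpositions) is central in $\mathbb{C}[S_n]$ and acts on $V_\lambda$ by the known scalar $\sum_i \binom{\lambda_i}{2} - \sum_j \binom{\lambda_j'}{2}$ (content sum); combined with \cref{eq:anti_comm} summed appropriately one extracts a constraint on $\lambda$ that forces at most two rows. I would also need to double-check the finite-dimensionality hypothesis — the excerpt's framing of operator programs guarantees the optimum is attained at a finite-dimensional solution, so $\mathcal{H}$ may be taken finite-dimensional and the spectral/decomposition arguments are unproblematic. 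Since \Cref{thm:perm_is_opt} is attributed to \cite{pro76}, I expect the intended proof is essentially this representation-theoretic one, with the two-row characterization being the technical heart.
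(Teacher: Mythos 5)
Your proposal is correct and follows essentially the same route as the paper's Appendix proof: the easy inequality by taking $p_{ij}=P_{ij}$ as a feasible point, and the hard inequality by showing any feasible $\{p_{ij}'\}$ yields a representation of $S_n$ (via the presentation by transpositions plus \Cref{prop:add_const}), then using the fact that the anticommutation constraint \cref{eq:anti_comm} is exactly the vanishing of the three-letter antisymmetrizer (a column fragment of shape $(1,1,1)$), which by \Cref{thm:fragment} excludes all Specht modules with more than two rows, leaving precisely the irreps in the Schur--Weyl decomposition \cref{eq:quantum_irreps}. The only difference is cosmetic: where you propose to prove the two-row characterization by restriction to $S_3$ and branching (or a character/content argument), the paper proves the same lemma directly by Young-symmetrizer manipulations.
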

\begin{proof}
See \ref{sec:perm_conv}.
\end{proof}

\begin{corollary}
$\mathcal{Perm}$ is a {\sf QMA}-complete operator program.  
\end{corollary}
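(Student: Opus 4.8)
The plan is to prove that the decision version of $\mathcal{Perm}$ — given a weighted graph $(V,w)$ with rational, polynomially-bounded weights and thresholds $a<b$ with $b-a\geq 1/\mathrm{poly}(|V|)$, decide whether $\mathcal{Perm}(V,w)\leq a$ or $\mathcal{Perm}(V,w)\geq b$ — is both {\sf QMA}-hard and contained in {\sf QMA}. Both directions are essentially one-line consequences of \Cref{thm:perm_is_opt}, which equates the operator-program value with an honest extremal eigenvalue, together with the affine relation $\SWAP(V,w)=\sum_{jk}w_{jk}-2\,QMC(V,w)$.

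For hardness, I would reduce from the {\scshape QMaxCut} decision problem, which is {\sf QMA}-complete (cf.\ the discussion in the introduction). Given a {\scshape QMaxCut} instance $(V,w)$ with thresholds for $QMC(V,w)$, \Cref{thm:perm_is_opt} gives $\mathcal{Perm}(V,w)=\SWAP(V,w)=\sum_{jk}w_{jk}-2\,QMC(V,w)$. Applying the affine map $x\mapsto \sum_{jk}w_{jk}-2x$ to the thresholds turns an $a/b$-promise decision about $QMC(V,w)$ into an $a'/b'$-promise decision about $\mathcal{Perm}(V,w)$ whose gap is merely rescaled by $2$, hence still inverse-polynomial. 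The reduction is trivially polynomial-time: one outputs the operator program $\mathcal{Perm}(V,w)$, whose description has size polynomial in $|V|$, along with the transformed thresholds.

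For containment, I would again invoke \Cref{thm:perm_is_opt}: $\mathcal{Perm}(V,w)=\mu_{min}\!\left(\sum_{ij}w_{ij}P_{ij}\right)$ is the ground-state energy of a $2$-local Hamiltonian on $n=|V|$ qubits whose local terms $w_{ij}P_{ij}$ are Hermitian, efficiently computable, and of polynomially-bounded operator norm. Deciding whether this ground-state energy is at most $a$ or at least $b$ with $b-a\geq 1/\mathrm{poly}(n)$ is precisely the Local Hamiltonian problem, which lies in {\sf QMA} by Kitaev's theorem. Composing with the hardness reduction yields {\sf QMA}-completeness.

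The only genuine subtlety — more bookkeeping than obstacle — is fixing the model of the decision problem carefully: one must present the weights rationally in binary and polynomially bounded so that $\sum_{ij}w_{ij}P_{ij}$ is a legitimate Local Hamiltonian instance, and one must check that the affine shift and factor-of-$2$ rescaling preserve an inverse-polynomial promise gap. Once these conventions are in place, both directions are immediate applications of \Cref{thm:perm_is_opt}.
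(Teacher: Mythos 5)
Your proposal is correct and follows essentially the same route as the paper: the paper's proof simply cites that determining $\SWAP(V,w)$ is {\sf QMA}-complete \cite{pid17} and invokes \Cref{thm:perm_is_opt} to identify $\mathcal{Perm}(V,w)$ with $\SWAP(V,w)$, which is exactly your argument with the hardness (via the affine shift to {\scshape QMaxCut}) and containment (Local Hamiltonian in {\sf QMA}) directions unpacked explicitly. The extra bookkeeping about rational weights and promise-gap rescaling is sound but not part of the paper's one-line proof.
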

\begin{proof}
    Determining $\SWAP(G, w)$ is {\sf QMA}-complete \cite{pid17} and by \Cref{thm:perm_is_opt} finding $\SWAP(G, w)$ is equivalent to finding $\mathcal{Perm}(G, w)$.
\end{proof}

\Cref{thm:perm_is_opt} establishes that the optimal operator variables in $\mathcal{Perm}$ and actual quantum operators in $\SWAP$ are the same, or that $\mathcal{Perm}$ can be optimized using the fixed assignments $p_{ij}=P_{ij}$ WLOG. 
Similarly, the abstract operator variables in $\mathcal{Proj}$ denoted by $h_{ij}$ become indistinguishable from the actual quantum operators of {\scshape QMaxCut} denoted by $H_{ij}$ when they are optimized according to the program. 
This is somewhat surprising, since in general, there are infinite amount of nontrivial relations between singlet projectors $H_{ij}$ that include higher order terms. The theorem implies that all of them must be derivable purely from relations in the program, namely \crefrange{eq:singproj1}{eq:singproj2} for $h_{ij}$, and \crefrange{eq:sym_const_1}{eq:herm_const} for $p_{ij}$. 
Proposition \ref{prop:add_const} could be seen as one such derivation, and we show other examples in Appendix \ref{app:relationderivation}. 
Later in section \ref{sec:exact}, we will take advantage of this equivalence, and identify $h_{ij}$ and $H_{ij}$ for practical purposes. If we explicitly add the constraints \cref{eq:sym_const_3} to the program then the above conclusions hold even if only a single anti-commuting constraint is included.  So if we only enforced $p_{12}p_{23}+p_{23}p_{12}=p_{12}+p_{23}+p_{13}-\mathbb{I}$ and did not include any other anti-commuting constraints then we would still have $\SWAP(V, w)=\mathcal{Perm}(V, w)$, essentially because a single constraint rules out all the incorrect representations.

The rotation invariance of the cost function $QMC(V,w)$ is known as the SU(2) symmetry of the isotropic Heisenberg model in condensed matter physics. The ground state of the model can always be made to be SU(2) symmetric for a finite-size system. Therefore, having independent $x$, $y$, and $z$ variables for the operator program like $\mathcal{Pauli}(H)$ could be considered somewhat wasteful. An intuitive way to view our $\mathcal{Proj}$ operator program is that it only deals with the fundamental SU(2) symmetric objects, namely the expectation value of the projectors to the singlet state. As we show in the next section, we can construct a provably converging NPA hierarchy with this approach, which turns out to be practically very efficient compared to the $\mathcal{Pauli}$ hierarchy. However, interestingly, there are also instances where explicitly breaking the SU(2) symmetry allows the SDP-relaxed value to approximate the true solution more closely, as discussed in section \ref{subsubsec:smallstat}.

\subsection{The Hierarchy}\label{sec:hierarchy}
The NPA hierarchy \cite{pir10con} gives a general procedure for constructing a family of semidefinite optimization programs parameterized by ``level'' which provide increasingly accurate relaxations on operator programs $\mathcal{O}$ (\Cref{def:op_prog}).  The definition given in \cite{pir10con} is more general than the one presented here which we have simplified for our particular application.  We motivate the definition of the hierarchy by considering a ``moment matrix''.  Let $\{A_i\}$, $\ket{G}$ be some feasible solution to $\mathcal{O}$.  $M_\ell$ is defined to be a complex Hermitian matrix with rows/columns indexed by elements of $\Gamma_\ell :=\{a_{i_1} a_{i_2} ...a_{i_m}: m \leq \ell\}$, the set of monomials of degree $\leq \ell$.
Entries of $M_\ell$ are defined so that 
\begin{align}
M_\ell(a_{i_1} ... a_{i_m} , a_{j_1}...a_{j_r}&):=\braket{G | (A_{i_1}...A_{i_m})^* A_{j_1} ... A_{j_r} |G }\nonumber \\
= &\braket{G| A_{i_m} ... A_{i_1} A_{j_1} ... A_{j_r}| G}.
\end{align}
By definition $M_\ell$ gives consistent values for all monomials in $\Gamma_{2\ell}$, and we can unambiguously define 
\begin{align}\label{eq:val_def}
 M_\ell( a_{i_1} ... a_{i_r} ):=\begin{cases}
     M_{\ell} (\mathbb{I}, a_{i_1} ... a_{i_r}) \text{ if $r\leq \ell$},\\
     M_\ell(a_{i_\ell} ... a_{i_1}, a_{i_{\ell+1}} ... a_{i_r}) \text{ otherwise}.
 \end{cases}   
\end{align}
Similarly for any degree-$\ell$ nc polynomials $\beta(\{A_i\})=\sum_{\phi \in \Gamma_\ell} \beta_\phi \phi(\{A_i\})$ and $\gamma(\{A_i\})=\sum_{\phi \in \Gamma_\ell}\gamma_\phi \phi(\{A_i\})$,  with $ \beta_\phi, \gamma_\phi\in \mathbb{C}$ for all $\phi$, we define 
\begin{align}
    M_\ell(\beta, \gamma)&:=\sum_{\phi, \phi' \in \Gamma_\ell} {\beta_\phi}^* \gamma_{\phi'} \braket{G | \phi^*(\{A_i\}) \phi'(\{A_i\})|G} \nonumber\\
    &= \sum_{\phi, \phi' \in \Gamma_\ell} {\beta_\phi}^* \gamma_{\phi'} M_\ell(\phi, \phi') \nonumber\\
    &= \sum_{\phi, \phi' \in \Gamma_\ell} (\beta_\phi)^* \gamma_{\phi'} M_\ell(\phi^*\phi') .
\end{align}
\begin{comment}
For degree-$\ell$ polynomials in $\{A_i\}$ we may define $M_\ell(\beta, \gamma)=\sum_{\phi, \phi'} \beta_\phi^* \gamma_{\phi'} \braket{G | \phi^*(\{A_i\}) \phi'(\{A_i\})|G}$.  Further, for any polynomial $\beta$ of degree-$2\ell$ we can unambiguously define $M_\ell(\beta) =\sum_{\phi \in \Gamma_{2 \ell}} \beta_{\phi}\braket{G | \phi (\{A_i\})|G}$ (for any monomial in $\theta$ of degree $>\ell$ we can split it up into two operators of size $\leq \ell$ and evaluate that term from an entry of $M_\ell$).  It can be easily verified that $M_\ell$ satisfies the following constraints:
\end{comment}
Furthermore, since the polynomials $\{\eta_k\}$ corresponding to the constraints must evaluate to zero for the variables $\{A_i\}$, we must also have that $0=\eta_k(\{A_i\})=\bra{G} \eta_k(\{A_i\}) \ket{G}=M_\ell(\eta_k)$.  More generally, feasibility for $\mathcal{O}$ implies that for any nc polynomials $\beta, \gamma$ with $\deg(\beta\eta_k\gamma)\leq 2 \ell$, 
\begin{equation}\label{eq:npa_const_0}
    M_\ell(\beta \eta_k \gamma)=0, 
\end{equation}
since $\bra{G}\beta \eta_k \gamma \ket{G}=\bra{G}\beta 0 \gamma \ket{G}$. The matrix $M_{\ell}$ also naturally satisfies 
\begin{align}\label{eq:npa_const_M}
    M_\ell=M_{\ell}^*  \text{ ~~~and~~~ }
    M_\ell \succeq 0,
\end{align}
where the latter constraint can be seen from the fact that for any complex column vector $v$, $v^* M_\ell v=M_{\ell} (\beta, \beta)=\bra{G} \beta^* \beta \ket{G}\geq 0$ for some $\beta$ that is a degree-$\ell$ nc polynomial.  

\begin{comment}
\begin{align}
M_\ell(\phi, \phi')&=M_\ell(\gamma, \gamma') \text{ if } \phi, \phi', \gamma, \gamma' \in \Gamma_\ell \\ \nonumber &\text{ and } \phi^* \phi' = \gamma^* \gamma',\\
M_\ell &=M_\ell^*, \\
M_\ell(\beta, \beta) \geq 0 \,\forall\,\,\, \text{degree-$\ell$ }&\beta \Leftrightarrow M_\ell \succeq 0,\\
\label{op_prog_const_p_i} M_\ell(\beta \eta_i \gamma)&=0 \text{ if } deg(\beta)+deg(\gamma)+deg(\eta_i) \leq 2\ell.
\end{align}

\begin{align}
\label{eq:mom_consistency}M_\ell(\phi, \phi')&=M_\ell(\phi^* \phi') \,\, \forall \,\, \phi, \phi'\in \Gamma_\ell,\\
M_\ell &=M_\ell^*, \\
M_\ell(\beta, \beta) \geq 0 \,\forall\,\,\, \text{degree-$\ell$ }&\beta \Leftrightarrow M_\ell \succeq 0,\\
\label{op_prog_const_p_i} M_\ell(\beta \eta_i \gamma)&=0 \text{ if } deg(\beta)+deg(\gamma)+deg(\eta_i) \leq 2\ell.
\end{align}
In the first set of constraints, \cref{eq:mom_consistency}, $M_\ell(\phi^* \phi')$ is meant to 

Note that $\eta_i$ are the nc polynomial constraints defined in \Cref{def:op_prog}.  \Cref{op_prog_const_p_i} applies to all polynomials $\beta$, $\gamma$ of sufficiently low degree, however linearity implies all constraints of the form \Cref{op_prog_const_p_i} can be simultaneously enforced simply by enforcing them on $\beta, \gamma\in \Gamma_{2\ell}$.  
\end{comment}

%Given an operator program $\mathcal{O}$, the $\ell$-th level of the NPA hierarchy, $NPA_\ell (\mathcal{O})$, is simply optimization of the variable $M_\ell$ subject to the above constraints without reference to a legitimate assignment to the variables.  
Given an operator program $\mathcal{O}$, the $\ell$-th level of the NPA hierarchy, denoted by $NPA_\ell (\mathcal{O})$, is the optimization of the matrix $M_\ell$ as a variable, subject to the above properties viewed as constraints, not requiring a corresponding legitimate vector $\ket{G}$ or operators $\{A_i\}$. $NPA_\ell (\mathcal{O})$ relaxes $\mathcal{O}$ since we could have used the optimal $\ket{G}, \{A_i\}$ to define $M_\ell$, so if $\mathcal{O}$ is a maximization problem, $NPA_\ell(\mathcal{O})\geq \mathcal{O}$. 
For a given monomial $a_{i_1}...a_{i_r}$, $M_\ell(a_{i_1}...a_{i_r})$ is defined as a specific entry of $M_\ell$ according to \cref{eq:val_def}. 
However, there are many other distinct entries of $M_\ell$ which will match. For example, $M_2(a_1a_2):= M_2(\mathbb{I}, a_1 a_2)$, but $ M_2(\mathbb{I}, a_1 a_2)=M_2(a_1, a_2)=M_2(a_2 a_1, \mathbb{I})$.  In $NPA_\ell$ we will force the value of $M_\ell(a_{i_1}...a_{i_r})$ to be consistent with the other distinct entries by enforcing that these other entries of $M_\ell$ are equal to the ``cannonical'' value $M_2(a_1a_2):= M_2(\mathbb{I}, a_1 a_2)$.  Note that $M_\ell$ will be {\it indexed} by variables $\phi \in \Gamma_\ell$ of the original operator program, but these $\phi$ do not vary inside $NPA_\ell$\footnote{We note that ``good perturbations'' as in \cite{pro21} can also be used to construct an SDP hierarchy for the Permutation program without requiring explicitly enforcing the constraints \Cref{npa_const_p_i}.  See \cite{other_guys} remark 5.5 for details.}. 
\begin{definition}[$NPA_\ell (\mathcal{O})$]
%Given an operator program $\mathcal{O}$ with objective to $\max/\min $ the extremal $\max/\min$ eigenvalue of $\theta\in \text{span}(\Gamma_{2 \ell})$ subject to constraints $\{\eta_i\}$ 
Given an operator program $\mathcal{O}$ with objective to max / min the extremal eigenvalue of a nc polynomial $\theta\in \text{span}(\Gamma_{2 \ell})$ satisfying $\theta^*=\theta$ and subject to constraints $\{\eta_k\}$, define
\begin{comment}
\begin{align}
NPA_\ell(\mathcal{O}):= \max /\min \,\,&M_\ell(q)\\
\nonumber s.t. \\
\label{eq:id_const}&M_\ell(\mathbb{I})=1\\
\label{concat_const}&M_\ell(\phi, \phi')=M_\ell(\gamma, \gamma') \text{ if } \phi, \phi', \gamma, \gamma' \in \Gamma_\ell \\ \nonumber &\text{ and } \phi^* \phi' = \gamma^* \gamma'\\
\label{npa_const_p_i} &M_\ell (\beta \eta_i \gamma)=0 \,\,\,\,\forall \beta, \gamma \in \Gamma_{2 \ell}, \eta_i: deg(\eta_i )+deg(\beta)+deg(\gamma) \leq 2\ell\\
&M_\ell\succeq 0, Hermitian
\end{align}
\end{comment}
\begin{align}
NPA_\ell&(\mathcal{O}):= \max /\min \quad M_\ell(\theta)\\
\label{eq:id_const}\mathrm{s.t.}\quad & M_\ell(\mathbb{I})=1,\\
\label{concat_const}&M_\ell(\phi, \phi')=M_\ell(\phi^* \phi') ~~~~ \forall  \phi, \phi'\in \Gamma_\ell,\\
%\label{npa_const_p_i} &M_\ell (\beta \eta_k \gamma)=0 ~~~~ \forall \beta, \gamma \in \Gamma_{2 \ell}, \eta_k: \deg(\eta_k )+\deg(\beta)+\deg(\gamma) \leq 2\ell,\\
\label{npa_const_p_i} &M_\ell (\beta \eta_k \gamma)=0 ~~\forall \beta, \gamma \in \Gamma_{2 \ell}, \eta_k: \deg(\beta\eta_k\gamma)\leq 2 \ell,\\
&M_\ell\succeq 0, ~\mathrm{Hermitian}.
\end{align}
\end{definition}
\begin{comment}
It is important to note that if a constraint polynomial has $deg(\eta_i) > 2 \ell$ then it is \textbf{not} included in $NPA_\ell$ (we ignore that constraint for that level).  
\end{comment}
In several works \cite{par21app, ans20bey, par22opt} a ``real version'' of $NPA_\ell$ is studied, $NPA_\ell^{\mathbb{R}}$.  The key difference is that $NPA_\ell^{\mathbb{R}}$ takes only the real portion of the moment matrix, effectively ``zeroing out'' skew-Hermitian polynomials.
\begin{definition}[$NPA_\ell^{\mathbb{R}} (\mathcal{O})$]
Given an operator program $\mathcal{O}$ with objective to max / min the extremal eigenvalue of a nc polynomial $\theta\in \text{span}(\Gamma_{2 \ell})$ satsifying $\theta^*=\theta$ and subject to constraints $\{\eta_k\}$, define\begin{align}
NPA_\ell^{\mathbb{R}} &(\mathcal{O}):= \max /\min \quad M_\ell^{\mathbb{R}}(\theta)\\
\mathrm{s.t.} \quad & M_\ell^{\mathbb{R}} (\mathbb{I})=1,\\
&M_\ell^{\mathbb{R}} (\phi, \phi')=M_\ell^{\mathbb{R}} (\phi^* \phi') ~~~~ \forall  \phi, \phi'\in \Gamma_\ell,\\
&\label{npa_const_p_i_real}M_\ell^{\mathbb{R}} ((\beta \eta_k \gamma + \gamma^* \eta_k^* \beta^*)/2)=0\nonumber \\
&\hspace{1.5cm} \forall \beta, \gamma \in \Gamma_{2 \ell}, \eta_k: \deg(\beta\eta_k\gamma )\leq 2\ell,\\
%&\label{npa_const_p_i_real}M_\ell^{\mathbb{R}} ((\beta \eta_k \gamma + \gamma^* \eta_k^* \beta^*)/2)=0 ~~~~ \forall \beta, \gamma \in \Gamma_{2 \ell}, \eta_k: \deg(\eta_k )+\deg(\beta)+\deg(\gamma) \leq 2\ell,\\
%&M_\ell ~~\mathrm{Hermitian},\\ 
%&M_\ell^{\mathbb{R}}=\frac{1}{2}(M_\ell+(M_\ell^*)^T),\\
&M_\ell^\mathbb{R} \succeq 0, ~\mathrm{Symmetric}.
\end{align}
\end{definition}
As noted in other works \cite{bra19, gha19alm}, $ NPA_\ell^\mathbb{R}(\mathcal{O})$ is a relaxation on $NPA_\ell(\mathcal{O})$ ($NPA_\ell(\mathcal{O}) \leq NPA_\ell^\mathbb{R}(\mathcal{O})$ if $\mathcal{O}$ is a maximization problem) since given a feasible $M_\ell$ for $NPA_\ell(\mathcal{O})$, we can take $M_\ell^\mathbb{R}:=(M_\ell+(M_\ell^*)^T)/2$ to obtain a feasible solution to $NPA_\ell^\mathbb{R}$ with the same objective, but not necessarily the other way around. 
%$NPA_\ell^\mathbb{R}(\mathcal{O})$ is a relaxation on $NPA_\ell(\mathcal{O})$, so it generally upper/lower bounds $NPA_\ell(\mathcal{O})$ for max/min problems. 
We show explicit examples in \cref{subsubsec:smallstat} where the separation is strict for the $\mathcal{Pauli}$ case. 
%However, for the specific operator program we focus on most in this paper $\mathcal{Proj}$, this is a distinction without a difference:
However, for the specific SDP we focus on most in this paper, $NPA_1(\mathcal{Proj})$, this is a distinction without a difference:
\begin{proposition}
    For any weighted graph $(V, w)$, $NPA_1(\mathcal{Proj}(V, w))=NPA_1^\mathbb{R}(\mathcal{Proj}(V, w))$.
\end{proposition}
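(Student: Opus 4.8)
The plan is to prove the two inequalities separately. The inequality $NPA_1(\mathcal{Proj}(V,w)) \le NPA_1^{\mathbb{R}}(\mathcal{Proj}(V,w))$ is immediate from the general relaxation fact noted above: from any feasible moment matrix $M_1$ for $NPA_1$, the matrix $M_1^{\mathbb{R}} = (M_1 + (M_1^*)^T)/2$ is feasible for $NPA_1^{\mathbb{R}}$ with the same objective, since $\mathcal{Proj}$ is a maximization program. For the reverse inequality I would take an arbitrary feasible solution $N$ of $NPA_1^{\mathbb{R}}(\mathcal{Proj}(V,w))$, i.e., a real symmetric positive semidefinite matrix indexed by $\Gamma_1 = \{\mathbb{I}\}\cup\{h_{ij}\}$ with $N(\mathbb{I},\mathbb{I})=1$, and show that $N$, viewed as a complex matrix (automatically Hermitian, being real symmetric), is already feasible for $NPA_1(\mathcal{Proj}(V,w))$. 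Since the objective of both programs is the same real linear functional $\sum_{jk} w_{jk}\, N(\mathbb{I},h_{jk})$, this gives $NPA_1^{\mathbb{R}}(\mathcal{Proj}(V,w)) \le NPA_1(\mathcal{Proj}(V,w))$ and hence equality.

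The feasibility check is short because level $1$ is small. $N$ is Hermitian and positive semidefinite with $N(\mathbb{I},\mathbb{I})=1$, and it satisfies the consistency constraints \eqref{concat_const}, which are written identically in the two programs (at level $1$ the only non-tautological consistency relation is $N(h_{ij},\mathbb{I})=N(\mathbb{I},h_{ij})$, which holds since $N$ is symmetric). For the relation constraints \eqref{npa_const_p_i}: every defining relation of $\mathcal{Proj}$ — idempotency \eqref{eq:singproj1}, the commutation relation \eqref{eq:singprojcomm}, and the anticommutation relation \eqref{eq:anticommproj} — is an nc polynomial $\eta_k$ of degree exactly $2 = 2\ell$, so the only admissible multipliers in \eqref{npa_const_p_i} are scalar multiples of $\mathbb{I}$, and the constraint collapses to $N(\eta_k)=0$. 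I would then verify this relation by relation: the idempotency and anticommutation polynomials are Hermitian, $\eta_k^*=\eta_k$, so the symmetrized $NPA_1^{\mathbb{R}}$ constraint \eqref{npa_const_p_i_real}, namely $N((\eta_k+\eta_k^*)/2)=0$, is exactly $N(\eta_k)=0$; the commutation polynomial $\eta_k=h_{ij}h_{kl}-h_{kl}h_{ij}$ (for distinct $i,j,k,l$) is skew-Hermitian, and $N(\eta_k)=N(h_{ij},h_{kl})-N(h_{kl},h_{ij})=0$ holds automatically because $N$ is symmetric.

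I do not expect a genuine obstacle, but the subtle point — and what a reader might suspect is false — is that the proposition is \emph{not} the claim that every entry of a feasible $NPA_1(\mathcal{Proj})$ moment matrix is forced to be real. The entries $M_1(h_{ij},h_{jk})$ attached to two edges sharing a vertex have genuinely free imaginary part: the anticommutation relation only pins down $2\,\mathrm{Re}\,M_1(h_{ij},h_{jk}) = \frac{1}{2}(M_1(h_{ij})+M_1(h_{jk})-M_1(h_{ik}))$. The real content is that this imaginary freedom is useless at level $1$: those entries never appear in the linear objective $\sum_{jk} w_{jk} h_{jk}$, and passing to the real part of the moment matrix preserves positive semidefiniteness (the real part of a positive semidefinite Hermitian matrix is positive semidefinite) as well as every degree-$\le 2$ constraint. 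Thus the feasibility-transfer argument above is the cleanest route, rather than an entrywise reality argument; I would close with a one-line remark that this collapse is special to $\mathcal{Proj}$ at level $1$, since for $\mathcal{Pauli}$ the objective already contains cross terms such as $x_iy_j$ whose moment entries can be complex, which is exactly why the real and complex hierarchies separate there.
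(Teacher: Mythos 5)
Your proposal is correct and follows essentially the same route as the paper: show that any feasible real moment matrix is already feasible for the complex program at level $1$, because the degree bound forces $\beta=\gamma=\mathbb{I}$ and the only constraints to check are the defining relations themselves. If anything you are slightly more careful than the paper's own proof, which only notes that the idempotency and anticommutation polynomials are Hermitian, whereas you also dispose of the skew-Hermitian commutation polynomial $h_{ij}h_{kl}-h_{kl}h_{ij}$ by observing that symmetry of the real matrix makes $N(h_{ij},h_{kl})=N(h_{kl},h_{ij})$ hold automatically.
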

\begin{proof}
%shunkonnno
Let $M_1^\mathbb{R}$ be a feasible solution to $NPA_1^\mathbb{R}(\mathcal{Proj})$.  We claim that $M_1^\mathbb{R}$ is feasible for $NPA_1(\mathcal{Proj})$ and hence $NPA_1^\mathbb{R}(\mathcal{Proj}) \leq NPA_1(\mathcal{Proj})$.  This is demonstrated by showing that all the constraints $\eta$ enforced on $M_1^\mathbb{R}$ (constraints of the form \Cref{npa_const_p_i_real}) are Hermitian and hence $M_1^\mathbb{R}(\eta+\eta^*)=0$ implies $M_1^\mathbb{R}(\eta)=0$.  It is easy to see that the nc polynomials $\eta$ of the form $h_{ij}^2-h_{ij}$ and $h_{ij}h_{jk}+h_{jk}h_{ij}-(h_{ij}+h_{jk}-h_{ik})/2$ are Hermitian, hence $M_1^\mathbb{R}(\eta)=0$ for these polynomials.  Since we are looking at moment matrices with a maximum degree $2$ if $\beta \eta \gamma$ has max degree $2$ for $\eta$ one of the polynomials above $\beta=\mathbb{I}=\gamma$ so there are no other constraints to check and $M_1^\mathbb{R}$ must be feasible for $NPA_1(\mathcal{Proj})$
     
     %This is shown by demonstrating that all the constraints enforced on $M_1^\mathbb{R}$ are actually Hermitian polynomials, The constraint $\eta=h_{ij}-h_{ij}^2=0$ is Hermitian ($\eta^*=\eta$) and we know $M_1^\mathbb{R}(\eta+\eta^*)=0$ by the constraints of $NPA_1^\mathbb{R}$ so $M_1^\mathbb{R}(\eta)=0$. Since $M_1$ is Hermitian and by the constraints $M_1(h_{ij})=M_1(h_{ij}^2)=M_1(h_{ij}, h_{ij})\in \mathbb{R}$, $M_1^\mathbb{R}(h_{ij}^2)=M_1^\mathbb{R}(h_{ij})$ for all distinct $i, j$.  Similarly, $M_1(h_{ij}, h_{kl})=M_1(h_{kl}, h_{ij})$ by the constraints, and since $M_1$ is Hermitian $M_1(h_{ij}, h_{kl})\in \mathbb{R}$, $M_1^\mathbb{R}(h_{ij}, h_{kl})=M_1^\mathbb{R}(h_{kl}, h_{ij})$ for all distinct $i, j, k, l$.  Lastly, 
    %\begin{equation}
    %        M_1(h_{ij}, h_{jk})+M_1(h_{jk}, h_{ij})=\frac{1}{2}(M_1(h_{ij})+M_1(h_{jk})-M_1(h_{ik})).
    %\end{equation}
    %$M_1$ being Hermitian implies that the right hand side is strictly real, so the L.H.S must also be strictly real.  Hence the above also holds with $M_1^\mathbb{R}$ in place of $M_1$.  We have demonsrataed that $M_1^\mathbb{R}$ satisfies all the constraints of $NPA_1(\mathcal{Proj})$.    
\end{proof}

Each $NPA_\ell$ can be solved via semidefinite programming, and the dual set of programs is called the {\it Sum of Squares} (SoS) hierarchy due to the interpretation of the hierarchy as an optimization over {\it sum of squares proofs}.  To motivate this imagine we are trying to maximize an nc polynomial $\theta$ and obtain an expression of the form
\begin{equation}\label{eq:sos_mot}
    \lambda \mathbb{I}-\theta=\sum_i \psi_i^* \psi_i+\sum_j \beta_j \eta_j \gamma_j,
\end{equation}
where $\lambda\in \mathbb{R}$, $\psi_i\in \mathrm{span}_\mathbb{C}(\Gamma_\ell)$ for all $i$ and $\deg(\beta_j\eta_j \gamma_j)\leq 2 \ell$ for all $j$.  Since $\eta_k$ are constraints of $\mathcal{O}$ we must have that 
\begin{equation}
    \lambda \mathbb{I}-\theta=\sum_i \psi_i^* \psi_i 
\end{equation}
for {\it any} feasible solution to $\mathcal{O}$.  $\sum_i \psi^* \psi_i$ is manifestly $\succeq 0$ so for all feasible solutions $\bra{G} \lambda \mathbb{I}-\theta\ket{G} \geq 0\Rightarrow \lambda \geq \bra{G} \theta \ket{G}$ which implies $\lambda \geq NPA_\ell(\mathcal{O})$.  An expression of the form \cref{eq:sos_mot} is generally referred to as a {\it sum of squares proof}.  The SoS optimization problem at level $\ell$ is to find the smallest $\lambda$ such that $\lambda \mathbb{I}-\theta$ can be deformed via the constraints to an expression of the form $\sum_i \psi^* \psi$: $\lambda \mathbb{I}-\theta-\sum_j \beta_j \eta_j \gamma_j =\sum_i \psi_i^* \psi_i$.  Crucially, the constraints that the SoS proof uses are of degree at most $\ell$ at the corresponding level ($\deg(\beta_j\eta_j\gamma_j)\leq 2\ell$).  For ease of reference, define the set of constraint deformation polynomials as
\begin{equation}\label{eq:U_ell_def}
    U^\ell(\{\eta_k\}):= \mathrm{span}_\mathbb{C}\{\beta \eta_k \gamma: \beta, \gamma\in \Gamma_\ell, \deg(\beta\eta_k\gamma)\leq 2\ell\}
\end{equation}

\begin{definition}[$SoS_\ell(\mathcal{O})$]\label{def:SoS}
Given an operator program $\mathcal{O}$ with objective to max/min the extremal eigenvalue of $\theta$ subject to the constraints $\{\eta_k\}$,
\begin{align}
    &SoS_\ell(\mathcal{O}):= \min/\max \quad \lambda\\
    \mathrm{s.t.}\quad & \sum_i \psi_i^* \psi_i +\beta \nonumber\\
    &=\begin{cases} \lambda \mathbb{I}-\theta \text{ if $\mathcal{O}$ is a maximization problem}\\ \theta-\lambda \mathbb{I} \text{ if $\mathcal{O}$ is a minimization problem} \end{cases},\\
    &\psi_i\in \mathrm{span}_\mathbb{C}(\Gamma_\ell) \,\, \forall i,\\
    &\beta \in U^\ell(\{\eta_k\}),\\
    &\lambda \in \mathbb{R}.
\end{align}
\end{definition}

\textbf{Equivalence}.  SDPs for $\mathcal{Perm}$ and $\mathcal{Proj}$ are equivalent since a feasible solution for $NPA_\ell(\mathcal{Perm})$ can be used to construct a feasible solution to $NPA_\ell(\mathcal{Proj})$ and vice versa.  The reasoning for this is analogous to the reasoning behind the operator programs $\mathcal{Perm}$ and $\mathcal{Proj}$ being equivalent: Given $M_\ell$ feasible for $NPA_\ell(\mathcal{Perm})$ one can construct $M_\ell'$ feasible for $NPA_\ell(\mathcal{Proj})$ according to
\begin{align}
%M_\ell'(h_{i_1, j_1}  ... h_{i_a, j_a}, h_{k_1, m_1} ... h_{k_b, m_b}):= M_\ell\left(\frac{\mathbb{I}+p_{i_1, j_1}}{2}...\frac{\mathbb{I}+p_{i_a, j_a}}{2}, \frac{\mathbb{I}+p_{k_1, m_1}}{2} ... \frac{\mathbb{I}+p_{k_b, m_b}}{2} \right).
M_\ell'(h_{i j}  &... h_{k l}, h_{m n} ... h_{o p})\nonumber \\
&:= M_\ell\left(\frac{\mathbb{I}-p_{i j}}{2}...\frac{\mathbb{I}-p_{k l}}{2}, \frac{\mathbb{I}-p_{m n}}{2} ... \frac{\mathbb{I}-p_{o p}}{2} \right).
\end{align}

Note that the R.H.S is evaluated using the expression for $M_\ell(\beta, \gamma)$ for nc polynomials $\beta$ and $\gamma$ as in \cref{sec:hierarchy}.  The primal/dual pair $NPA_\ell/SoS_\ell$ satisfies strong duality \footnote{In order to show this, one only needs to verify that the Slater's condition \cite{ram97} is satisfied. This is not too hard to see, by considering the moment matrix corresponding to the completely mixed state being strictly positive definite.}, so it holds that $NPA_\ell(\mathcal{Perm})=SoS_\ell(\mathcal{Perm})$, $NPA_\ell(\mathcal{Proj})=SoS_\ell(\mathcal{Proj})$, and that the objectives of the two sets of programs differ by a known affine shift.  

\textbf{Convergence.  }It is simple to check that any $\mathcal{Perm}$ satisfies a boundedness condition (it is {\it Archimedean} \cite{pir10con}).  For this we need to show the existence of a constant $C$ such that $C^2 \mathbb{I}-\sum_{i<j} P_{ij}\succeq 0$ for any feasible assignment to the variables $p_{ij}=P_{ij}$.  Since $P_{ij}^2=\mathbb{I}$, $P_{ij} \preceq \mathbb{I}$ so $C^2 \mathbb{I}-\sum_{i<j} P_{ij}\succeq \left(C^2-\binom{n}{2}\right) \mathbb{I}$.  Hence we may choose $C=\sqrt{\binom{n}{2}}$.  The results of \cite{pir10con} then imply that $NPA_{\ell}(\mathcal{Perm})$ converges to the optimal objective of $\mathcal{Perm}$ in the limit of large $\ell$.  In fact, the constraints present are strong enough to guarantee finite convergence of $NPA_\ell$.  Essentially the proof of this statement involves showing that moment matrices will satisfy the ``rank condition'' of \cite{pir10con} hence an optimal operator solution can be constructed from the optimal solution to $NPA_\ell(\mathcal{Perm})$ at some finite $\ell$.

The proof we give borrows heavily from \cite{other_guys}, but it does not directly follow from their work 
\footnote{In the first version of this paper we proved a weaker result.  We were able to prove this stronger result only after becoming familiar with \cite{other_guys}}.  Our hierarchy does not reduce via a Grobner basis so we must algebraically prove constraints to use them rather than simply observing them via calculation (see Lemma $3.12$ of \cite{other_guys}).

\begin{proposition}\label{prop:fin_conv}
    Let $\ell^*=\lceil\frac{n}{2} \rceil+4$, then $NPA_{\ell^*}(\mathcal{Proj}(V, w)) = NPA_{\ell^*+1}(\mathcal{Proj}(V, w))$ 
    
    \noindent $ = NPA_{\ell^* + k}(\mathcal{Proj}(V, w))$ for any $k\in \mathbb{Z}_{\geq 0}$.
\end{proposition}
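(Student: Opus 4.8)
The plan is to reduce the whole statement to a single exactness claim: $NPA_{\ell^*}(\mathcal{Perm}(V,w)) = \mathcal{Perm}(V,w)$. Granting this, the stated chain of equalities follows by a sandwich. The sequence $NPA_\ell(\mathcal{Perm})$ is non-decreasing in $\ell$ — a feasible moment matrix at level $\ell+1$ restricts, via its principal submatrix indexed by $\Gamma_\ell$, to a feasible moment matrix at level $\ell$ with the same objective — and it is bounded above by $\mathcal{Perm}(V,w)$ since every level is a relaxation of the minimization. Hence, for every $k \ge 0$, $NPA_{\ell^*}(\mathcal{Perm}) \le NPA_{\ell^*+k}(\mathcal{Perm}) \le \mathcal{Perm}(V,w) = NPA_{\ell^*}(\mathcal{Perm})$, which forces equality throughout. (By \Cref{thm:perm_is_opt}, $\mathcal{Perm}(V,w)=\SWAP(V,w)$, so this simultaneously re-derives finite convergence to the physical value.)

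To prove exactness at level $\ell^*$ I would invoke the rank / flat-extension condition of \cite{pir10con}. Take an optimal moment matrix $M_{\ell^*}$ and a level $d < \ell^*$ at which its rank has already stabilized, i.e.\ $\mathrm{rank}(M_{\ell^*}) = \mathrm{rank}\!\left(M_{\ell^*}[\Gamma_{d}]\right)$, and perform the GNS-style extraction: equip $\mathrm{span}_\mathbb{C}(\Gamma_{d})$ with the positive semidefinite form $\langle \phi,\phi'\rangle := M_{\ell^*}(\phi^*\phi')$, quotient by its radical to get a finite-dimensional Hilbert space $\mathcal{H}$, set $\ket{G} := [\mathbb{I}]$, and define operators by $P_{ij}[\phi] := [p_{ij}\phi]$. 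The slack $d < \ell^*$, together with the fact that all constraint polynomials $\eta_k$ of $\mathcal{Perm}$ have degree at most $2$, is exactly what lets one verify that the $P_{ij}$ are well defined and bounded on $\mathcal{H}$, are Hermitian (from $p_{ij}^*=p_{ij}$), and satisfy \crefrange{eq:sym_const_1}{eq:herm_const} as operator identities. The extracted data $(\{P_{ij}\},\ket{G})$ is then feasible for $\mathcal{Perm}$, and one checks its objective equals $M_{\ell^*}(\theta)=NPA_{\ell^*}(\mathcal{Perm})$; this gives $\mathcal{Perm}(V,w)\le NPA_{\ell^*}(\mathcal{Perm})$, which combined with the relaxation inequality yields the desired equality.

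The remaining point — and the main obstacle — is justifying why the rank of an optimal $M_{\ell^*}$ stabilizes below level $\binom{n}{2}$. The idea is that the relations \crefrange{eq:sym_const_1}{eq:herm_const}, all of degree $\le 2$, are fully active in $M_{\ell^*}$, and together with \Cref{prop:add_const} and the finite presentation \crefrange{eq:fin_pres_1}{eq:fin_pres_3} of $S_n$ they allow any word in the $p_{ij}$ to be rewritten, modulo the two-sided ideal the constraints generate, as a linear combination of ``reduced'' transposition words; in effect one shows the GNS algebra is a quotient of $\mathbb{C}[S_n]$. Since every permutation of $[n]$ is a product of at most $n-1 \le \binom{n}{2}$ transpositions, such reduced words have length at most $n-1$, so monomials of degree $\le n-1$ already span and the rank of $M_{\ell^*}[\Gamma_d]$ cannot grow past $d = n-1 < \binom{n}{2}$. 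The delicate part is the degree bookkeeping: each rewriting step ($p_{ij}^2 \to \mathbb{I}$, $p_{ij}p_{kl} \to p_{kl}p_{ij}$, and $p_{ij}p_{jk} \to -p_{jk}p_{ij} + (\text{lower order})$) must be shown to be length-non-increasing and to keep all intermediate polynomials within the degree budget $2\ell^*$ that $M_{\ell^*}$ actually constrains, and the associated rewriting system must be terminating/confluent enough that the reduction descends cleanly to the finite-dimensional GNS quotient. Everything else — the well-definedness, Hermiticity and relations of the extracted operators, and the matching of objectives — is then routine verification.
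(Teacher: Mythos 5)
There is a genuine gap, and it sits exactly where you park it: the rank-stabilization (flatness) claim for an optimal $M_{\ell^*}$. Your first paragraph (monotonicity plus the bound $NPA_\ell(\mathcal{Perm})\leq\mathcal{Perm}$, then a sandwich) is fine, and the GNS extraction is routine \emph{once} flatness is granted, so the entire content of your proof is the assertion that every word in the $p_{ij}$ of degree up to $\ell^*$ is forced, inside the truncated moment matrix, to be a linear combination of words of length at most $n-1$. That step is not carried out: you would need (i) to upgrade the entry-wise constraints $M_{\ell^*}(\beta\eta_k\gamma)=0$ into genuine linear dependencies among rows of the PSD matrix (which requires multiplying by arbitrary test monomials $\psi\in\Gamma_{\ell^*}$ and checking $\deg(\psi^*\beta\eta_k\gamma)\leq 2\ell^*$ throughout), and (ii) to show that the reduction of an arbitrary transposition word to a minimal-length factorization of the corresponding permutation can be carried out by length-non-increasing applications of \cref{eq:sym_const_1,eq:sym_const_2,eq:anti_comm} together with the derived relation of \Cref{prop:add_const}, with all intermediate constraint products staying inside the degree budget. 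Item (ii) is essentially a confluence statement about a noncommutative rewriting system; you name it but do not prove it, and without it the flat-extension theorem of \cite{pir10con} cannot be invoked, so nothing in the chain is established.

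It is also worth noting that your route aims at a strictly stronger statement than the proposition requires, which is why you hit this wall. The paper never proves exactness at $\ell^*$ here; it proves stabilization directly and much more cheaply. Since there are only $\binom{n}{2}$ variables, any monomial of degree exceeding $\ell^*=\binom{n}{2}$ must contain a repeated $p_{ij}$ (pigeonhole); using \cref{eq:anti_comm} one commutes the two copies together, at the cost of lower-degree monomials, and cancels them with \cref{eq:sym_const_1}, and induction reduces every such monomial to a linear combination of monomials with pairwise distinct variables, hence of degree at most $\ell^*$. This shows the entries (indeed the rows) of any level-$(\ell^*+k)$ moment matrix are determined by the $\Gamma_{\ell^*}$ block, so the higher levels add nothing and $NPA_{\ell^*}=NPA_{\ell^*+k}$. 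The crucial simplification is that one only needs reduction to ``all variables distinct,'' not to minimal-length permutation words of length $\leq n-1$, so no braid-relation bookkeeping, no flatness, and no GNS construction are needed. If you want to keep your exactness-first strategy, you must actually prove the rank condition (your ``main obstacle''); if you only want the proposition as stated, replace that entire detour with the pigeonhole rewriting argument above.
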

\begin{proof}
Let $M=M_{\ell^*}$ be an optimal solution to $ NPA_{\ell^*}(\mathcal{Proj}(V, w))$.  Let Gram vectors $\{\ket{\phi}\}_{\phi\in \Gamma_{\ell^*}}$ be defined so that $M(\phi', \phi)=\braket{\phi'|\phi}$.  For $\theta =\sum_{\phi \in \Gamma_{\ell^*}} c_\phi \phi $ with $c_\phi \in \mathbb{C}$ we will use the notation $\ket{\theta}=\sum_\phi c_\phi \ket{\phi}$.  Our goal is to demonstrate that for all $\zeta \in \Gamma_{\ell^*-3}$ of degree $\ell^*-3$ there exists $\theta_\zeta \in \text{span}_\mathbb{C} \Gamma_{\ell^*-4}$ such that 
\begin{equation}\label{eq:flat_trunc}
\braket{\phi | \zeta}=\braket{\phi|\theta_\zeta}
\end{equation}
for all $\phi \in \Gamma_{\ell^*-3}$.  If this holds then we can apply Theorem $2$ of \cite{pir10con} to infer the proposition by the following reasoning.  Let $Q_1=span_\mathbb{C} \{\ket{\phi}: \phi \in \Gamma_{\ell^*-4}\}$ and $Q_2=span_\mathbb{C} \{\ket{\phi}: \phi \in \Gamma_{\ell^*-3}\}$.  If there is a vector in $Q_2$ which is linearly independent from the vectors in $Q_1$ then there exists nonzero $\ket{w}=\sum_{\phi \in \Gamma_{\ell^*-3}} c_\phi \ket{\phi}$ such that $\braket{w|\phi}=0$ for all $\phi\in \Gamma_{\ell^*-4}$.  In this case $0 < \braket{w|w}=\sum_{\phi \in \Gamma_{\ell^*-3}} c_\phi \braket{w|\phi}=\sum_\phi c_\phi \braket{w|\theta_\phi}=0$.  Hence the submatrices of $M$ corresponding to $\Gamma_{\ell^*-3}$ and $\Gamma_{\ell^*-4}$ must have the same rank and we can apply an argument similar to Theorem $2$ of \cite{pir10con}\footnote{Demonstrating that at some level $\ell$ some submatrix corresponding to terms of degree $<\ell$ satisfies the rank condition and using this fact to construct an optimal solution is a generalization of flat extension known as ``flat truncation''\cite{nie2013certifying}}.

Let $\{\eta_j\}$ be the constraints described in \Cref{eq:singproj1}-\Cref{eq:singproj2}.  By definition, any $ \upsilon \in U^{\ell^*}(\{\eta_k\})$ of degree $\ell^*$ must satisfy $\braket{\phi| \upsilon}=0$ for all $\phi\in \Gamma_{\ell^*-3}$.  We will establish \Cref{eq:flat_trunc} by expressing $ \braket{\phi|\zeta}=\braket{\phi|\upsilon +\zeta} =\braket{\phi|\theta_\zeta}$ for $\theta_\zeta, \upsilon$ as described above.  In general $\upsilon$ will be of the form $\upsilon=\sum_j \beta_j \eta_j \gamma_j$.  After the individual components of $\upsilon$ are multiplied out but before like terms are canceled the degree of $\upsilon$ will be less than or equal to $\ell^*$.  After like terms are canceled the degree of $\upsilon$ will be at most $\ell^*-3$ and the degree of $\zeta+\upsilon=: \theta_\zeta$ will be at most $\ell^*-4$ once like terms are canceled.  We must solve NPA to level $\ell^*$ so that the SDP has high enough degree constraints to ``observe'' the constraint $\upsilon$.

Let $\zeta$ be some monomial of degree $\ell^*-3=\lceil n/2 \rceil +1 $.  Consider the graph $G=(V, E)$ with vertex set $[n]$ and an edge between $i$ and $j$ exactly when $h_{ij}$ is one of the variables included in the monomial $\zeta$.  By Lemma $3.14$ of \cite{other_guys} $E$ must contain a subgraph which is a triangle, a three edge line, a three edge star or two connected components containing a two edge line.  Formally, it must be that for distinct $i, j, k, l, a, b, c$ a set of the form $\{ij, jk, ik\}$, $\{ij, jk, kl\}$, $\{ij, ik, il\}$ or $\{ij, jk, ab, bc\}$ is a subset of $E$. 

Let us suppose that $\{ij, jk, ab, bc\}\subset E$.  By adding constraints $\beta \eta \gamma$ where $\eta$ is of the form \Cref{eq:anti_comm} we can permute the variables appearing in $\zeta$ to any order that we choose ($\braket{\phi|\zeta}=\braket{\phi|\beta \eta \gamma+\zeta}= \braket{\phi|\zeta'}+\braket{\phi|\theta}$ for $\zeta'$ a monomial with two adjacent variables swapped and $\theta$ is some polynomial of degree $\ell^*-4$).  Hence we can derive $\braket{\phi|\zeta}=\braket{\phi| A h_{ij} h_{jk} h_{ab} h_{bc}} +\braket{\phi|\theta}$ where again $\theta$ is some polynomial of degree $\ell^*-4$ and $A$ is a product of exactly $\ell^*-7$ variables.  \Cref{cor:alg1} \Cref{item:alg5} implies the existence of a degree $7$ polynomial, $\tau \in U^\ell(\{\eta_j\})$, such that $h_{ij} h_{jk} h_{ab} h_{bc}+\tau=\chi$ for $\chi$ a degree $3$ polynomial.  So $\braket{\phi|A h_{ij} h_{jk} h_{ab} h_{bc}}=\braket{\phi|A\tau + A h_{ij} h_{jk} h_{ab} h_{bc}}=\braket{\phi|A \chi}$.  Hence $\braket{\phi|\zeta}=\braket{\phi| A\chi+\theta}$ for $deg(A\chi +\theta)=\max\{\ell^*-7+3, \ell^*-4\}=\ell^*-4$. 

The other cases are handled similarly.  Permute the variables so that the triangle, three edge line, or three edge star occurs at the end of $\zeta$ then use \Cref{cor:alg1} \Cref{item:alg3}, \Cref{item:alg1} and \Cref{item:alg4} respectively to reduce the overall degree.

\end{proof}

We note that the overall degree of the polynomial $A(h_{ij} h_{jk} h_{ab} h_{bc}+\tau)$ before cancellation in the proof is $\ell^*$.  This is the reason why we needed to go to level $\lceil n/2 \rceil+4$ rather than $\lceil n/2\rceil+1$ to obtain convergence.  The hierarchy of \cite{other_guys} is able to obtain convergence at level $\lceil n/2\rceil+1$ because of Grobner basis techniques.  We in fact have a computer assisted proof which implies that the projector hierarchy converges at $\lceil n/2 \rceil+3$ which is essentially the same proof except using the slightly stronger \Cref{cor:alg2}.  The proof is computer assisted in the sense that the expressions are high enough degree that we were not able to simplify them by hand. It is notable that the hierarchy described here has similar convergence guarantees to the Grobner basis hierarchy of \cite{other_guys} but is simpler we do not need to impose constraints arising from reducing polynomials via the Grobner basis. 

\begin{proposition}[Computer Assisted Proof]\label{prop:fin_conv2}
    Let $\ell^*=\lceil\frac{n}{2} \rceil+3$, then $NPA_{\ell^*}(\mathcal{Proj}(V, w)) = NPA_{\ell^*+1}(\mathcal{Proj}(V, w))$ 
    $ = NPA_{\ell^* + k}(\mathcal{Proj}(V, w))$ for any $k\in \mathbb{Z}_{\geq 0}$.
\end{proposition}

For many Hamiltonians of interest $NPA_\ell(\mathcal{Proj})$ converges with smaller SDP sizes than $NPA_\ell(\mathcal{Pauli})$.  
For example, most of the graphs we address in the following section (star graphs, complete (bipartite) graphs, crown graphs, etc.) will be exactly solved by $NPA_1(\mathcal{Perm})$ but not by $NPA_1(\mathcal{Pauli})$. 
The matrix size required for convergence then reads
$\binom{n}{2}3^2+3n+1$ and $\binom{n}{2}+1$ for $\mathcal{Pauli}$ and $\mathcal{Proj}$ respectively, where the latter is more efficient by an approximate factor of 9.

\section{Exact results on some families of graphs}\label{sec:exact}

Here we detail our results concerning the exactness/inexactness of $NPA_1(\mathcal{Proj})/SoS_1(\mathcal{Proj})$ on many interesting classes of {\scshape QMaxCut} Hamiltonians.  
%We focus on $NPA_1(\mathcal{Proj})/SoS_1(\mathcal{Proj})$ and demonstrate several proofs which show exactness, as well as some considerations from symmetry which prove inexactness for some family of graphs. %complete graphs with odd number of vertices. 
First of these classes is the positive weighted star graphs. The proof technique for this class involves reconstructing a quantum state with the exact same energy as the output of the $NPA_1(\mathcal{Proj})$ program.  A crucial component of this proof is a reinterpretation of ``monogamy of entanglement'' inequalities in terms of the possible angles between Gram vectors from $NPA_1(\mathcal{Proj})$. 
We show the constraints on these angles from the polynomial inequalities derived in \cite{par22opt} are actually saturated for the case of star graphs. This provides an interesting geometric perspective for monogamy of entanglement in the context of $NPA_1(\mathcal{Proj})$.%, and how the constraint from the monogamy essentially determines the ground state for the star graph case. 

Theoretically, sufficient conditions are known to verify whether the output value of the SDP at a particular level in our hierarchy is the true ground state energy or not. One of the conditions is a quantum generalization of the so-called classical flat extension condition~\cite{pir10con}. 
This test involves running a higher level SDP in the hierarchy to check that the moment matrix obtained at the higher level is a linear extension of the moment matrix from the lower level. 
If it is, then we can guarantee that the SDP output is equal to the true ground state energy.

The other proofs for exactness relies on SoS proofs, which we analytically construct. Since the SDP hierarchies defined in \Cref{sec:hierarchy} are relaxations of the Local Hamiltonian problem, it is sufficient to construct a feasible solution to $SoS_\ell$ which achieves the optimal eigenvalue as the objective. 
To state concretely, we will be utilizing the following theorem: 
\begin{theorem}\label{thm:duality}
The upper bound obtained by the $NPA_\ell(\mathcal{Proj})$ matches exactly with the maximum eigenvalue if and only if there exists a $SoS_\ell(\mathcal{Proj})$ that upper bounds the maximum eigenvalue tightly.
\end{theorem}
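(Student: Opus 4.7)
The plan is to read this statement as the strong-duality fact for the primal/dual pair $(NPA_\ell(\mathcal{Proj}), SoS_\ell(\mathcal{Proj}))$ combined with the relaxation property $NPA_\ell(\mathcal{Proj}) \geq QMC(V,w)$ noted in \Cref{sec:hierarchy}. I would split the argument into the two implications of the biconditional.

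For the easy direction, I would assume the existence of a feasible $SoS_\ell(\mathcal{Proj})$ certificate with value $\lambda^\star = QMC(V,w)$, i.e.\ a decomposition $\lambda^\star \mathbb{I} - \theta = \sum_i \psi_i^* \psi_i + \beta$ with $\beta \in U^\ell(\{\eta_k\})$. Given any feasible moment matrix $M_\ell$ for $NPA_\ell(\mathcal{Proj})$, the constraints \cref{npa_const_p_i} force $M_\ell(\beta) = 0$, while $M_\ell \succeq 0$ forces $M_\ell(\sum_i \psi_i^* \psi_i) \geq 0$. Linearity together with $M_\ell(\mathbb{I}) = 1$ then yields $M_\ell(\theta) \leq \lambda^\star$, hence $NPA_\ell(\mathcal{Proj}) \leq \lambda^\star = QMC(V,w)$. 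Combining with the already-noted relaxation bound $NPA_\ell(\mathcal{Proj}) \geq QMC(V,w)$ gives equality.

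For the converse direction, I would invoke SDP strong duality, which the excerpt has already asserted holds for the pair $(NPA_\ell, SoS_\ell)$. Strong duality gives
\begin{equation*}
SoS_\ell(\mathcal{Proj}) \;=\; NPA_\ell(\mathcal{Proj}) \;=\; QMC(V,w),
\end{equation*}
so the optimal value of the SoS program equals $QMC(V,w)$. To promote this to the \emph{existence} of a feasible SoS solution attaining the value, I would need attainment of the dual optimum. The standard route is to verify Slater's condition for the primal $NPA_\ell(\mathcal{Proj})$: I would exhibit a strictly feasible moment matrix by taking the moment matrix induced by a maximally mixed state on an appropriately sized representation of the $\SWAP$-algebra (so that \crefrange{eq:id_const}{npa_const_p_i} are met with a PD $M_\ell$ modulo the linear equality constraints). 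Archimedeanness of $\mathcal{Proj}$, already established in \Cref{sec:hierarchy}, ensures that all required relations can be simultaneously satisfied by such an interior point. Strict feasibility of the primal then guarantees attainment on the dual side by standard conic duality.

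The main obstacle will be this attainment step: bare strong duality only equates optimal values, whereas the statement as written asks for an actual feasible $SoS_\ell$ certificate. Careful construction of a strictly feasible moment matrix — or, alternatively, a compactness argument showing that the sublevel sets of the SoS program are closed and nonempty down to the optimal value — is the technical heart of the backward implication. The forward implication, by contrast, is essentially tautological once the $NPA$/$SoS$ pairing is written out.
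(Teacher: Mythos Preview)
Your proposal is correct and in fact more thorough than what the paper does. In the paper this theorem is stated without proof: it is presented as an immediate consequence of the earlier assertion (in \Cref{sec:hierarchy}) that ``the primal/dual pair $NPA_\ell/SoS_\ell$ satisfies strong duality, so it holds that $NPA_\ell(\mathcal{Proj})=SoS_\ell(\mathcal{Proj})$,'' together with the relaxation property $NPA_\ell(\mathcal{Proj})\geq QMC(V,w)$. The paper neither spells out the weak-duality computation for the forward direction nor addresses dual attainment for the converse.

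Your treatment improves on this in one substantive way: you correctly distinguish between equality of optimal values (strong duality) and the existence of an optimal dual solution (attainment), and you identify Slater's condition on the primal as the standard mechanism guaranteeing the latter. The paper simply elides this issue. Your suggested strictly feasible point---the moment matrix of the maximally mixed state---is the natural choice and works; one small caveat is that ``PD modulo the linear equality constraints'' should be read as strict feasibility in the appropriate face of the PSD cone after the equality constraints are imposed, which is indeed what Slater requires for equality-constrained SDPs.
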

Some results in the exactness proofs of other graphs will have some overlap with the first case of weighted star graph, but the explicitly constructive nature of SoS proof gives a complementary understanding of how the SDP algorithm obtains the exact solution. Finally, we discuss the sharp contrast in the SDP performance for complete graphs with even and odd number of vertices, which could be seen as a quantum version of the parity problem addressed in \cite{gri01lin}. 
Here, we prove cases where $NPA_1(\mathcal{Proj})$ is always {\it insufficient} to obtain the maximum-eigenvalue state. 

%For many interesting classes of {\scshape QMaxCut} Hamiltonians, it is possible to prove that $NPA_1(\mathcal{Proj})$ is sufficient to obtain the maximum-eigenvalue state exactly. In other cases, we can also prove that the $NPA_1(\mathcal{Proj}$ is always {\it insufficient} to obtain the maximum-eigenvalue state. 

%For the rest of the proofs, the main tool we will be using is the so-called {\it Sum of Squares} (SOS), which is dual to the Lasserre/NPA hierarchy \cite{lau09sum,pir10con}.

\subsection{Positive weighted star graph}
In this section we generalize the result of \cite{par21app} and prove that $NPA_1(\mathcal{Proj}(H))$ has optimal objective matching the extremal energy if the Hamiltonian is a positively weighted star.  Since $NPA_1^{\mathbb{R}}(\mathcal{Proj}(H))=NPA_1(\mathcal{Proj}(H))$ this implies also that $NPA_1(\mathcal{Proj})=\mu_{max}(H)$.  To our knowledge the first known proof of this statement is from unpublished personal correspondence \cite{per_comm}, however the proof we present here is simpler and has an intuitive geometric interpretation. 
The following theorem, proved in \cite{par22opt} about monogamy of entanglement on a triangle (three qubits $i, j$ and $k$), will be the starting point for our proof. 

\begin{theorem}[\protect{\cite[Lemma~7]{par22opt}}] \label{thm:MonogomyOfEntanglementOnTraingle}
    For any feasible moment matrix $M_1$ of $NPA_1(\mathcal{Proj})$, the following inequalities are true:
    %\begin{widetext}
    \begin{align}
        0 \leq M_1(\mathbb{I}, h_{ij})+ M_1(\mathbb{I}, h_{jk}) + M_1(\mathbb{I}, h_{ki}) &\leq 3/2 \\
        M_1(\mathbb{I},h_{ij})^2 + M_1(\mathbb{I},h_{jk})^2  +M_1(\mathbb{I},h_{ki})^2
        &\leq \nonumber\\
        2\Big[ M_1(\mathbb{I},h_{ij})M_1(\mathbb{I},h_{jk})~~~~~~~~~~~~~~&\nonumber\\
        +M_1(\mathbb{I},h_{jk}) M_1(\mathbb{I},h_{ki})+ M_1(\mathbb{I},h_{ki})M_1(\mathbb{I},&h_{ij})\Big].
        \label{eq:MonogomyOfEntanglementOnTraingle}
    \end{align}
    %\end{widetext}
\end{theorem}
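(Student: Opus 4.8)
The plan is to read all three inequalities off the positive-semidefiniteness of a single $4\times4$ principal submatrix of $M_1$. First I would restrict $M_1$ to the monomials $\mathbb{I},h_{ij},h_{jk},h_{ki}$ and determine every entry from the program constraints. Write $a:=M_1(\mathbb{I},h_{ij})$, $b:=M_1(\mathbb{I},h_{jk})$, $c:=M_1(\mathbb{I},h_{ki})$. The idempotency constraint \cref{eq:singproj1}, fed into \cref{npa_const_p_i} with $\beta=\gamma=\mathbb{I}$ and combined with the consistency condition \cref{concat_const}, forces $M_1(h_{ij},h_{ij})=M_1(h_{ij}^2)=a$, and likewise $M_1(h_{jk},h_{jk})=b$, $M_1(h_{ki},h_{ki})=c$; the anticommutation constraint \cref{eq:anticommproj}, together with Hermiticity of $M_1$, forces $2\operatorname{Re} M_1(h_{ij},h_{jk})=\tfrac12(a+b-c)$ and its two cyclic images. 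Since $NPA_1(\mathcal{Proj})=NPA_1^{\mathbb{R}}(\mathcal{Proj})$ — equivalently, since $\operatorname{Re}M_1$ is again feasible and does not change $a,b,c$ — I may take $M_1$ real, so the submatrix in question is
\begin{equation*}
M=\begin{pmatrix}
1 & a & b & c\\
a & a & \tfrac{a+b-c}{4} & \tfrac{a+c-b}{4}\\
b & \tfrac{a+b-c}{4} & b & \tfrac{b+c-a}{4}\\
c & \tfrac{a+c-b}{4} & \tfrac{b+c-a}{4} & c
\end{pmatrix}\succeq 0 .
\end{equation*}

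Next I would harvest the inequalities from three PSD witnesses. Nonnegativity of the diagonal gives $a,b,c\ge0$, hence $0\le a+b+c=:s$, which is the left inequality of the first line. For the bound $s\le 3/2$, I would compress $M$ to the span of $e_{\mathbb{I}}$ and $e_{h_{ij}}+e_{h_{jk}}+e_{h_{ki}}$: since the three off-diagonal entries of the $h$-block sum to $s/4$, this compression equals $\bigl(\begin{smallmatrix}1 & s\\ s & 3s/2\end{smallmatrix}\bigr)$, whose determinant $s(3/2-s)$ must be nonnegative, and with $s\ge0$ this gives $s\le 3/2$. For \cref{eq:MonogomyOfEntanglementOnTraingle}, I would use the $3\times3$ principal submatrix $N$ on $\{h_{ij},h_{jk},h_{ki}\}$, which is PSD and hence has $\det N\ge0$. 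Substituting $p=\tfrac{a+b-c}{4}$, $q=\tfrac{b+c-a}{4}$, $r=\tfrac{a+c-b}{4}$ (so that $a=2(p+r)$, $b=2(p+q)$, $c=2(q+r)$) symmetrizes $N$, and collecting terms yields the factorization $\det N=6(p+q+r)(pq+qr+rp)=\tfrac{3s}{32}\bigl[2(ab+bc+ca)-(a^2+b^2+c^2)\bigr]$. Since $s\ge0$, this forces $a^2+b^2+c^2\le 2(ab+bc+ca)$; the boundary case $s=0$ forces $a=b=c=0$, for which the inequality is trivial.

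I do not expect a real obstacle, since every step is a finite explicit linear-algebra computation; the only places needing care are checking that passing to $\operatorname{Re}M_1$ preserves feasibility (true because at level $1$ every $\mathcal{Proj}$ constraint appears only with $\beta=\gamma=\mathbb{I}$, and its polynomial is Hermitian, or skew-Hermitian with right-hand side zero) and verifying the determinant factorization, which is routine after the substitution above. A more conceptual alternative would be to fix a Gram representation $g,v_1,v_2,v_3$ of the vectors underlying $M$ (with $\lVert g\rVert^2=1$ and $\langle g,v_m\rangle=\lVert v_m\rVert^2$ coming from idempotency) and rephrase the three inequalities as constraints on the pairwise angles of the $v_m$ relative to $g$ — this is precisely the ``monogamy of entanglement'' picture, and is presumably how \cite{par22opt} motivates the statement — but the determinant computation above is the shortest route to a self-contained proof.
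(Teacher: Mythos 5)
Your proof is correct, and it takes a genuinely different route from the paper: the paper does not prove this statement at all, but imports it as Lemma~7 of the cited work (where it is phrased for SWAP operators) and merely notes the translation $h_{ij}=(\mathbb{I}-p_{ij})/2$. Your argument is self-contained at the level of the level-$1$ moment matrix: the entries of the $4\times 4$ block are pinned down exactly as you say (idempotency gives the diagonal $a,b,c$; the anticommutation constraint fixes $2\operatorname{Re}M_1(h_{ij},h_{jk})=\tfrac12(a+b-c)$ and its cyclic images; passing to $\operatorname{Re}M_1$ is harmless since at level $1$ the only constraint deformations have $\beta=\gamma=\mathbb{I}$ and Hermitian constraint polynomials, which is the same observation the paper uses to show $NPA_1(\mathcal{Proj})=NPA_1^{\mathbb{R}}(\mathcal{Proj})$). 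I checked the three PSD witnesses: the diagonal gives $a,b,c\geq 0$; the compression onto $e_{\mathbb{I}}$ and $e_{h_{ij}}+e_{h_{jk}}+e_{h_{ki}}$ is indeed $\bigl(\begin{smallmatrix}1 & s\\ s & 3s/2\end{smallmatrix}\bigr)$, giving $s\leq 3/2$; and the determinant identity $\det N = 6(p+q+r)(pq+qr+rp)=\tfrac{3s}{32}\bigl[2(ab+bc+ca)-(a^2+b^2+c^2)\bigr]$ is correct (it is the classical identity $xy+yz+zx=2(ab+bc+ca)-(a^2+b^2+c^2)$ for $x=a+b-c$ etc.), and you handle the degenerate case $s=0$ properly. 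What your route buys is a proof that the monogamy inequalities are consequences of level-$1$ feasibility alone, with an explicit principal-minor certificate, so the reader need not chase the reference or redo the SWAP-to-projector dictionary; what the paper's route buys is brevity and a direct link to the known monogamy-of-entanglement result, whose Gram-vector/angle reading the paper then exploits in its Lemma on the $60^\circ$--$120^\circ$ bound (your concluding remark about angles is exactly that reformulation).
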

Note that in \cite{par22opt}, the variables are defined by swap operators (\cref{eq:swap_def} in this work), and the above form could be derived by simply using the relation between swap operators and singlet projectors $h_{ij} = (1-p_{ij})/2$.

\begin{lemma} \label{thm:60DegreeAngle}
    For any feasible moment matrix $M_1$ of $NPA_1(\mathcal{Proj})$, indexed by $\{I, h_{ij} \text{ where }i,j \in \{0,1,\ldots,n\} \text{ and } i < j\}$, the angle between any two normalized Gram vectors of indices sharing one vertex, i.e. $h_{ij}$ and $h_{jk}$ where $i,j,k$ are all distinct, is no less than $60^\circ$ and no greater than $120^\circ$.
\end{lemma}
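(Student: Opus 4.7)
The plan is to translate the angle statement into an algebraic inequality on the diagonal entries $a_{ij}:=M_1(\mathbb{I},h_{ij})$ of the moment matrix, and then observe that this inequality is exactly the monogamy‐of‐entanglement bound from \Cref{thm:MonogomyOfEntanglementOnTraingle}. Concretely, since $M_1\succeq 0$ it admits a Gram factorisation with vectors $\{v_\phi\}_{\phi\in\Gamma_1}$, and the idempotency constraint \cref{eq:singproj1} gives
$\|v_{h_{ij}}\|^2 = M_1(h_{ij},h_{ij}) = M_1(h_{ij}^2)= M_1(h_{ij}) = a_{ij}.$
Because $NPA_1(\mathcal{Proj})=NPA_1^{\mathbb R}(\mathcal{Proj})$, we may work with a real symmetric $M_1$, so the Gram inner product of two vectors sharing the vertex $j$ is real and by the anticommutation relation \cref{eq:anticommproj}
\[
\langle v_{h_{ij}}, v_{h_{jk}}\rangle = M_1(h_{ij}h_{jk}) = \tfrac{1}{2}M_1(h_{ij}h_{jk}+h_{jk}h_{ij}) = \tfrac{1}{4}\bigl(a_{ij}+a_{jk}-a_{ik}\bigr).
\]

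The cosine of the angle $\theta$ between the normalised Gram vectors is therefore
\[
\cos\theta \;=\; \frac{a_{ij}+a_{jk}-a_{ik}}{4\sqrt{a_{ij}a_{jk}}},
\]
and the claim $60^\circ\le\theta\le120^\circ$ is equivalent to $-\tfrac12\le\cos\theta\le\tfrac12$. Clearing the square root, this rearranges to the two-sided bracket
\[
\bigl(\sqrt{a_{ij}}-\sqrt{a_{jk}}\bigr)^{2} \;\le\; a_{ik} \;\le\; \bigl(\sqrt{a_{ij}}+\sqrt{a_{jk}}\bigr)^{2}.
\]

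To close the argument I would invoke \Cref{thm:MonogomyOfEntanglementOnTraingle}: the inequality $a_{ij}^2+a_{jk}^2+a_{ik}^2\le 2(a_{ij}a_{jk}+a_{jk}a_{ik}+a_{ik}a_{ij})$, read as a quadratic in $a_{ik}$ with $a_{ij},a_{jk}$ fixed, has discriminant $16\,a_{ij}a_{jk}$ and thus roots precisely $(\sqrt{a_{ij}}\pm\sqrt{a_{jk}})^{2}$. The concave quadratic in $a_{ik}$ is therefore nonpositive exactly on the interval bracketing above, so \Cref{thm:MonogomyOfEntanglementOnTraingle} yields the desired bounds on $\cos\theta$.

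The whole argument is essentially a one‑line consequence of \Cref{thm:MonogomyOfEntanglementOnTraingle}; there is no serious obstacle. The only point that requires care is the reality of the inner product $\langle v_{h_{ij}}, v_{h_{jk}}\rangle$, which is what lets us identify $\langle v_{h_{ij}}, v_{h_{jk}}\rangle$ with the symmetric combination $\tfrac12 M_1(h_{ij}h_{jk}+h_{jk}h_{ij})$ and apply the anticommutation relation cleanly. This is justified by the earlier observation that the real and complex level‑$1$ relaxations coincide for $\mathcal{Proj}$, so one may pick a Gram factorisation over $\mathbb{R}$ and the argument goes through verbatim.
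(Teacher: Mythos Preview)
Your argument is correct and follows essentially the same route as the paper: both use idempotency to get $\|v_{h_{ij}}\|^2=a_{ij}$, the anticommutation relation to get $\langle v_{h_{ij}},v_{h_{jk}}\rangle=(a_{ij}+a_{jk}-a_{ik})/4$, and then reduce $|\cos\theta|\le\tfrac12$ to the monogamy inequality of \Cref{thm:MonogomyOfEntanglementOnTraingle} (the paper completes the square directly to $(a_{ij}+a_{jk}-a_{ik})^2\le 4a_{ij}a_{jk}$, which is the same computation as your quadratic-in-$a_{ik}$ root analysis). One tiny slip: the quadratic $a_{ik}^2-2(a_{ij}+a_{jk})a_{ik}+(a_{ij}-a_{jk})^2$ is \emph{convex}, so it is nonpositive \emph{between} its roots---your conclusion is right but the word ``concave'' should be ``convex'' (or ``nonpositive'' should be ``nonnegative'').
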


\begin{proof}
    Let $\ket{\mathbb{I}}$, $\ket{h_{ij}}$ be the Gram vectors of $M_1$ corresponding to indices $\mathbb{I}$, $h_{ij}$ for all $i<j$, i.e.,  
    \begin{align}
        M_1(\mathbb{I},h_{ij}) = \braket{\mathbb{I}|h_{ij}}, M_1(h_{ij},h_{kl}) = \braket{h_{ij}|h_{kl}}.
    \end{align} 
    Now recall that we have the following constraints on $M_1$: whenever $AB = CD$ where $A,B,C,D$ are all degree-1 polynomials in singlet projectors, $M_1(A,B) = M_1(C,D)$. From this, $h_{ij}^2 = h_{ij}$ implies that 
    \begin{align}
        M_1(h_{ij},h_{ij}) = M_1(\mathbb{I},h_{ij}).
    \end{align}
    Similarly, the anti-commutation relation for singlet projectors \cref{eq:anticommproj} implies that 
    \begin{align}
        4M_1(h_{ij}, h_{jk}) = M_1(\mathbb{I}, h_{ij}) + M_1(\mathbb{I}, h_{jk}) - M_1(\mathbb{I}, h_{ki}).
    \end{align}
    Starting from \cref{eq:MonogomyOfEntanglementOnTraingle}, we can derive the following.
    \begin{widetext}
    \begin{align}
        \big[M_1(\mathbb{I},h_{ij}) + M_1(\mathbb{I},h_{jk})  -M_1(\mathbb{I},h_{ki})\big]^2 &\leq 4M_1(\mathbb{I},h_{ij})M_1(\mathbb{I},h_{jk}) \\
        \iff \hspace{34mm} 16 M_1(h_{ij}, h_{jk})^2 &\leq 4M_1(\mathbb{I},h_{ij})M_1(\mathbb{I},h_{jk}) \\
        \iff \hspace{43mm} \braket{h_{ij}|h_{jk}}^2 &\leq 4 \braket{\mathbb{I}|h_{ij}} \braket{\mathbb{I}|h_{jk}} = \braket{h_{ij}|h_{ij}} \braket{h_{jk}|h_{jk}} \\
        \iff \hspace{9mm} | \braket{h_{ij}|h_{jk}}|/ \sqrt{\braket{h_{ij}|h_{ij}} \braket{h_{jk}|h_{jk}}} &\leq 1/2.
        \label{eq:derivation60}
    \end{align}
    \end{widetext}
    \Cref{eq:derivation60} implies that the angle between the vectors $\ket{h_{ij}}$ and $\ket{h_{jk}}$ must be between $60^\circ$ and $120^\circ$.
\end{proof}

\begin{theorem}\label{thm:weightedstar}
    The first level of the NPA hierarchy with $\mathcal{Proj}$ solves {\scshape QMaxCut} exactly for any positively weighted star graphs, i.e.,  
    $NPA_1(\mathcal{Proj}(H))=QMC(H)=\mu_{max}(H)$ for

    \begin{align}\label{eq:starham}
    H = \sum_{i=1}^{n} w_i h_{0i},\quad w_i > 0\ ~~\forall i .
    \end{align}
\end{theorem}

   \begin{comment}
    Level-$1$ $\mathcal{Proj}$ program is exact in finding the ground state energy of the {\scshape QMaxCut} Hamiltonian over a star graph with positive weights
    \end{comment}

\begin{comment}
\knote{I have never been in this situation before so Im not sure what the proper thing to do is, but we might need to cite personal communication from John Wright as the original proof and then say we have an alternative proof.}
\cnote{I agree. We should mention that we were aware of SOS proof by john wright before developing our proof, and maybe we can emphasis that our proof is significantly simpler and has the potential to generalize to higher levels.}
\end{comment}

\begin{proof}
    Recall that the moment matrix $M_1$ of $NPA_1(\mathcal{Proj})$ program is indexed by $\{\mathbb{I}, h_{ij} \text{ where }i,j \in \{0,1,\ldots,n\} \text{ and } i < j\}$. Let $\ket{\widetilde{\mathbb{I}}}$, $\ket{\widetilde{h_{ij}}}$ be the normalized vectors corresponding to $\ket{\mathbb{I}}$ and $\ket{h_{ij}}$. Restating \Cref{thm:60DegreeAngle}, for any feasible moment matrix $M_1$, the angle between any two normalized vectors of singlet projector indices sharing one vertex ($\ket{\widetilde{h_{ij}}}$ and $\ket{\widetilde{h_{jk}}}$ where $i,j,k$ are all distinct) is no less than $60^\circ$ and no greater than $120^\circ$, i.e.,
    \begin{align}
        \left| \braket{\widetilde{h_{ij}}|\widetilde{h_{jk}}}\right| \leq \frac{1}{2}.
    \end{align}
    The rest of the proof involves showing that when the objective function is a Hamiltonian of the form \cref{eq:starham},
    for the optimal solution, the above inequality saturates for any two normalized vectors with distinct indices from $\left\{h_{0i}\right\}_{i=1}^{n}$
    \begin{align}
        \braket{\widetilde{h_{0i}}|\widetilde{h_{0j}}} = \frac{1}{2} \quad \forall\ 0< i < j \leq n.
        \label{eq: sixty_angle_property}
    \end{align}
    When this equality holds, we can construct an actual quantum state that has the same energy as the objective value of $NPA_1(\mathcal{Proj})$ program. 
    We do this by mapping each of the normalized vectors $\ket{\widetilde{h_{0i}}}$ to a state that has a singlet between $0^\text{th}$ and $i^\text{th}$ qubits, i.e., $(\ket{0}_0\otimes\ket{1}_i - \ket{1}_0\otimes \ket{0}_i)/\sqrt{2}$, and the rest of the qubits forming a maximal total spin state, e.g., all qubits in the spin-up state. 
    This mapping preserves the property \cref{eq: sixty_angle_property} and also determines the normalized vectors corresponding to other indices $h_{ij}$ to be $\ket{\widetilde{h_{ij}}} = \pm (\ket{\widetilde{h_{0i}}} - \ket{\widetilde{h_{0j}}})$ for  $0< i < j$ where the positive or negative sign in the front depending on whether $M_1(\mathbb{I}, h_{0i})$ is greater or less than $M_1(\mathbb{I}, h_{0j})$ respectively.
    Furthermore, when \cref{eq: sixty_angle_property} is satisfied, the $\ket{\widetilde{\mathbb{I}}}$ that maximizes the objective function will also be in the span of $\{\ket{\widetilde{h_{0i}}}\}_{i=1}^{n}$. 
    Then, since the output of the $NPA_1(\mathcal{Proj})$ program is an upper bound on the maximum eigenvalue of the Hamiltonian, this implies that the output of $NPA_1(\mathcal{Proj})$ is exact.

    Let $A$ be a matrix where the $i^\text{th}$ row of the matrix is the weighted vector $\sqrt{w_i}\bra{\widetilde{h_{0i}}}$. %in some fixed chosen basis. 
    Without loss of generality, we can assume that $A$ is of size $n \times n$.
    Let $A = PU$ be its polar decomposition where $P$ is a positive semi-definite matrix and $U$ is an orthogonal matrix.
    We can rewrite the objective function as the following:
    \begin{align}
        M_1(\mathbb{I}, H) = &\sum_{i=1}^{n} w_i M_1(\mathbb{I}, h_{0i}) \nonumber \\
        = &\sum_{i=1}^{n} w_i \braket{\mathbb{I}|h_{0i}} = \sum_{i=1}^{n} w_i \braket{\mathbb{I}|h_{0i}}^2/\braket{\mathbb{I}|h_{0i}}\nonumber \\
        = &\sum_{i=1}^{n} w_i \braket{\mathbb{I}|h_{0i}}^2/\braket{h_{0i}|h_{0i}} = \sum_{i=1}^{n} w_i \braket{\widetilde{\mathbb{I}}|\widetilde{h_{0i}}}^2\nonumber\\
        = &\braket{\widetilde{\mathbb{I}}|A^{T}A|\widetilde{\mathbb{I}}} = \braket{\widetilde{\mathbb{I}}|U^T P^2 U|\widetilde{\mathbb{I}}}.
    \end{align}
    Since we want to maximize the objective function, its value cannot be greater than the maximum eigenvalue of $P^2$, and it is equal to the maximum eigenvalue when $U\ket{\widetilde{\mathbb{I}}}$ is the maximum eigenvector of $P^2$.
    
    The set of constraints that $\left|\braket{\widetilde{h_{0i}}|\widetilde{h_{0j}}} \right| \leq \frac{1}{2}$, where $i\neq j$, can be written as $\left|(AA^{T})_{ij} \right| = \left|(P^2)_{ij} \right| \leq \frac{1}{2}\sqrt{w_i w_j}$, where $i\neq j$, and the $ij$ subscript indicates that it's the $i^{\text{th}}$ row  $j^{\text{th}}$ column element of the particular matrix. 
    The $\ket{\widetilde{h_{ij}}}$ vectors being normalised implies $(AA^T)_{ii} = (P^2)_{ii} = w_i$ for $i \in \{1,2,...,n\}$. Given these constraints on $P^2$, the maximum eigenvalue of $P^2$ is maximized when $(P^2)_{ij} = \frac{1}{2}\sqrt{w_i w_j}$ . To see this, consider $P^2$ with its maximum eigenvector $\ket{v}$ where some of the matrix elements of $P^2$ are negative. Let ${abs}(P^2)$ and $\ket{{abs}(v)}$ be the matrix where we take the element-wise absolute value of the matrix and the vectors. It is easy see that $\braket{{abs}(v)|{abs}(P^2)|{abs}(v)} \geq \braket{v|P^2|v}$, which implies that the maximum eigenvalue of $abs(P^2) \geq $ maximum eigenvalue of $P^2$. When all the elements of a matrix are non-negative, Perron-Frobenius theorem implies that the maximum eigenvalue is a non-decreasing function of each of the individual matrix elements and strictly increasing in the case of irreducible matrix thus implying that the optimal $P^2$ has $(P^2)_{ij} = \frac{1}{2}\sqrt{w_i w_j}$. The Gram vectors of this optimal $P^2$ are precisely $\{\sqrt{w_i} \ket{\widetilde{h_{0i}}}\}_{i=1}^{n}$ which satisfy the property $\braket{\widetilde{h_{0i}}|\widetilde{h_{0j}}} = \frac{1}{2}\,\, \forall i\neq j$. For the optimal $P^2$, the maximum eigenvector $U\ket{\widetilde{\mathbb{I}}}$ is also in the span of the vectors $\{\ket{\widetilde{h_{0i}}}\}_{i=1}^{n}$ and so is $\ket{\widetilde{\mathbb{I}}}$ since the dimension of subspace formed by the span of $\{\ket{\widetilde{h_{0i}}}\}_{i=1}^{n}$ is $n$.
\end{proof}

\subsection{Complete bipartite graphs and some extensions}\label{subsec:compbip}

\begin{figure*}[t]
    \centering
    \includegraphics[width=17cm]{./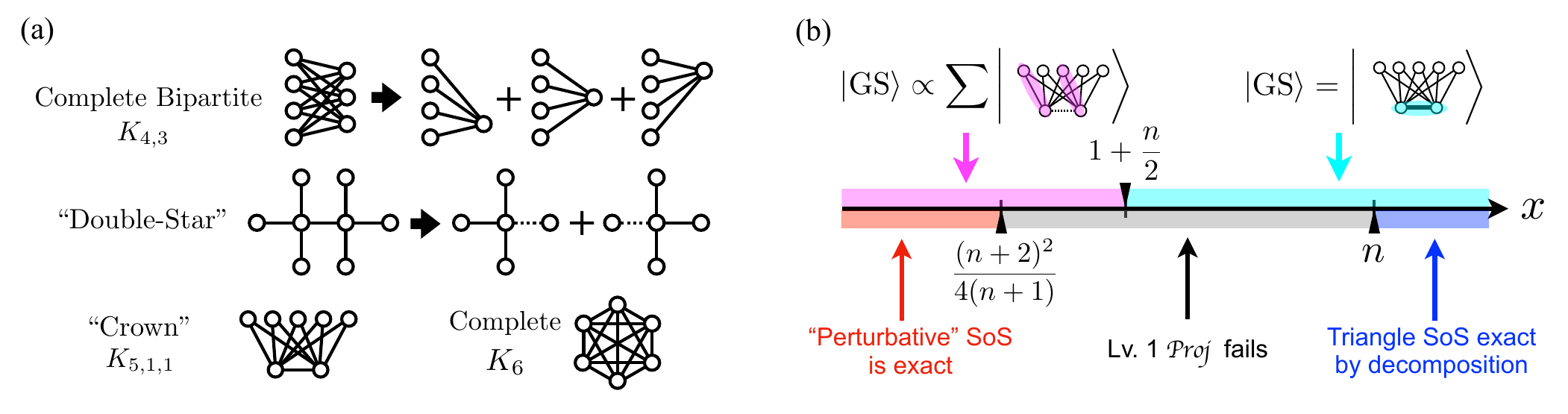}
    \caption{(a) Illustration of graphs where we provide exact $SoS$ in this section and how some of them could be decomposed. (b) The ``phase diagram" of the crown graph with $n\geq 3$, where the true ground state is illustrated on the top, and regions where $SoS_1(\mathcal{Proj})$ gives exact solutions are shown on the bottom (colored in dark red and blue). }
    \label{fig:SmallGraphs}
\end{figure*}

In this section, we show explicit $SoS_1(\mathcal{Proj})$ proofs for several family of graphs (shown in \cref{fig:SmallGraphs} (a)). 
Such exact SoSs provide a rigorous proof that $NPA_1(\mathcal{Proj})$ obtains the exact solution of the corresponding {\scshape QMaxCut} by \cref{thm:duality}. Since the exact SoS we provide here will eventually relate to the frustration-free concept in condensed matter physics (\cref{subsec:cmp}), we will be using the term ``ground state'' even though we are considering the maximum eigenstate. This of course corresponds to the lowest energy state for the Hamiltonian with the opposite sign (i.e., the antiferromagnetic Heisenberg model).

An important technique for demonstrating exact SoS proofs is the decomposition of graphs into smaller graphs leading to a decomposition of the SoS proof into smaller SoS proofs (schematically shown in the figure).  The simplest example of such decomposition arises when considering the SoS for the complete bipartite graph, which decomposes into several uniformly-weighted star graphs $H=\sum_{i=1}^n h_{0i}$. 
The weighted star graph can be solved exactly by $NPA_1(\mathcal{Proj})$ as shown in the previous section, however,  the explicit $\SoS$ cannot be analytically written down in general. The unweighted case however, gives us the simplest case of an exact $\SoS$ : 
\begin{align}\label{eq:StarSOS}
   \left( \sqrt{\frac{n+1}{2}}\mathbb{I} - \sqrt{\frac{2}{n+1}}\sum_{i=1}^{n} h_{0i} \right)^{\bf 2}
   + \frac{1}{n+1}\sum_{1\leq j<k\leq n} h_{jk}^{\bf 2} \nonumber\\
   = \frac{n+1}{2}\mathbb{I} -\sum_{i=1}^{n} h_{0i} 
   = \lambda\mathbb{I} -H.
\end{align}
The simplest way to see that this SoS is indeed exact (i.e., holds with the true ground state energy $\lambda=(n+1)/2$), is to simply compare with the true ground state energy calculation. 
For this Hamiltonian, the ground state is the uniform superpositon of all one-singlet states on top of the edges satisfying $h_{0i}$. All other spins should be pointing in one fixed direction (and thus has $n$-fold degeneracy).

The above $\SoS$ equation could be interpreted as an upper bound for the maximum eigenvalue of $H$. The reasoning is as the following. Since the left hand side is a {\it sum of squares}, it implies that the right hand side is positive semidefinite, i.e., $0\preceq\lambda\mathbbm{I}-H$. 
By reordering, we get $\lambda\mathbbm{I}\succeq H$, which upper bounds the eigenvalue of $H$. 
For this particular case, 
the bound we obtain matches exactly to the actual maximum eigenvalue for the uniform star graph with $n$ edges ($n+1$ qubits in total), thus providing a proper proof of the fact that the above described state is indeed the ground state. 
Note that \cref{eq:StarSOS} could be confirmed straightforwardly by using the anticommutation relation \cref{eq:anticommproj}, and does not require computation of exponentially large matrices. 
Furthermore, by applying the maximum eigenvalue state $|\mathrm{GS}\rangle$ from the left and right to \cref{eq:StarSOS}, 
we obtain 
\begin{align}\label{eq:NullSpace}
   &\langle \mathrm{GS} | \left( \sqrt{\frac{n+1}{2}}\mathbb{I} - \sqrt{\frac{2}{n+1}}\sum_{i=1}^{n} h_{0i} \right)^2 | \mathrm{GS}\rangle \nonumber \\
   &~~~~~~~~~~~~~~~~~~~~~~+ 
   \frac{1}{n+1}\sum_{1\leq j<k\leq n}  \langle \mathrm{GS} | h_{jk}^2 | \mathrm{GS} \rangle \nonumber\\
   &= \frac{n+1}{2} -  \langle \mathrm{GS} | \sum_{i=1}^{n} h_{0i}  | \mathrm{GS}\rangle ~~\mathbf{= 0}, 
\end{align}
since the value $(n+1)/2$ is exactly the maximum eigenvalue of $H=\sum h_{0i}$.
We can see that all the terms appearing as the square on the left hand side must have expectation value 0 for the ground state $|\mathrm{GS}\rangle$, i.e., 
\begin{align}
    \langle \mathrm{GS} | \left( \sqrt{\frac{n+1}{2}}\mathbb{I} - \sqrt{\frac{2}{n+1}}\sum_{i=1}^{n} h_{0i} \right) | \mathrm{GS}\rangle =& 0 \\
    ~\forall~ 1\leq j<k \leq n,~~~~~\langle \mathrm{GS} | h_{jk} | \mathrm{GS} \rangle=&0,
\end{align}
because they are all positive-semidefinite from being squared, and that is the only way to achieve the value 0 after the summation.
In general, any {\it exact} SoS decomposition will always correspond to a list of null spaces of the ground state $|\mathrm{GS}\rangle$ for this reason \footnote{This could be seen as the SoS way of showing that all spins on the same sublattice in a perfect N{\'e}el state should be forming a maximal spin state that has 0 singlet density, a well known fact in condensed matter physics as the Marshall-Lieb-Mattis theorem.}.

Now let us consider the complete bipartite graph with $n+m$ vertices ($n\geq m$). 
The Hamiltonian could be written as
\begin{equation}\label{eq:uniformstar}
    H=\sum_{i\in A, j\in B}h_{ij}
\end{equation}
where we assume that the vertices are divided into two groups $A$ and $B$, with the edge set being $E=\{(i,j)|i\in A, j\in B\}$ and $|A|=n, |B|=m$. 
To our advantage, we can reuse the uniform star SoS because of the decomposition property as follows: 
The maximum eigenvalue of $H$ on $K_{n,m}$ is exactly the same as that of
$K_{n,1}$ (i.e., a star graph with $n$ leaves) multiplied by $m$. 
In other words, 
\begin{equation}\label{eq:CBhamdecomp}
    H = \sum_{i\in B} ~ \biggl( \sum_{j\in A} h_{ij}\biggr)  
    ~~~\text{and}~~~
%    \| H \| = \sum_{i\in B} ~ \biggl\lVert \sum_{j\in A} h_{ij} \biggr\rVert
    \mu_{\max}( H ) = \sum_{i\in B} ~ \mu_{\max} \left( \sum_{j\in A} h_{ij} \right)
\end{equation}
holds simultaneously. 
To prove the second equation, it suffices to construct a SoS decomposition with energy $E$ and also verify that there indeed exists such state with energy $E$. 
In this case, the maximum eigenvalue is achieved by the uniform superposition of all possible $m$-singlet configurations that connect the $A$ and $B$ sublattices without overlaps, and all $n-m$ unpaired spins pointing in the same direction (say, up).
%Note that this relation only holds in one direction for $m<n$. 

Since we already have \cref{eq:StarSOS}, it is rather easy to confirm that
\begin{align}\label{eq:CompBipSOS}
   \frac{2}{n+1} \sum_{i\in B}\Bigl(\frac{n+1}{2}\mathbb{I}-\sum_{j\in A}h_{ij}\Bigr)^{\bf 2} &+
   \frac{m}{n+1}\sum_{j<k\in A}h_{jk}^{\bf 2} \nonumber\\
  ~=~&\frac{m(n+1)}{2}\mathbb{I}-H,
\end{align}
gives the exact energy for complete bipartite graphs $K_{n,m}$.
Note that this exact SoS is really just $m$ copies of the SoS for the uniform star graph $K_{n,1}$. 
This was possible because, the Hamiltonian itself could be viewed as comprising $m$ copies of $n$-leaved star graphs. The observation here is that if a ``proper decomposition'' of the Hamiltonian such as \cref{eq:CBhamdecomp} exists, the task of finding the exact SoS for the total Hamiltonian is reduced to that of the smaller Hamiltonian in general.

The complete bipartite graph considered here are known as the Lieb-Mattis {\it model} in condensed matter physics \cite{lie62ord,lou19exa}, where the full energy spectrum is well-understood. The Lieb-Mattis {\it theorem} states that Heisenberg models with bipartite graphs (with sublattices $A$ and $B$) have ground states with total spin $\left( |A|-|B|\right) /2$, using the complete bipartite case as a starting point of the proof. 
The $\SoS$ we have here for complete bipartite graphs immediately tells you that the ``singlet density" $\langle h_{ij}\rangle$ among the same sublattice sites will always be 0, just like in the case we have mentioned for the star graph. 
This means that the two sublattices are forming the maximum total spin state, which is equivalent to the claim of the Lieb-Mattis theorem. 
We could say that our SoS is an alternative proof for the Lieb-Mattis theorem, restricted to the case of complete bipartite graphs with uniform weights.

\subsubsection{Crown Graphs}\label{subsubsec:crown}
Graphs with one additional edge to $K_{n,2}$ ($n\geq2$) connecting the two vertices of the B-sublattice 
(i.e. a complete tripartite graph $K_{n,1,1}$) also admits an exact $\SoS$ and thus $NPA_1(\mathcal{Proj})$ obtains the exact maximum eigenvalue as the upper bound.
These graphs, which we call the ``crown" graph (\cref{fig:SmallGraphs} (a)), have maximum eigenvalue $n+1$, the same value for the $K_{n,2}$ complete bipartite graphs. The additional edge does not change the maximum eigenvalue nor the maximum eigenvalue state itself. 

We can modify the SoS in \cref{eq:CompBipSOS} so that the Hamiltonian now includes the one additional edge on the right hand side. 
If we label the two vertices in the B-sublattice to be $a$ and $b$, then the $\SoS$ reads
\begin{align}\label{eq:CrownSOS}
   \frac{2}{n+1} \sum_{k=a,b}\Bigl(\frac{n+1}{2}\mathbb{I}-\sum_{i\in A}h_{ik} -\frac{2+n\pm n}{4}h_{ab}
   \Bigr)^2 \nonumber\\
   + \frac{2}{n+1}\sum_{i<j\in A}h_{ij}^2
   ~=~
   (n+1)\mathbb{I}-H, 
\end{align}
where there is a degree of freedom for the coefficient of $h_{ab}$, coming from two solutions of a quadratic equation. 
We can see that this SoS must also be exact, since it gives the same upper bound value $n+1$ as in the previous exact SoS for $K_{n,2}$ despite having an additional edge in $H$.

The fact that the only difference between this SoS and \cref{eq:CompBipSOS} is the $h_{ab}$ term leads us to the question if this form of SoS is general in some sense. 
Indeed, as it turns out, we can consider a crown graph with the term $h_{ab}$ being weighted with weight $x$, and the above form of the SoS is exact for the entirety of $x\leq (n+2)^2/4(n+1)$. 
The precise $\SoS$ becomes 
\begin{widetext}
\begin{equation}
\hspace{-3mm}
   (n+1)\mathbb{I}-\biggl(\sum_{k=a,b}\sum_{i\in A}h_{ik}+x h_{ab}\biggr)%\nonumber\\
   =
   \frac{2}{n+1} \sum_{k=a,b}\left(\frac{n+1}{2}\mathbb{I}-\sum_{i\in A}h_{ik} -\frac{2+n\pm \sqrt{(n+2)^2-4(n+1)x}}{4}h_{ab}
   \right)^2 +
   \frac{2}{n+1}\sum_{i<j\in A}h_{ij}^2
   , \label{eq:WeightedCrownSOS}
\end{equation}
\end{widetext}
which has a real solution only when $x\leq (n+2)^2/4(n+1)$. 

We can regard this SoS to be heuristically constructed in two steps. First, the case corresponding to $x=0$ was decomposable as in \cref{eq:CBhamdecomp}, yielding an SoS that retains the symmetry of the graph ($\mathbb{Z}_2$ between $a$ and $b$, and $\mathcal{S}_n$ for the A-sublattice sites). 
Next, when another edge is added also in a symmetry-preserving way, we can have an ansatz for the SoS that also still preserves the symmetry but now also includes the additional term. 
In this sense, the above SoS could be thought of as a ``perturbative" SoS from the complete-bipartite case, because if we gradually increase $x$ from 0, the SoS also can be changed continuously, always being exact. Since $1<(n+2)^2/4(n+1)$, the uniformly weighted crown graph is also exactly solvable, and we can say that the $\SoS$ for the complete bipartite graph and the crown graph are {\it adiabatically connected}. 
Intuitively, when $x$ is small enough, the ``physics" should not change a lot from the $x=0$ case, and in this case we can show that the ``radius of convergence" extends to $x \leq (n+2)^2/4(n+1)$, including $x=1$. 

The fact that the ansatz fails alone does not necessarily imply that no exact SoS exist, but it does suggest that even {\it if} such $\SoS$ exist, it will look very different from the SoS in the $x\leq(n+2)^2/4(n+1)$ region. 
As a matter of fact, we numerically observe that $NPA_1(\mathcal{Proj})$ starts to have nonzero error exactly from $x=(n+2)^2/4(n+1)$, implying that such a SoS proof indeed does not exist.  

Conversely, when we increase $x$ large enough, $NPA_1(\mathcal{Proj})$ starts to obtain the exact ground state energy again starting from $x \geq n$. Intuitively, in the $x\rightarrow\infty$ limit, the ground state should trivially become a state where there is simply one singlet placed for $h_{ab}$, and it seems natural for an SDP algorithm to be able to obtain such a simple state exactly. 
This intuition could be made rigorous by noticing that when $x\geq n$, the Hamiltonian regains the decomposition property, but now into triangles: 
\begin{align}\label{eq:crowndecomp}
H&=    \sum_{i\in A} ~~\biggl( h_{ia}+h_{ib} + \frac{x}{n} h_{ab} \biggr), \nonumber \\
\mu_{\max}(H) &=    \sum_{i\in A} \mu_{\max}\left( h_{ia}+h_{ib} + \frac{x}{n} h_{ab} \right) .%\text{   when } x\geq n.
\end{align}
Since a triangle with weight $(1,1,\alpha)$ has the exact $\SoS$ of
%\begin{widetext}
    \begin{align}\label{eq:StrongTriangleSoS}
    &\left(\alpha+\frac{1}{2}\right)\mathbb{I} - (h_{12}+h_{23}+\alpha h_{13}) 
    = \left(\alpha+\frac{1}{2}\right)\nonumber\\
    \times&\biggl\{
    \mathbb{I}
    -\hspace{-3mm}
    \sum_{1\leq i<j\leq 3}\hspace{-3mm}
    \frac{8\alpha+4\pm(3(-1)^{i+j}-1)\sqrt{2\alpha (2\alpha -1)-2}}{6+12\alpha} h_{ij}
    \biggr\}^2\hspace{-1mm},
    \end{align}
for $\alpha\geq 1$, together with the decomposition, this can be turned into an exact $\SoS$ for the crown graph when $x\geq n$. 
Note that again, the SoS is not unique, and it has a degree of freedom in choosing $\pm$ to be fixed. 
%However, unlike the perturbation away from the complete bipartite graph case, 
When $\alpha<1$ the above form no longer gives a real coefficient.  However, the true ground state of the triangle also changes, and still allows an exact $\SoS$: 
\begin{align}\label{eq:WeakTriangleSoS}
    &\frac{3}{2}\mathbb{I} - (h_{12}+h_{23}+\alpha h_{13})
    \nonumber\\
    &= \frac{3}{2}\left(
        \mathbb{I} 
        -\frac{2}{3}h_{12}-\frac{2}{3}h_{23}
        -\frac{2\pm\sqrt{6(1-\alpha )}}{3}h_{13}\right)^2 ,
\end{align}
which again only gives valid coefficients for $\alpha\leq 1$. 
For our current objective of constructing a $\SoS$ for the crown graph, the existence of SoS for $\alpha<1\Leftrightarrow x<n$ does not help since the Hamiltonian no longer has the decomposition \cref{eq:crowndecomp}. 

Again, like the case for the small $x$ region, although this decomposition is just {\it one} possible heuristic method for finding the exact SoS, it turns out that the $NPA_1(\mathcal{Proj})$ does start to fail exactly for $x<n$. 
Furthermore, it is possible to prove this failure for the region $(n+2)/3<x<n$ which rigorously establishes the right-side boundary at $x=n$ but leaves an unproved open space for the left-side boundary at $x=(n+2)^2/4(n+1)$. We provide this proof in Appendix \ref{app:crown}. 
The situation for the whole $x\in\mathbb{R}$ is illustrated in \cref{fig:SmallGraphs} (b). It is rather intriguing that the ``phase transition" points for the SDP ($x=(n+2)^2/4(n+1)$ and $n$), and the phase transition for the true ground state ($x=1+n/2$) are well-separated\knote{edited sentence}. This means that there are broad regions of the $x$ parameter where the SDP algorithm fails despite having exactly the same ground state as other points where SDP succeeds, which interestingly seems to be caused by the lack of real solutions in a quadratic equation \cref{eq:StrongTriangleSoS}.
In \cref{subsubsec:MG} and \cref{subsubsec:SS}, we will see more nontrivial phase transitions in condensed matter physics models.

\subsubsection{Double Star Graphs}
While the crown graphs do not have the nice decomposition property that the complete bipartite graphs had, the double-star graphs have such a decomposition into two weighted star graphs. 
The double-star graphs are the ones with $n$ vertices connected to one vertex $a$, and the other set of $n$ vertices all connected to the other vertex $b$, and having an edge between $a$ and $b$ (thus $2n+2$ vertices in total). 

In this case, the decomposition works as 
\begin{align}\label{eq:stardecomp}
     H 
    &= ~\sum_{x=a,b}~~\left( \frac{1}{2}h_{ab}+\sum_{i\in\partial x}^{n} h_{xi} \right), \nonumber\\
    \mu_{\max}( H )
    &= \sum_{x=a,b}\mu_{\max}\left( \frac{1}{2}h_{ab}+\sum_{i\in\partial x}^{n} h_{xi} \right),
\end{align}
and the SoS reduces to the case of a weighted graph (with only one edge having weight $1/2$).
While the existence of exact $\SoS$ is provable for arbitrary weighted star graphs (\cref{thm:weightedstar}, \cite{per_comm}), for the particular case corresponding to the double star we can have relatively simple analytical forms:
\begin{eqnarray}
&&\sum_{\substack{(x, y) =\\ (a,b),(b,a)}}
\biggl\{
\frac{E}{2}\biggl( \mathbb{I} - {2\alpha} \sum_{i\in\partial x} h_{ix} - {\alpha} h_{ab} 
    \biggr)^2 
    + {\alpha}\hspace{-1mm}\sum_{i < j\in\partial x}\hspace{-1mm}h_{ij}^2 \nonumber\\
   &&~~~~~~
   +\frac{1-\alpha}{2n+1}\sum_{i\in\partial x}
    \biggl( h_{iy} - \left(n+1-\sqrt{n(n+2)}\right)h_{ix} \nonumber\\ 
  &&~~~~~~~~~~~~~~~~~~~~~~~~
  +\frac{1}{2E-2}h_{ab}\biggr)^2 \biggr\} 
   = E\mathbb{I} - H, \label{eq:dssos}
\end{eqnarray}
where $E$ is the maximum eigenvalue $E=(n+2+\sqrt{n(n+2)})/2$, $\alpha = 1-\sqrt{n/(n+2)}$, and $\sum_{i\in\partial x}$ indicates summation over all vertices $i$ that are the $n$ leaves adjacent to $x=a$ or $b$. 
For this case, it is easy to confirm that a superposition of ``singlet on one edge'' states will achieve this eigenvalue, implying that this SoS is exact (tight).

While the above $\SoS$ shows that $NPA_1(\mathcal{Proj})$ obtains the ground state energy exactly for the double stars, the following $SoS_2(\mathcal{Proj})$ is simpler in form : 
%\begin{widetext}    
\begin{eqnarray}\label{eq:lv2sos}
    \hspace{-5mm}
    \sum_{\substack{(x, y) =\\ (a,b),(b,a)}}\biggl\{\biggl(\alpha \mathbb{I} - \beta \sum_{i\in\partial x} h_{ix} - \gamma h_{ab} 
    +\delta \sum_{i\in\partial x} h_{iy}
    \biggr)^2 \nonumber\\
    \hspace{-4mm}+\hspace{-2mm}
    \sum_{i< j\in\partial x}\hspace{-1mm}
    \left( \frac{\beta^2+\delta^2}{2}h_{ij}^2 + 2\beta\delta ( h_{ix}h_{jy} )^2\right) \biggr\} %\nonumber\\
    %+2\beta\delta\sum_{x=a,b}\sum_{i\neq j\in \partial x } ( h_{ix}h_{jy} )^2 \\
    =
    %\frac{n+2+\sqrt{n(n+2)}}{2}
    E
    \mathbb{I} - H
    ,
    \end{eqnarray}
%    \end{widetext}
where the specific coefficients are 
\begin{comment}
\begin{eqnarray}
    \alpha &=& \frac{1}{2}\sqrt{n+2+s}\\
    \beta &=& \frac{1}{n+2}\left( \sqrt{n+2+s} +S\right)\\
    %\frac{\sqrt{2+n+s}}{2+n} + \sqrt{\frac{s+n(-2-n+s)}{n(2+n)^2}}\\
    \gamma &=& \frac{1}{n+2}\left( \sqrt{n+2+s} -\frac{n+s}{2}S\right)\\
    %\frac{2+n+s}{2+n}-\frac{1}{2}(n+s)\sqrt{\frac{s+n(-2-n+s)}{n(2+n)^2}}\\
    \delta &=& \frac{1}{n+2}    \left(    (n+1+s)S    -\sqrt{n+2+s}    \right) 
    ,
\end{eqnarray}    
\end{comment}
$\alpha=\sqrt{n+2+s}/2$, $\beta=(2\alpha+S)/(n+2)$, $\gamma=(2\alpha-(n+s)S/2)/(n+2)$, $\delta=((4\alpha^2-1)S-2\alpha)/(n+2)$,
with 
$s=\sqrt{n(n+2)}$, and $S=\sqrt{(1+2/n)^{1/2}+s-n-2}$.
Note that the $\SoS$ we provide here could again be viewed as an extension of the $\SoS$ for the complete bipartite case, just by adding another term to \cref{eq:CrownSOS}. 
This $SoS_2(\mathcal{Proj})$ Eq. (\ref{eq:lv2sos}) is technically a weaker result compared to the $\SoS$ in Eq. (\ref{eq:dssos}) since we can get the exact eigenvalue already at the first level. In other words, Eq. (\ref{eq:lv2sos}) only implies that the $NPA_2(\mathcal{Proj})$ succeeds while Eq. (\ref{eq:dssos}) shows that $NPA_1(\mathcal{Proj})$ already suffices. However, Eq. (\ref{eq:lv2sos}) straightforwardly shows that the ``two-singlet density" $\langle h_{ai}h_{bj}\rangle$ is always 0 in the double-star, a piece of information that was not obvious from the level-1 SoS Eq. (\ref{eq:dssos}).

Interestingly, $NPA_1(\mathcal{Proj})$ starts to {\it fail} once the ``double star" becomes imbalanced, 
i.e. having different number of leaves on the two sides. 
This implies that the decomposition of the double graph \cref{eq:stardecomp} only holds for very precise cases with balanced double graphs and does not exist in general.

\subsection{Complete graphs: Contrast between even and odd}\label{subsec:complete}
While the complete graphs $K_n$ do not admit similar decomposition as in \cref{eq:stardecomp}, 
we can still obtain the exact $\SoS$ by exploiting the high symmetry of the graph -- if the number of vertices $n$ is even: 
\begin{align}\label{eq:EvenCompSOS}
    \sum_{i=1}^n 
    \left(
    \sqrt{\frac{n+2}{8}}\mathbb{I} - \sqrt{\frac{2}{n+2}}\sum_{j\neq i} h_{ij}
    \right)^2 \nonumber\\
    = 
    \frac{n(n+2)}{8}\mathbb{I} - \sum_{1\leq i<j\leq n} h_{ij}. 
\end{align}
Here again, the SoS is essentially a summation of the SoS for star graphs, but with slightly different coefficients, which makes them different from the simple decompositions we have been seeing. 
%This seems to be reflective of the fact that the maximum eigenvalue of the complete graph $K_n$ is exactly the same as the complete bipartite graph $K_{n/2, n/2}$, and therefore we can recycle the form of the SoS that worked for the complete bipartite graphs. 

The situation becomes quite different when the number of vertices is odd. The maximum eigenvalue is $(n+3)(n-1)/8$, but $NPA_1(\mathcal{Proj})$ gives $n(n+2)/8$ as the upper bound, which is $3/8$ bigger (observed numerically). 
% which has exactly the same formula for the even case but happens to be $3/8$ bigger (observed numerically). 
We can see that for the odd case the $NPA_1(\mathcal{Proj})$ must do {\it at least as good as} $n(n+2)/8$ from the fact that the SoS we have above works perfectly fine even when $n$ is odd. 
% The problem is that this value is simply not the maximum eigenvalue when $n$ is odd. 
%To further understand the problem, we can %A simple manipulation of the Hamiltonian for odd complete graphs reveals that 

Ideally for odd $n$, the exact SoS should give 
\begin{align}\label{eq:xyz_sos_proof}
\hspace{-5mm}
    &\frac{(n+3)(n-1)}{8}\mathbb{I} - \sum_{1\leq i<j\leq n} h_{ij} \succeq 0 \nonumber\\
\hspace{-5mm}    ~~\Leftrightarrow ~~&
    \sum_{i<j} \left( X_i X_j + Y_i Y_j + Z_i Z_j \right)+ \frac{3n-3}{2}\mathbb{I} \succeq 0. 
\end{align}
%As we will see in the following, there are no $SoS_1(\mathcal{Pauli})$ proofs of this fact, so l
Let $\ell^*$ be the smallest integer such that $NPA_{\ell^*}(\mathcal{Pauli}(H))=\mu_{max}(H)$.  Since $NPA_\ell(\mathcal{Pauli})$ converges at $\ell=n$ we know $\ell^* \leq n$.
By exploiting the $SU(2)$ symmetry of the LHS, we can see that obtaining a degree-$\ell$ SoS proof for
\begin{equation}\label{eq:z_sos_proof}
    \sum_{i<j} Z_i Z_j + \frac{n-1}{2}\mathbb{I} \succeq 0 
    ~~\Leftrightarrow ~~
    \left(\sum_{i=1}^n Z_i \right)^2
    \succeq \mathbb{I} 
\end{equation}
%implies (i.e., a sufficient condition) 
would be a {\it sufficient} condition for showing
that $NPA_\ell(\mathcal{Pauli})=\mu_{max}\left( \sum_{i <j } h_{ij}\right)$.  
Since the Pauli operators $Z_i$ all commute,  
%Here, the $Z_i$'s are Pauli operators, so 
the problem essentially becomes classical and could be regarded as a {\scshape MaxCut} instance for the same odd complete graph. 
The problem then is equivalent to proving the following statement with SoS: 
\begin{center}
    {\it When you have odd numbers of $\pm 1$ values, \\
    their sum can never become 0.}
\end{center}
This trivial statement about parity becomes surprisingly hard to prove with SoS and is known to require $\lceil n/2 \rceil$-degree SoS \cite{gri01lin,lau03low,kun22spec}, so $\ell^* \leq \lceil n/2 \rceil$. 
While we believe that the same is most likely to be true for our case ($\ell^* =\Omega(n)$)\footnote{The tight SoS proof for {\scshape QMaxCut} on odd complete graphs can be reasonably named as the quantum version of the parity problem mentioned in the references.}, we were only able to prove the impossibility with $\SoS$. 

\begin{theorem}
    $NPA_1(\mathcal{Proj}(H))=n(n+2)/8$ for complete graphs with $n$ vertices, which gives the exact maximum eigenvalue when $n$ is even and is exactly $3/8$ larger than the exact maximum eigenvalue $(n+3)(n-1)/8$ when $n$ is odd.
\end{theorem}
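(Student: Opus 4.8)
The plan is to compute the true maximum eigenvalue of the complete-graph Hamiltonian, to derive the upper bound $NPA_1(\mathcal{Proj}(K_n)) \le n(n+2)/8$ from one sum-of-squares identity that holds at every parity, and then to match it from below --- for free when $n$ is even, and by exhibiting an explicit symmetric feasible moment matrix when $n$ is odd.

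First I would identify $\mu_{max}$. Writing $\mathbf{S} = \tfrac12\left(\sum_i X_i,\ \sum_i Y_i,\ \sum_i Z_i\right)$ for the total spin and using $X_i^2 = Y_i^2 = Z_i^2 = \mathbb{I}$, one gets $\sum_{i<j}(X_iX_j + Y_iY_j + Z_iZ_j) = 2\mathbf{S}^2 - \tfrac{3n}{2}\mathbb{I}$, hence $H = \sum_{i<j} h_{ij} = \tfrac{n(n+2)}{8}\mathbb{I} - \tfrac12 \mathbf{S}^2$. Therefore $\mu_{max}(H) = \tfrac{n(n+2)}{8} - \tfrac12 s(s+1)$, where $s$ is the least total spin available on $n$ qubits: $s = 0$ for $n$ even and $s = \tfrac12$ for $n$ odd. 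This gives $\mu_{max}(H) = n(n+2)/8$ in the even case and $n(n+2)/8 - 3/8 = (n+3)(n-1)/8$ in the odd case, which already produces the claimed $3/8$ discrepancy.

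For the upper bound I would check that \cref{eq:EvenCompSOS} is an identity modulo only \cref{eq:singproj1} and \cref{eq:anticommproj}, with the parity of $n$ never used in the verification: expand each square, substitute $h_{ij}^2 \mapsto h_{ij}$ and $h_{ij}h_{ik} + h_{ik}h_{ij} \mapsto \tfrac12(h_{ij}+h_{ik}-h_{jk})$, and sum over the central index $i$; the right-hand side collapses to $\tfrac{n(n+2)}{8}\mathbb{I} - H$. Since the squared terms lie in $\mathrm{span}_{\mathbb{C}}(\Gamma_1)$ and the constraint polynomials used lie in $U^1(\{\eta_k\})$, \cref{eq:EvenCompSOS} is a feasible point of $SoS_1(\mathcal{Proj}(K_n))$ with objective $n(n+2)/8$, so by the strong duality $NPA_1(\mathcal{Proj}) = SoS_1(\mathcal{Proj})$ we get $NPA_1(\mathcal{Proj}(K_n)) \le n(n+2)/8$ for all $n$. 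When $n$ is even this already equals $\mu_{max}(H)$, and since $NPA_1$ relaxes the Hamiltonian problem we obtain equality and exactness. When $n$ is odd this relaxation bound is too weak, so I would instead construct a feasible moment matrix achieving $n(n+2)/8$. By convexity of the feasible set and the $S_n$-symmetry of the program, I may assume the optimal $M_1$ is invariant under vertex permutations, hence specified by the two scalars $a := M_1(\mathbb{I}, h_{ij})$ and $b := M_1(h_{ij}, h_{kl})$ on vertex-disjoint pairs, with $M_1(h_{ij},h_{ij}) = a$ and $M_1(h_{ij},h_{jk}) = a/4$ forced by \cref{concat_const} together with the relation $4M_1(h_{ij},h_{jk}) = M_1(\mathbb{I},h_{ij}) + M_1(\mathbb{I},h_{jk}) - M_1(\mathbb{I},h_{ik})$. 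The objective equals $\binom{n}{2}a$, so set $a = \tfrac{n+2}{4(n-1)}$. The $\{h_{ij}\}$-block then lies in the Bose--Mesner algebra of the triangular (Johnson) graph $J(n,2)$, so diagonalizing via that graph's adjacency spectrum --- namely $2(n-2)$, $n-4$, and $-2$ with respective multiplicities $1$, $n-1$, and $\tfrac{n(n-3)}{2}$ --- turns positive semidefiniteness of the block, plus the Schur complement against the $\mathbb{I}$-row, into explicit linear (in)equalities in $b$; these pin $b = \tfrac{na}{4(n-3)}$ and are then satisfied with the Schur complement vanishing exactly, so $M_1 \succeq 0$. Hence $NPA_1(\mathcal{Proj}(K_n)) \ge n(n+2)/8$, giving equality, and subtracting $\mu_{max}(H) = (n+3)(n-1)/8$ yields the gap $3/8$. (The argument is written for odd $n \ge 5$; the residual small case $n = 3$ should be dispatched directly.)

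The main obstacle is the positive-semidefiniteness check for odd $n$: the candidate symmetric moment matrix lies exactly on the boundary of the PSD cone --- one block eigenvalue condition and the Schur-complement condition turn out to be simultaneously tight --- so the argument carries no slack, and it hinges on getting the eigenvalue and multiplicity bookkeeping of $J(n,2)$ precisely right (and on separately handling the degenerate small cases). The spin computation, the SoS verification, and the even-$n$ lower bound are all routine by comparison.
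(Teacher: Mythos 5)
Your proposal is correct for even $n\ge 4$ and odd $n\ge 5$, and it lands on exactly the paper's certificate: the upper bound comes from the very same parity-blind SoS identity \cref{eq:EvenCompSOS}, and your symmetric feasible moment matrix is the paper's matrix, with $a=\tfrac{n+2}{4(n-1)}$, one-overlap entries $a/4$, and $b=\tfrac{na}{4(n-3)}=\tfrac{n(n+2)}{16(n-1)(n-3)}$. Where you genuinely diverge is the positive-semidefiniteness check and the surrounding bookkeeping: the paper's main text exhibits explicit Gram vectors (and its \cref{app:compeigen} lists eigenvectors by hand), whereas you diagonalize the $\{h_{ij}\}$-block inside the Bose--Mesner algebra of $J(n,2)$ and finish with a Schur complement against the $\mathbb{I}$-row; this is cleaner and makes transparent why the certificate sits on the boundary of the PSD cone (the $(n-1)$-dimensional eigenvalue $\tfrac{an}{4}-(n-3)b$ and the Schur complement vanish simultaneously, which pins $b$ uniquely), and your spectral data check out ($\tfrac{a}{2}+b$ on the $\tfrac{n(n-3)}{2}$-dimensional space, top eigenvalue $a\,\tfrac{n(n+2)}{8}$ matching $a^2\binom{n}{2}$). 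You also supply the total-spin computation $H=\tfrac{n(n+2)}{8}\mathbb{I}-\tfrac12\mathbf{S}^2$ that the paper only asserts, which is a nice addition. One correction to your closing caveat: the case $n=3$ cannot be ``dispatched directly,'' because there the claim itself fails --- the monogamy bound of \cref{thm:MonogomyOfEntanglementOnTraingle} forces $NPA_1(\mathcal{Proj}(K_3))\le 3/2$, the true value, not $15/8$ --- so the theorem (and the paper's own construction, whose entries carry $(n-3)$ in denominators) implicitly assumes $n\ge 4$; you should state that restriction rather than promise a separate argument for $n=3$.
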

% \begin{theorem}
%     The level-$1$ $\mathcal{Proj}$ obtains the energy upper bound $n(n+2)/8$ for complete graphs with $n$ vertices, even when $n$ is odd. In that case, the value is exactly $3/8$ larger than the true ground state energy $(n+3)(n-1)/8$. 
% \end{theorem}

\begin{proof}
    We show that the following constructed $M_1$ is a feasible solution for $NPA_1(\mathcal{Proj})$ that achieves the value $n(n+2)/8$. Together with the $\SoS$ in \cref{eq:EvenCompSOS}, this proves that the $NPA_1(\mathcal{Proj})$ gets the optimal value $n(n+2)/8$. 

    Now, consider the following moment matrix%\footnote{We drop the subscripts and superscripts to avoid cluttering.}
\begin{equation}
M=
\begin{bmatrix}
1      & a     & a     & \ldots & a      & \ldots \\
a      & a     & a/4   & \ldots & a/4    & \ldots \\
a      & a/4   & a     & \ldots & b      & \ldots \\
\vdots &\vdots &\vdots & \ddots &\vdots  &        \\
a      & a/4   & b     & \ldots & a      & \ldots \\
\vdots &\vdots &\vdots &        &\vdots  & \ddots 
\end{bmatrix},
\end{equation}
with 
\begin{equation}
    a=\frac{n+2}{4(n-1)}, ~~~ b=\frac{n(n+2)}{16(n-1)(n-3)},
\end{equation}
where the shown rows and columns are indexed by operators $\mathbb{I}, h_{12}, h_{13}, \ldots, h_{24}, \ldots$. In other words, 
\begin{equation}\label{eq:oddcompmmnew}
    M(h_{ij},h_{kl})=
    \begin{cases}
        a,   & (ij)=(kl), \\
        a/4, & (ij) ~\mathrm{and}~ (kl) \mathrm{~have~exactly~one~overlap},\\
        b,   & (ij) ~\mathrm{and}~ (kl) \mathrm{~have~no~overlaps}.
    \end{cases}
\end{equation}
It is easy to verify that this moment matrix has size $(1 + {n \choose 2})\times (1 + {n \choose 2})$, achieves energy $a\times {n \choose 2} = n(n+2)/8$, and satisfies the anti-commutation relation constraint: $((a+a-a)/2)/2=a/4$.

All we need to do now is to show $M\succeq 0$, and we do this by constructing Gram vectors of $M$ \footnote{Alternatively, one can list all the eigenvalues of $M$ to show positive semidefiniteness, which has been the more traditional way to prove analogous results for the classical case \cite{gri01lin}. For completeness, we provide this in Appendix \ref{app:compeigen}.}. 
Specifically, we construct $1 + {n \choose 2}$ column vectors $\ket{\mathbb{I}}$ and $\{\ket h_{ij}\}$ for all $i,j\in [n]$ with $i<j$. 
Each column vector's elements are also indexed with the operators $\mathbb{I}$ and $h_{ij}$ as well, which we will denote as the subscript below.
We can then express the Gram vectors in the following way:
\begin{eqnarray}
    \ket{\mathbb{I}}_{\hat O}&=& 
    \begin{cases}
        1, & \text{ if } \hat{O}=\mathbb{I}\\
        0, & \text{otherwise},
    \end{cases}\label{eq:oddcompvec1}\\
    \ket{h_{ij}}_{\hat{O}}&=& 
    \begin{cases}
        %(n+2)/4(n-1), & \hat{O}=\mathbb{I}\\
        a, & \text{ if } \hat{O}=\mathbb{I}\\
        %\sqrt{3(n-3)(n^2-4)}/4\sqrt{{(n-1)^3}}, & \hat{O}=h_{ij}\\      
        \alpha, & \text{ if } \hat{O}=h_{ij}\\  
        %\sqrt{3(n+2)}/2\sqrt{(n-1)^3(n-2)(n-3)}, & \hat{O}=h_{jk} ~\text{with exactly one overlap}\\
        \beta, & \text{ if } \hat{O}=h_{kl} \text{ (no overlap with $ij$) }\\  
        %-\sqrt{3(n+2)(n-3)}/4\sqrt{{(n-1)^3(n-2)}}, & \hat{O}=h_{kl}~\text{with no overlap}.
        \gamma, & \text{ if } \hat{O}=h_{jk} \text{ (exactly one overlap with $ij$) },
    \end{cases}\label{eq:oddcompvec2}
\end{eqnarray}
with 
\begin{eqnarray}
    \alpha&=&\frac{\sqrt{3(n-3)(n^2-4)}}{4\sqrt{{(n-1)^3}}} , ~~
    \beta=\frac{\sqrt{3(n+2)}}{2\sqrt{(n-1)^3(n-2)(n-3)}} , ~~\nonumber\\
    \gamma&=&-\frac{\alpha}{(n-2)}.
\end{eqnarray}
It is straightforward to confirm that these vectors Eq.~(\ref{eq:oddcompvec1}) and Eq.~(\ref{eq:oddcompvec2}) are indeed Gram vectors for the moment matrix  (\cref{eq:oddcompmmnew}) by a counting argument, %:
%\begin{eqnarray}
%    \braket{h_{ij}|h_{ij}}=& a^2 + \alpha^2 + {n-2 \choose 2}\beta^2 + (2n-4)\gamma^2 = a %= \braket{\mathbb{I}|h_{ij}}
%    ,\\
%    \braket{h_{ij}|h_{kl}}=& a^2 + 2\alpha\beta + {n-4 \choose 2}\beta^2 + (4n-16)\beta\gamma +4\gamma^2 =b,\\
%    \braket{h_{ij}|h_{jk}}=& a^2 + 2\alpha\gamma + {n-3 \choose 2}\beta^2 + (2n-6)\beta\gamma +(n-2)\gamma^2=a/4,
%\end{eqnarray}
thus 
concluding that $M$ is the optimal $M_1$ of $NPA_1(\mathcal{Proj})$ achieving the value $n(n+2)/8$. 
\end{proof}

We can observe that the moment matrix that SDP creates is essentially ``blind to the fact that $n$ is an integer" \cite{gam22dis}
and is the reason for obtaining the wrong value $n(n+2)/8$. 
This is the energy you would get when you naively plug in an odd number to the formula for even complete graphs. 
Motivated by this fact and realizing that most of the higher order terms in the higher level moment matrix would reduce to lower degree moments (just like $a/4$ in the example above), 
we conjecture that the only independent moment matrix elements in higher levels would be 
\begin{equation}
    \left\langle 
    %\prod_{\mathrm{independent}}^{k} h_{ij}
    \overbrace{h_{ij}\cdot h_{st}\cdot \ldots}^{k \text{ independent op.s}}
    \right\rangle 
    = \prod_{l=0}^{k-1}\left( \frac{n+2-2l}{4(n-2l-1)}\right), 
\end{equation}
which is the formula for an even complete graph, but simply formally replacing $n$ with an odd number, resembling the classical case \cite{gri01lin,lau03low,kun22spec}. 
All other matrix elements would be calculable from the projector algebra constraints.

\section{Numerical results}
While the SoS proofs in the previous section only cover a very small fraction of possible uniformly weighted graphs, the SDP algorithm actually solves surprisingly many graphs exactly, in the sense that the obtained upper bound value matches the exact maximum eigenvalue. This is true for both the $\mathcal{Pauli}$ and $\mathcal{Proj}$ SDP relaxations, and in this section we will go through the numerical results showing this. 
We further observe that the SDP algorithm can be used for calculating expectation values of operators that are of physical interest. 
This is demonstrated in section \ref{subsec:HeisenbergChain}, where the $NPA_1(\mathcal{Proj})$ is applied to the Heisenberg Chain up to size $L=60$, and the critical correlation functions show the correct criticality up to error bars. 

In the following of this section, the term ``solve exactly" means that the upper bound value obtained by SDP theoretically matches exactly with the maximum eigenvalue.

\subsection{Exhaustive numerical results on small graphs}

\begin{figure*}[t]
    \centering
    \includegraphics[width=17cm]{./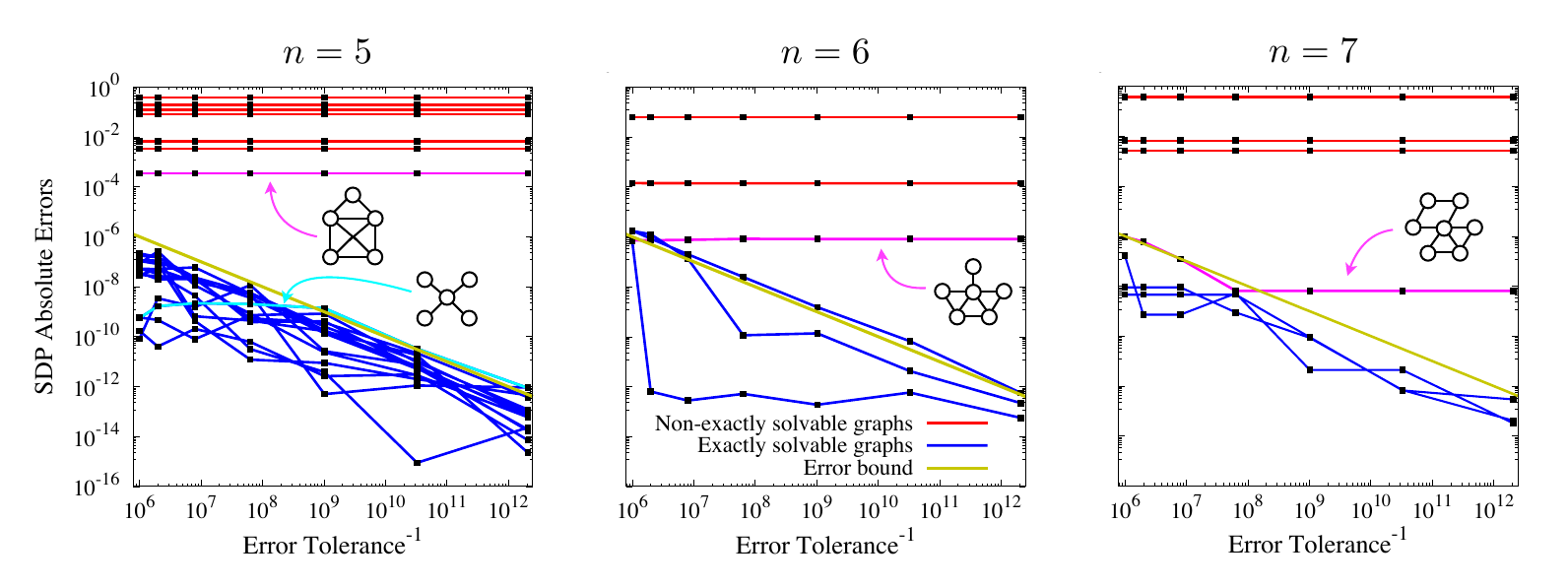}
    \caption{The absolute error of the energy with $NPA_2(\mathcal{Pauli})$ as a function of the tolerance parameter. For graphs with $n=5$ vertices, we show the curves for all possible 21 (connected) 5-vertex graphs, and for $n=6$ and 7, we only show a few randomly chosen graphs, since the number of the graphs becomes enormous (112 and 853 each). 
    Blue and red lines correspond to cases exactly solvable and not respectively, and the magenta lines show the smallest errors that are nonzero.}
    \label{fig:Precision}
\end{figure*}

Here, we show the results of $NPA_2(\mathcal{Pauli})$ applied to all possible uniform graphs up to $n=8$ vertices. 
The main observation is that $NPA_2(\mathcal{Pauli})$ is exact for many graphs with $n\leq 6$ vertices. While the percentage of such graphs seems to shrink as we go to larger system sizes, it suggests that there are many cases where an exact SoS exists that are not covered in the previous section. 

\subsubsection{Probing exact solvability numerically}

Before presenting the main numerical results, here we address the subtle issue arising from numerical precision of the SDP algorithm. 
It is fundamentally impossible to determine whether the SDP algorithm obtains the correct energy value for a particular Hamiltonian solely from numerical results. 
This is because the SDP algorithm always requires a precision parameter which is usually referred to as ``error-tolerance" $\epsilon$, and the algorithm only optimizes up to that $\epsilon$. 
Even if the algorithm seems to give very close values to the true energy we cannot {\it a priori} conclude if that is actually obtaining the exact solution, or if the error of the algorithm is merely small yet non-zero.

To address this issue systematically, we analyzed the optimal value (upper bound of the maximum eigenvalue) obtained by the SDP algorithm as a function of the error tolerance. 
More precisely, as shown in Fig. \ref{fig:Precision}, we plot the discrepancy of the SDP-obtained optimal value and the exact maximum eigenvalue $\Delta E :=|E_{\mathrm{SDP}} - E_{\mathrm{GS}}|$ as a function of $\epsilon^{-1}$. 
This plot, especially for $n=5$, shows a very clear dichotomy of $n=5$ connected graphs. While 7 graphs (red curves) have an almost constant $\Delta E >0$, the rest of the 14 graphs (blue curves) show a decay in $\Delta E$, roughly proportionally to $\epsilon$. 
This could be regarded as strong numerical evidence that the 14 graphs are exactly-solvable instances by the SDP algorithm while the 7 graphs are not. It is quite surprising that a simple five-vertex graph can naturally yield a very small error value around $0.00034$ (the graph shown in Fig. \ref{fig:Precision} with arrows in magenta). 

However, we must note that this method is not entirely decisive. 
As depicted in the center and right panels of Fig. \ref{fig:Precision}, the dichotomy becomes less clear as we go to larger sizes $n=6,7$, and is even worse for $n=8$ (not shown). 
This is because as we proceed to larger system size, an unweighted graph can potentially have extremely small error values $\Delta E$, such as $\sim 10^{-8}$ and even smaller. 
At some point, it practically becomes impossible, since smaller error tolerance $\epsilon$ requires longer iterations in the SDP optimization. 

We can also see that the theoretical error bound of $\Delta E < \epsilon$ (drawn in yellow lines in the figure) for any exactly-solvable graph, is not necessarily satisfied always. For example, although we rigorously prove that the star graph is exactly solvable by the SDP algorithm (see \S\ref{subsec:compbip}), the error of the star graph in Fig. \ref{fig:Precision}, $n=5$ (in cyan) is slightly above the error tolerance $\epsilon$. 
This arises from subtleties in how the error tolerances are handled inside the SDP package, and is difficult to control in general. 

Despite these subtleties, the behavior of the absolute energy error $\Delta E$ as a function of the error tolerance $\epsilon$ serves as a good rule of thumb for distinguishing exactly-solvable graphs from instances with merely small errors. 
For instance, we can be fairly confident that the graphs with magenta arrows indeed do have extremely small but non-zero errors such as $\sim 10^{-6}$.

\subsubsection{Exactly solvable small graphs and their statistics}\label{subsubsec:smallstat}

\begin{figure*}[t]
    \centering
    \includegraphics[width=15cm]{./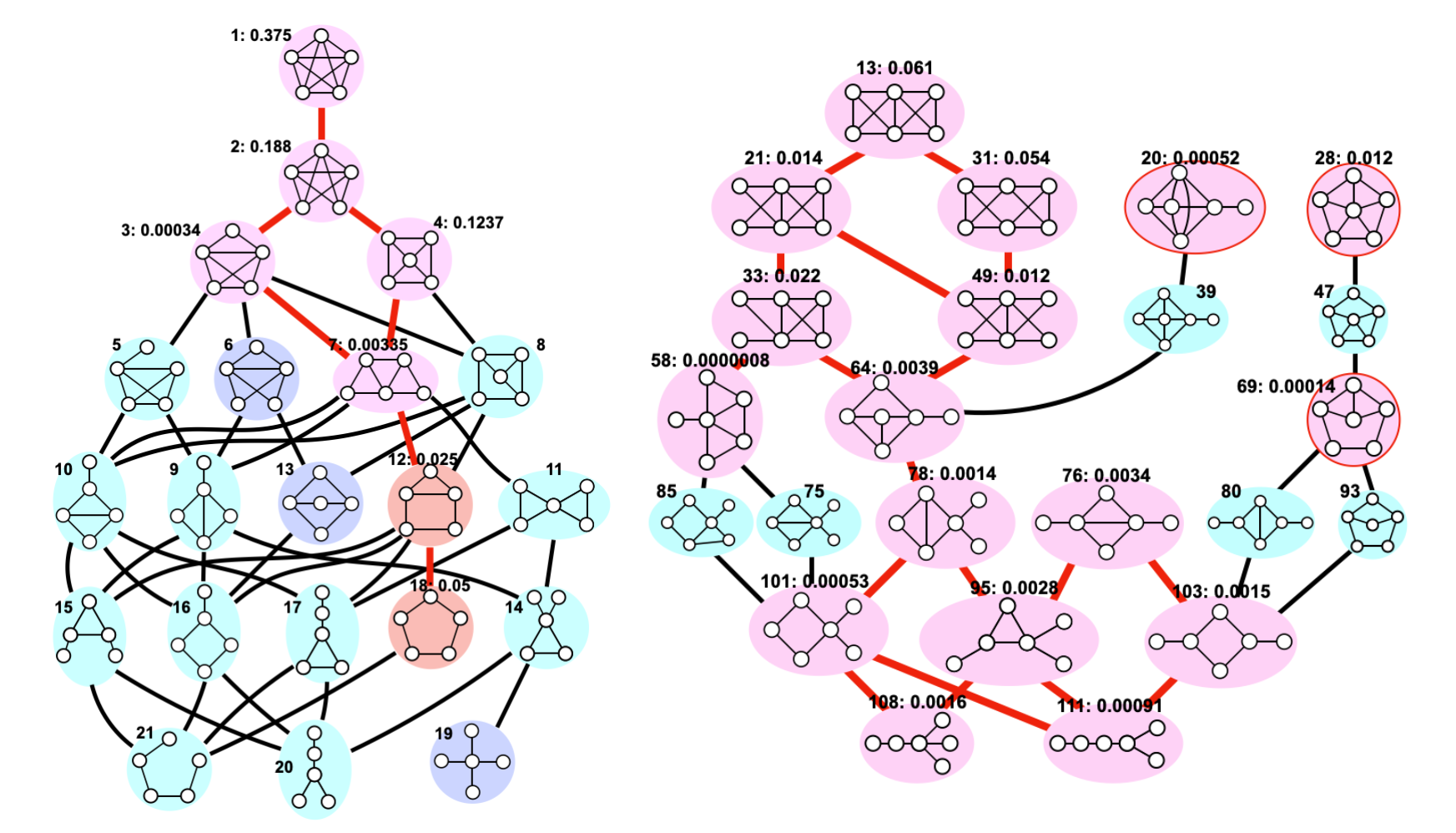}
    \caption{All graphs with $n=5,6$ which the level-2 Pauli basis SDP algorithm fails to obtain the true ground state energy, colored in red/magenta. The labeling of the graphs follows \cite{cve84ata}, and the values following the labels represent the error value $\Delta E$ from Lv. 2 Pauli SDP. For the $n=5$ diagram, we further distinguish graphs where we know an explicit construction of SoS (dark blue) and graphs where Lv. 2 Pauli SDP still fails even when adding a constraint on the total spin $S^2\geq3/4$ (red). }
    \label{fig:MetaGraphs}
\end{figure*}

Once we can confidently determine whether or not the SDP algorithm obtains the true ground state energy, we can start to ask questions such as ``When and how often does the SDP algorithm give us the exact solution?".
To address this question, we present an exhaustive study for all connected graphs with $n=5,6,7$ and $8$ vertices.

Figure \ref{fig:MetaGraphs} shows all of the 7 (out of 21) $n=5$ connected graphs and the 17 (out of 112) $n=6$ connected graphs that the $NPA_2(\mathcal{Pauli})$ SDP algorithm fails to obtain the exact ground state energy (colored in red/magenta). The numbers are labeling of the graphs according to a convention introduced in \cite{cve84ata}. 
It is rather surprising that the algorithm obtains the exact ground state energy for the vast majority of the graphs (colored in blue/cyan) up to this system size, noting that for most of the graphs the SoS is unknown and most likely very complicated (graphs in cyan). 

The figure also shows the topological relations of the graphs, by connecting them with a thick bond whenever two graphs only differ by one edge. 
In this way, we can see that for $n=5$ the red/magenta graphs (SDP fail) form one cluster. In other words, any two $n=5$ connected graphs that Lv. 2 Pauli SDP fails, can be transformed into one from the other by adding and subtracting one edge at a time, always maintaining the SDP algorithm to be failing. 
This is not the case for $n=6$, where the magenta graphs seem to form one big cluster and also three disconnected ``islands" (namely, graphs 20, 28, and 69, as indicated by the red circles). However, as we will see in the following, the ``single-clusteredness" of the hard graphs recovers once we focus on the errors from the $NPA_1(\mathcal{Proj})$. 

The ``failing cluster" includes the complete graph for $n=5$ but not for $n=6$. This is exactly as expected as we explained in \cref{subsec:complete}. This raises the question whether we can actually further constrain the SDP algorithm, not with a higher level, but simply by adding a constraint corresponding to the minimum total spin of the ground state. More specifically, the constraint would be 
\begin{equation}
    \sum_{1\leq i<j\leq n} M(\mathbb{I}, h_{ij})\leq \frac{(n+3)(n-1)}{8}, 
\end{equation}
from \cref{eq:xyz_sos_proof} for odd $n$. 
When we add this constraint, $NPA_1(\mathcal{Proj})$ not only was able to solve the complete graph $K_5$ exactly, but other graphs in the vicinity. 
This information is indicated in \cref{fig:MetaGraphs}, by showing graph 12 and 18 in red, being the only two graphs that $NPA_1(\mathcal{Proj})$ with this additional constraint still failed. 
Note that we cannot do the same thing when we have even number of qubits, because $NPA_1(\mathcal{Proj})$ already succeeds for the complete graphs, i.e., already {\it knows} about this constraint on total spin.

% \begin{figure}[h]
%     \begin{minipage}[b]{0.5\linewidth} 
%     %\centering
%     %\hspace{-1.25cm}
%     %\vspace{-1cm}
%     \includegraphics[width=8.7cm]{./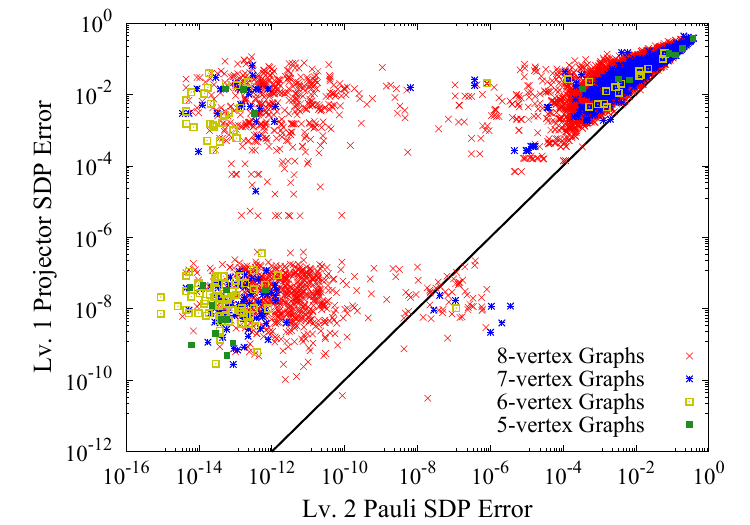}
%     \end{minipage}
%     \begin{minipage}[b]{0.5\linewidth} 
%     %\centering
%         \hspace{0.7cm}
%     \includegraphics[width=7cm]{./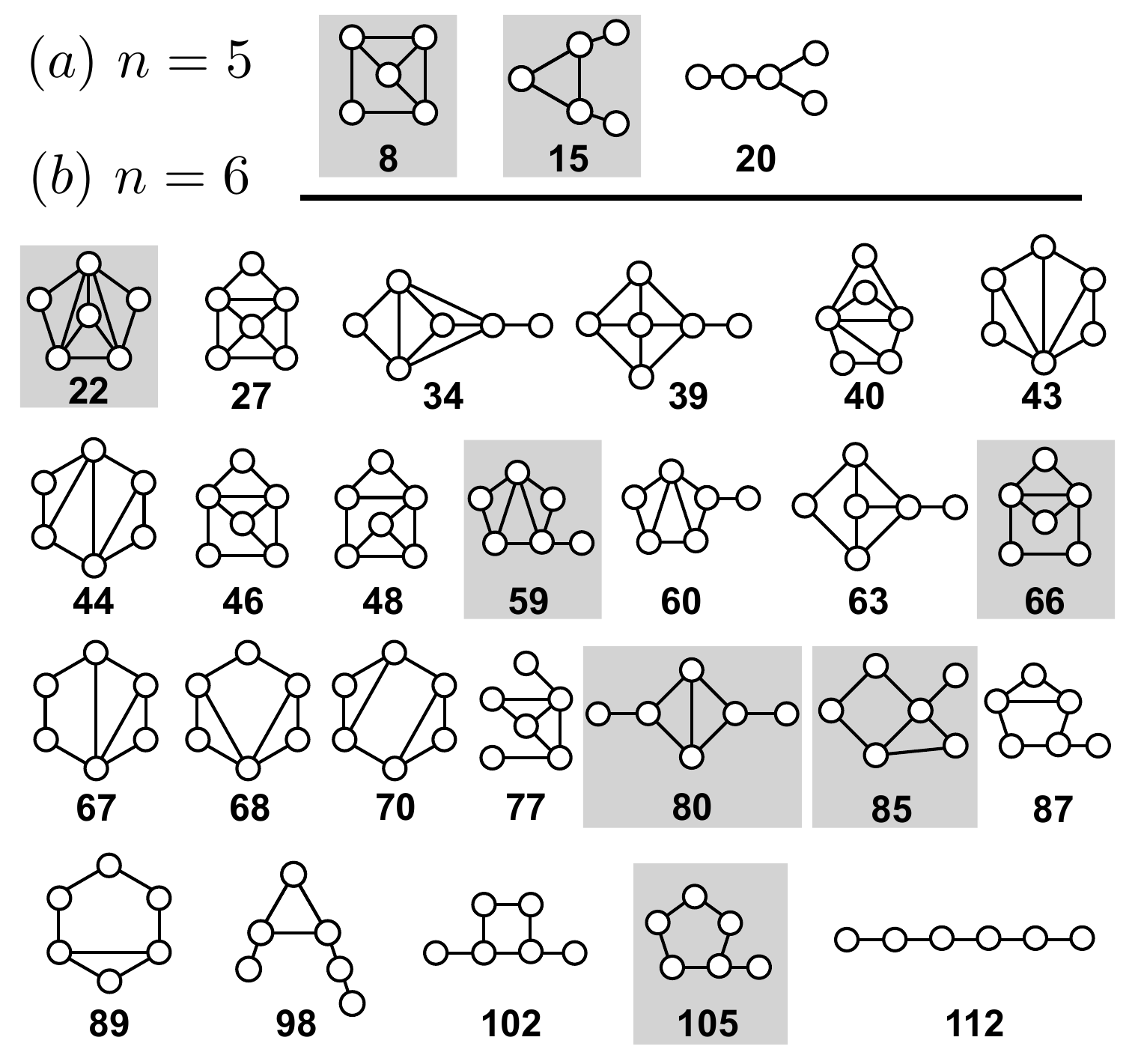}
%     \end{minipage}
% \caption{\label{fig:PauliProjCompare} The energy errors from the SDP algorithm in two different bases for all graphs up to $n=8$ (left) and list of all $n=5$ and 6 graphs that $NPA_2(\mathcal{Pauli})$ is exact but $NPA_1(\mathcal{Proj})$ is not (right). The shaded graphs are the ones which $NPA_2^{\mathbb{R}}(\mathcal{Pauli})$ fails as well (it succeeds for all other graphs listed here). The labeling of the graphs follows \cite{cve84ata}.}
% \end{figure}

\begin{figure}


    \centering
    \subfigure{\includegraphics[width=8cm]{./figures/ErrorScatter.pdf}}
    \subfigure{\includegraphics[width=7cm]{./figures/Separation.pdf}}
    \caption{ The energy errors from the SDP algorithm in two different bases for all graphs up to $n=8$ (top) and list of all (a) $n=5$ and (b) $n=6$ graphs that $NPA_2(\mathcal{Pauli})$ is exact but $NPA_1(\mathcal{Proj})$ is not (bottom). The shaded graphs are the ones which $NPA_2^{\mathbb{R}}(\mathcal{Pauli})$ fails as well (it succeeds for all other graphs listed here). The labeling of the graphs follows \cite{cve84ata}.}
    \label{fig:PauliProjCompare}
\end{figure}

We also compare the performance of the different SDP algorithms ($NPA_2(\mathcal{Pauli})$, $NPA_2^{\mathbb{R}}(\mathcal{Pauli})$, and $NPA_1(\mathcal{Proj})$) for all of these graphs up to $n=8$ in Fig. \ref{fig:PauliProjCompare}. 
The scatter plot shows the energy errors for $NPA_2(\mathcal{Pauli})$ and $NPA_1(\mathcal{Proj})$. The fact that the scattered points roughly forms four different clusters could be understood in the following way. 

Firstly, the cluster on the top right corresponds to graphs that the SDP algorithms with either bases fail to obtain the exact ground state. 
If we believe in typical hardness of the random {\scshape QMaxCut} instances, the ratio of graphs in this cluster in the scatter plot should reach 1 in the large problem size limit. 
The fact that all of the points in this cluster are on the left of the black line indicating $x=y$ reflects the fact that the $NPA_2(\mathcal{Pauli})$ SDP can never perform worse than the $NPA_1(\mathcal{Proj})$ SDP. This could be easily seen from the fact that you can always convert an SoS proof using degree-1 polynomials of projectors into SoS that uses degree-2 Pauli polynomials, but not necessarily the other way around.

Whether the aforementioned inequality $NPA_2(\mathcal{Pauli}(H))  \leq  NPA_1(\mathcal{Proj}(H))$ is actually an equality or not for {\scshape QMaxCut} instances is  not obvious until we actually see examples. 
The second cluster on the top left of \cref{fig:PauliProjCompare} reflects exactly that there are indeed graphs where $NPA_2(\mathcal{Pauli})$ SDP is exact but $NPA_1(\mathcal{Proj})$ SDP fails, i.e., that the inequality is {\it strict} in general. 
We list up all the $n=5$ and 6 graphs that fall under this second cluster on the right side of \cref{fig:PauliProjCompare}. 
Furthermore, we also checked how $NPA_2^{\mathbb{R}}(\mathcal{Pauli})$ SDP performs on these graphs to find that the inequality $NPA_1(\mathcal{Proj}(H)) > NPA_2^\mathbb{R}(\mathcal{Pauli}(H)) > NPA_{2} (\mathcal{Pauli}(H))$ is also strict in general\footnote{The nonstrict inequality could be quickly understood in the same manner as the argument in the previous paragraph}. 
Specifically, we find that $NPA_2^\mathbb{R}(\mathcal{Pauli})$ fails for all of  the graphs shaded in \cref{fig:PauliProjCompare}, while it succeeds for all of the other graphs with $n=5$ and 6.
This means that the exact Pauli SoS for unshaded graphs are ``breaking the SU(2) symmetry" in the individual squares possibly by having one-body Pauli terms in them. Those effects must cancel out as a whole when all the SoS terms are added since the final Hamiltonian has SU(2) symmetry and has no one-body terms. 
For the shaded graphs, this ``symmetry breaking" trick is not enough to obtain the exact SoS, and complex SoS are required to do so. 
As a concrete example, the graph labeled 8 in \cref{fig:PauliProjCompare} has errors $1.4\cdot{} 10^{-2}$, $5.7\cdot{}10^{-4}$ and $8.06\cdot{}10^{-12}$ for $NPA_1(\mathcal{Proj}(H))$, $NPA_{2}^\mathbb{R}(\mathcal{Pauli}(H))$ and $NPA_{2} (\mathcal{Pauli}(H))$ respectively, which we interpret as the complex Pauli hierarchy being exact on this instance, but the real Pauli and complex projector hierarchy have nonzero errors.
Also, when we define the ``failing graphs'' based on the inexactness of $NPA_1(\mathcal{Proj}(H))$, instead of $NPA_2(\mathcal{Pauli}(H))$ as we did for Fig. \ref{fig:MetaGraphs}, the list of such failures is strictly larger because of this strict inequality. Namely, we need to add those listed in Fig. \ref{fig:PauliProjCompare} (a,b). In the language of Fig. \ref{fig:MetaGraphs}, there are more graphs that should be colored in magenta, which connects all of the ``islands'' of topologically disconnected failure graphs from the rest of the failing cluster for the $n=6$ case. 
%It is possible that although our $NPA_1(\mathcal{Proj}(H))$ performs strictly worse than $NPA_2(\mathcal{Pauli}(H))$, in some sense it is a more natural semidefinite relaxation to consider: there is a chance that when $NPA_1(\mathcal{Proj}(H))$ fails for small graphs could be understood in a unified way. The case for $NPA_2(\mathcal{Pauli}(H))$ where it is unlikely  }

The third cluster on the bottom left corresponds to graphs where the SDP algorithm succeeds with either of the bases. The ratio of the graphs in this third category seems to decrease as we get to larger sizes of graphs, which we will discuss further later. 
Noticing that the separation between $NPA_2(\mathcal{Pauli})$, $NPA_2^{\mathbb{R}}(\mathcal{Pauli})$, and $NPA_1(\mathcal{Proj})$ are strict in general from the previous paragraph, it seems more natural to regard this cluster as instances where $NPA_1(\mathcal{Proj}(H))=0$ forces the other two SDPs to have 0 error as well. 
From this perspective, it is more intriguing when $NPA_1(\mathcal{Proj}(H))=NPA_2(\mathcal{Pauli}(H))>0$, i.e., exactly on top of the $x=y$ line in \cref{fig:PauliProjCompare}, but in the top right cluster. 
Up to $n=8$ connected graphs we have computed, the only cases when that happens are all graphs related to complete graphs (simplest cases discussed in \cref{subsec:complete}).

There is a rather small fourth cluster on the right bottom, that extends beyond to the right side of the $x=y$ line. 
Since $NPA_2(\mathcal{Pauli})$ must always perform no worse than $NPA_1(\mathcal{Proj})$, this suggests a numerical error of some sort. We have observed that the SDP packages for these graphs do not converge as quickly as other graphs, and tends to give results that have larger duality gaps than specified. This practically does not become a problem since the errors are very small (around $10^{-6}$), and all graphs which we explicitly exemplify as ``NPA failing'' in this work are not from this group \footnote{This may occur strange to the physicist readers that a convex optimization which theoretically does not have a local minimum, still seems to ``get stuck'' in practice. This is actually not uncommon in the field of convex optimization, since e.g. a very narrow feasible region can cause practically slow convergences like this.}. 

% \lunote{There's no local minimum in convex optimization problem, but it's uncommon that convex optimization problems have slow convergence towards the global minimum due to the shape of the convex cone. I think it's better to remove this sentence.} \jnote{How about in a footnote with a wording like this?}.

Notably, instances falling on the right side of the $x=y$ line only occur at very small errors (bottom right), while none are observed in the top right cluster. This is 
encouraging, since we can be confident that these practically pathological cases only arise when we demand high numerical precision. This allows us to consider all of the graphs in the fourth cluster (bottom right) to be theoretically easy for both bases of SDP, i.e., actually belonging to the third cluster (bottom left).

%We obtained many interesting findings from the results, including a separation between real and complex hierarchies.  The graphs labeled $8$ and $15$  provide examples where the numerics suggest $NPA_1(\mathcal{Proj}(H)) > NPA_1^\mathbb{R}(\mathcal{Pauli}(H)) > NPA_1 (\mathcal{Pauli}(H))$ with {\it strict} inequalities.  Specifically the graph labeled $8$ obtains errors $1.4\cdot{} 10^{-2}$, $5.7\cdot{}10^{-4}$ and $8.06\cdot{}10^{-12}$ for $NPA_1(\mathcal{Proj}(H))$, $NPA_1^\mathbb{R}(\mathcal{Pauli}(H))$ and $NPA_1 (\mathcal{Pauli}(H))$ respectively.  So, while the complex Pauli hierarchy appears to be exact on this instance, the real Pauli and complex projector hierarchy have significant errors.  \jnote{I changed things a bit.}

\begin{figure*}[t]
    \centering
    \includegraphics[width=17cm]{./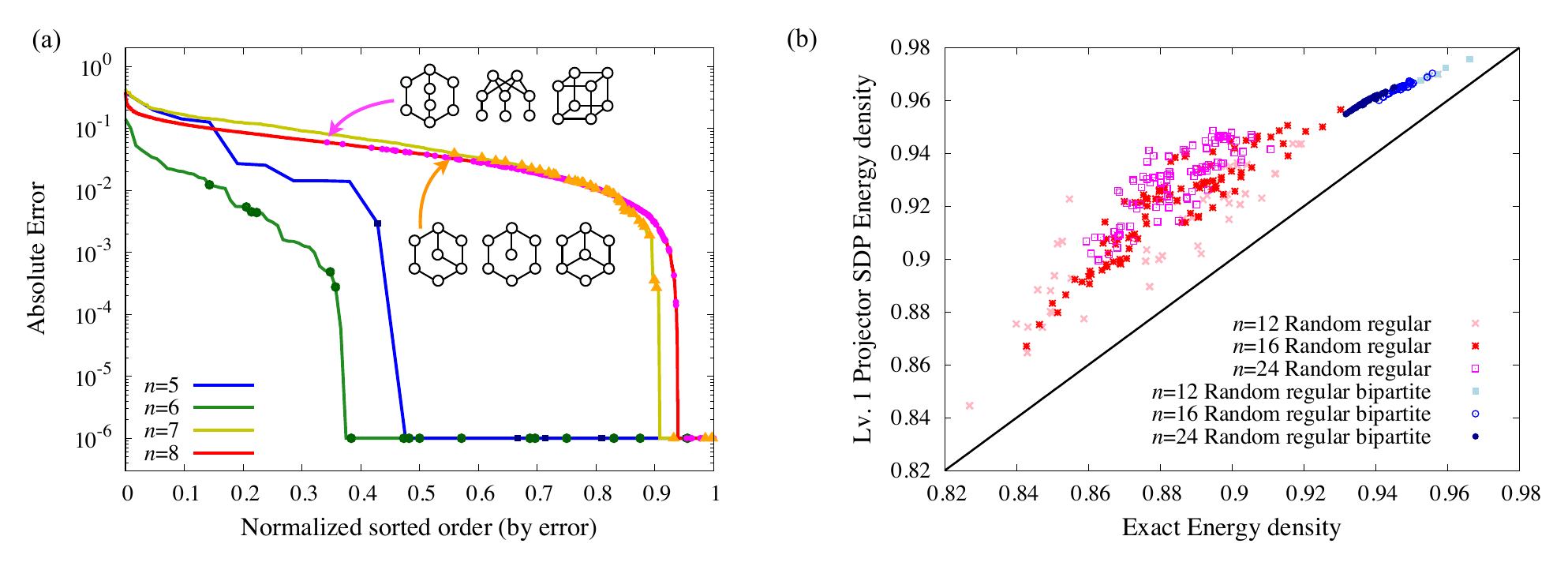}
    \caption{ (a) Absolute error values $E_{\mathrm{SDP}}-E_{\mathrm{GS}}$ for all connected graphs with $n=5,6,7$ and $8$ vertices, shown in descending order for each $n$. All error values smaller than $10^{-6}$ can be regarded as 0, and is displayed as $10^{-6}$ for visual simplicity. We show points corresponding to bipartite graphs with points on top of the curves, showing the general tendency of bipartite graphs having relatively smaller errors. We also illustrate the top three bipartite $n=8, 7$ graphs with largest errors. (b) The scatter plot of random regular graphs and random regular bipartite graphs with degree 3 and size $n=12, 16$ and 24 qubits, 100 samples each.}
    \label{fig:errorstats}
\end{figure*}

In order to see the statistics of the errors more closely, in \cref{fig:errorstats} (a), we show the values of the error for the $NPA_1(\mathcal{Proj})$ SDP in descending order for each size of graphs $n=5,6,7$ and $8$. 
The $x$-axis is rescaled so that the data of 21, 112, 853, and 11,117 graphs all fit into $[0,1]$. Thus, the figure is the inverse of the cumulative distribution function of errors. 

For example, all four curves display an acute decline at some point corresponding to the separation between graphs that have nonzero errors and (essentially) zero error. The graph shows that the ratio of such non-exactly solvable graphs are roughly $46\%, 37\%, 91\%$ and $94\%$ among all connected $n=5,6,7$ and $8$-vertex graphs respectively. This means that the ratio of exactly solvable graphs tend to decrease as the number of vertices increases, possibly converging to 0 in the $n\rightarrow \infty$ limit. 
%This is not too bad of a news for the SDP algorithm, since the number of connected graphs itself grows very fast. 
%\knote{This sentence is confusing me.  Do you mean that there are still a lot of graphs solved exactly via SDP? {\bf Jun}: Yes. I deleted it and changed things around.}
Yet still, the actual {\it number} of connected graphs that are exactly solvable seems to grow with $n$ at least for this size regime: 11, 67, 77, and 670, for $n=5,6,7$ and 8. 

Another piece of information in the graph, represented as the points in the figure, is how the {\it bipartite} graphs are distributed among this descending-error ordering. The {\scshape QMaxCut} problem on bipartite graphs is oftentimes described as having ``no geometric frustration" in condensed matter physics, since the singlet projector $h_{ij}$ could be seen as a constraint that favors the two qubits to be pointing in the opposite direction. \footnote{Not to be confused with ``frustration-free'' explained in section \ref{subsubsec:MG}.} From this point of view, we would consider an odd-length loop as geometrically frustrated because the interaction would not be (even relatively) satisfied with a simple approach of having the qubits point the opposite directions alternately. 
This difference has practical applications, such as bipartite cases allowing the quantum Monte Carlo method to efficiently\footnote{Only known empirically, in terms of precise complexity theory statements. While the time complexity scaling is known to scale as $\mathcal{O}(\epsilon^{-2})$ with respect to the error tolerance $\epsilon$, the scaling with number of qubits $n$ is hard to bound rigorously for Markov-chain Monte Carlo methods in general, albeit cases of quantum Monte Carlo methods being applied to hundreds or thousands of qubits is common in computational physics \cite{san10com}.} obtain the ground state classically. 
Therefore, it is not so surprising that the bipartite graphs in Fig. \ref{fig:errorstats} (a) are distributed relatively on the right side of each curves, implying (exponentially) smaller errors. In some sense, the surprise is in the other direction, that SDP fails to obtain the exact ground states of such ``easily classically simulable" instances most of the time. 
It is unclear if the tendency of bipartite graphs having relatively small errors will remain for larger $n$, since it is already apparent that the position of the largest-error bipartite graph shifts to the left in Fig. \ref{fig:errorstats} (a) from $n=7$ to $n=8$.

%In the $n\rightarrow\infty$ limit, the uniform distribution among all connected graphs with $n$ vertices will be indistinguishable from a random Erd{\"o}s-R{\'e}nyi graph ensemble with ${n \choose 2}/2$ edges in total  \cite{rad67uni}.\knote{I don't understand this.  A vertex in a graph cant have degree ${n\choose 2}/2$ since this could be larger than the number of vertices in the graph.  Also you might need to say in what sense the graphs are indistinguishable.}\jnote{Thanks, that was a typo and I corrected the statement, hopefully making it clearer. But maybe we can just delete some sentences here?}
%This means that the value of the relative error will be tightly concentrated to a single value, making the curves in Fig. \ref{fig:errorstats} (a) to have a flat plateau covering the entire $(0,1)$. This explains the similarity of the two curves for $n=7$ and 8 that is already apparent due to the large number of graphs (853 and 11,117), despite $n$ itself being rather small.

In order to test the difference between bipartite graphs and non-bipartite graphs in a more systematic way, we also ran the SDP algorithm for random regular graphs with degree-3. When such graphs are generated uniformly randomly, for sufficiently large $n$, the graph is almost certainly non-bipartite. 
We generate 100 of such samples, and compare the performance of $NPA_1(\mathcal{Proj})$ 
against exact diagonalization for $n=12, 16$ and 24. 
It is also possible to generate uniformly random graphs that are bipartite and regular, and both results are displayed in Fig. \ref{fig:errorstats} (b). 
It is immediately apparent that the non-bipartite random regular graphs have a broader distribution in the two-dimensional scatter plot, compared to the bipartite cases. The cluster is also located farther away from the $x=y$ line in black, showing a larger relative error compared to bipartite random graphs. The bipartite random graph data also seem %agree well with a polynomial fit, 
 to form a ``line" in the scatter plot, 
indicating that the optimal SDP objective can give a fairly narrow estimate of the true energy value by a properly fitted linear function. In contrast, the non-bipartite random graph data extends in a two-dimensional manner forming a oval-like shape, resulting in broader estimates of the true energy given the SDP energy.

\begin{figure*}[t]
    \centering
    \includegraphics[width=17cm]{./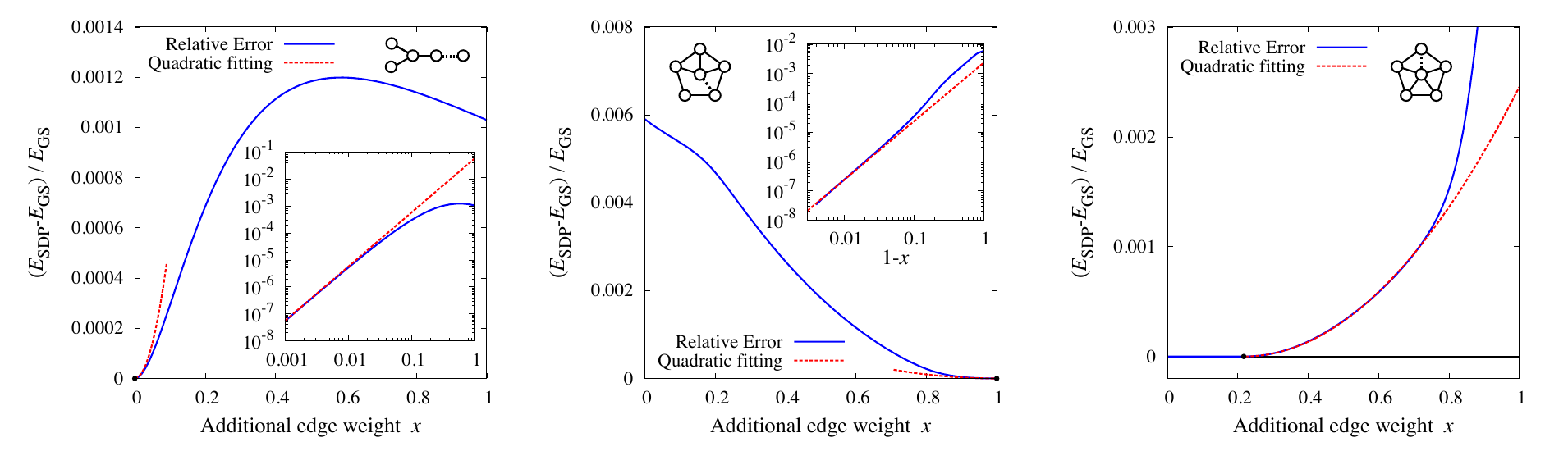}
    \caption{ Three different cases of adding an additional edge to a graph with weight $x$, resulting in the change of solvability with $NPA_1(\mathcal{Proj})$. The insets show the log-log plot to clarify the matching of the quadratic fitting near the transition.}
    \label{fig:SmallTrans}
\end{figure*}

\subsubsection{Transition points in the solvability of small graphs}

The clusteredness of hard and easy graphs shown in Fig. \ref{fig:MetaGraphs} leads to the question of what happens at the boundary between them. 
If there is a pair of graphs which one is exactly solvable while the other is not, with only one edge difference as graphs, then we can add that one different edge with weight $x\in[0,1]$. 
This procedure continuously connects the graphs and demonstrates where exactly SDP starts to fail. 

In \cref{fig:SmallTrans}, we show three different cases of such a procedure. 
On each panel, we show the graph we use for demonstration, with the dotted edge being the weighted one. 
The left most panel shows the case for interpolating between the $n=4$ star graph and the Y-shaped $n=5$ graph (graph \# 20 in \cref{fig:PauliProjCompare} (a)), which is the easiest case of such. In this case, we can see that the moment we add $\epsilon>0$ amount of the new edge, SDP starts to fail. 
This could be argued that the solvability of the star graph in this situation is rather {\it fragile}, and immediately fails when perturbed away. 

The same thing could be argued for the case shown in the middle panel connecting graph \#69 and \#47 of \cref{fig:MetaGraphs} right. 
Again in this case, the moment the graph diverges away from the exactly solvable \#47, the SDP algorithm starts to fail. 
However, there exist cases where the ``transition" happens not at the edges but at a nontrivial value, as shown in the right panel. The error becomes as small as the duality gap set for the SDP solver for $x<0.22$. In this case, we can say that the solvability of graph \#47 is somewhat robust, and survives the perturbation in the direction considered here (towards graph \#28). 

Curiously, for all cases we have checked for interpolations between solvable and unsolvable graphs with $NPA_1(\mathcal{Proj})$, we always observe a quadratic initial increase of the error, as shown with the red dotted lines in \cref{fig:SmallTrans}. The quadratic fit is extremely good at the vicinity of the ``transition points" where the error starts to become nonzero, as shown in the insets of the figures. 
This resembles universal critical behavior seen in physics, where phase transition {\it points} vary largely depending on the details of the statistical physics model, but an indicator of the phase transition (called the order parameter) behaves as $\propto |T-T_{c}|^{\beta}$ with a universal exponent denoted by $\beta$. 
Although our observed exponent $\beta=2$ is clearly present numerically, we were unable to provide a general explanation, and leave it for future studies.

\subsection{Numerical results for some condensed matter physics models}\label{subsec:cmp}

Here, we demonstrate the power of the SDP algorithm when applied to a number of condensed matter physics models. The message is two-fold: first, the SDP algorithm could be used to probe exact-solvability of models in some settings, giving rise to the possibility of numerical exploration for exactly-(analytically) solvable systems. Second, the method could be seen as the first-order approximation of the ground state, it actually gives very accurate numbers in practice, with errors only up to $\sim 4\%, 7\%, 2\%$ for the models we study. 

Both the Majumdar-Ghosh model and the Shastry-Sutherland model are known to be ``frustration-free" in the quantum spin system literature \cite{deb10sol,sat16whe,wou21int,ans22ana}. 
This means that the Hamiltonian could be rewritten as sum of terms that could all be satisfied simultaneously in the ground state. 
The standard way to show this is to rewrite the physical Hamiltonian (thus with the opposite sign from our {\scshape QMaxCut} convention as in Eq. (\ref{eq:QMCHamDef})) as a sum of projectors with additive and multiplicative constants. 
If there exists a state that the all the projectors evaluate to 0, that must be the ground state (note that the physics convention here is a minimization of the eigenvalue).

\begin{figure*}[t]
    %\begin{minipage}[b]{0.5\linewidth} 
    %\centering
    %\hspace{0.5cm}
    %\includegraphics[width=7cm]{./figures/MGlattice.png}
    %\end{minipage}
    %\begin{minipage}[b]{0.5\linewidth} 
    %\centering
    %\hspace{0.5cm}
    %\includegraphics[width=7cm]{./figures/SSlattice.png}
    %\end{minipage}
    \centering
    \includegraphics[width=15cm]{./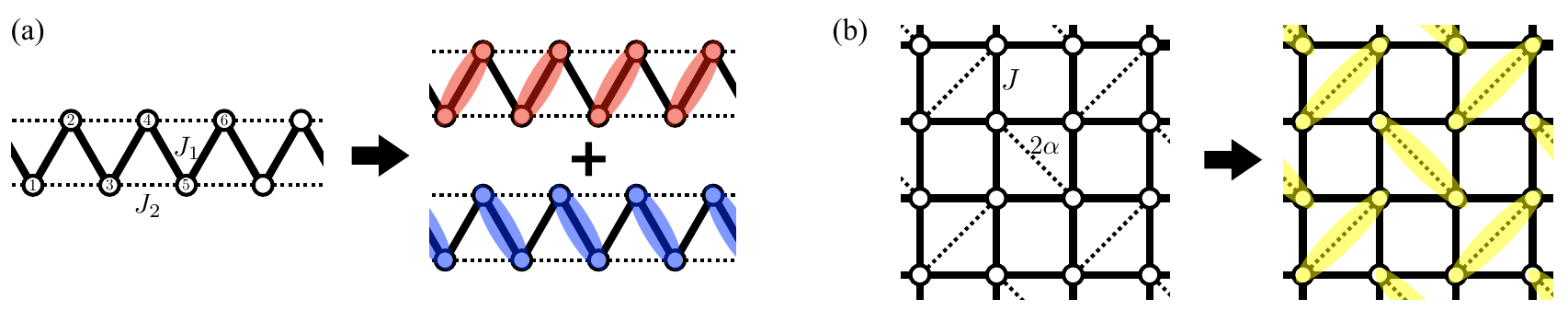}
    \caption{ The graph structures of the (a) $J_1$-$J_2$ model and (b) the Shastry-Sutherland model. The exactly solvable ground states for both models are also illustrated where the colored ovals represent singlet pairs. }
    \label{fig:lattices}
\end{figure*}

Because all projectors are square of themselves, we can immediately obtain an SoS of some degree by flipping the entire sign, and redefining the Hamiltonian with the {\scshape QMaxCut} convention. 
In most cases in physics, the definition of ``frustration-free" requires the rewritten terms of the Hamiltonian to be spatially local. Thus, the SoS hierarchy could be seen as a generalization of the frustration-free notion, where we do not necessarily require spatial locality, but restrict the degree of the terms as polynomials.
The fact that the degree-restriction could be arbitrarily relaxed by the level of the hierarchy, and that SDP algorithms can solve the optimization problem efficiently for $\mathcal{O}(1)$ level, provides us a systematic approach to explore frustration-free Hamiltonians in a computational way. 

\begin{figure*}[t]
    \hspace{-1.5cm}
    \begin{minipage}[b]{0.32\linewidth} 
    \includegraphics[width=7cm]{./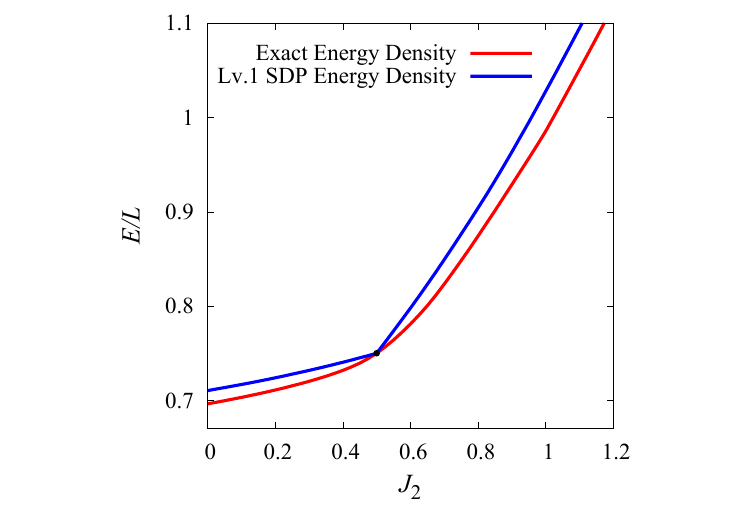}
    \end{minipage}
    \begin{minipage}[b]{0.32\linewidth} 
    \includegraphics[width=7cm]{./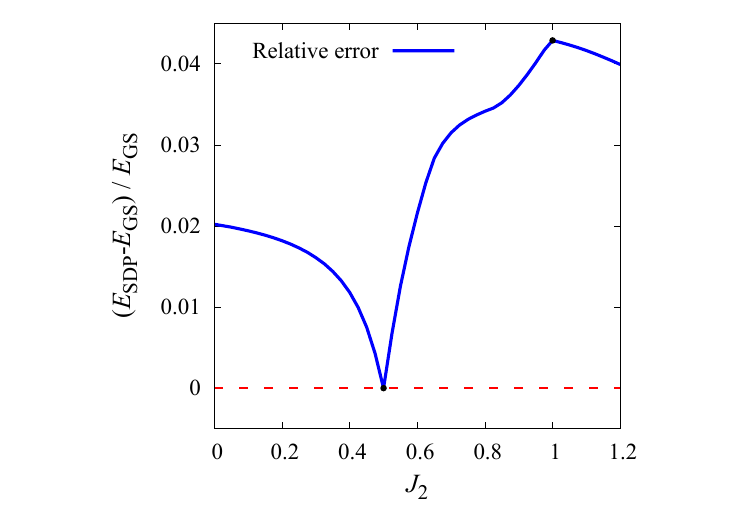}
    \end{minipage}
    \begin{minipage}[b]{0.32\linewidth} 
    \includegraphics[width=7cm]{./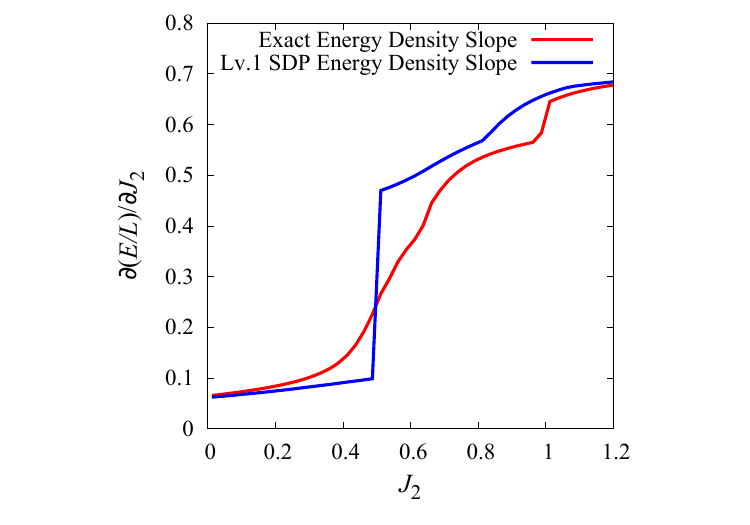}
    \end{minipage}
\caption{\label{fig:MG}Performance of the SDP algorithm for the the size $L=16$ $J_1$-$J_2$ model with periodic boundary condition and $J_1=1$ fixed. Comparison is made between the exact ground state energy $E_{\mathrm{GS}}$ and the values obtained by the Lv.1 singlet projector SDP algorithm $E_{\mathrm{SDP}}$. The energy densities (left) are very close, and we can see that the relative error (center) becomes exactly 0 at the MG point. When we take the derivative of the energy densities (right), the SDP algorithm actually detects the MG point at $J_2/J_1 =1/2$ in the model way more sharply compared to the exact solution, at this system size.}
\end{figure*}

\subsubsection{The Majumdar-Ghosh model}\label{subsubsec:MG}

The Majumdar-Ghosh (MG) model \cite{maj69nex} was one of the earliest proposed quantum spin models where the ground state could be obtained exactly. 
The Hamiltonian being considered is 
\begin{equation}\label{eq:MGHam}
    H = \sum_{i=1}^L \Bigl(J_1 h_{i,i+1} + J_2  h_{i,i+2}\Bigr), 
\end{equation}
where we use the {\scshape QMaxCut} convention (i.e. the ``ground state" we are searching for now is the maximum eigen state of this operator). The lattice structure is shown in Fig. \ref{fig:lattices}(a). 
The Hamiltonian above would typically be referred to as the $J_1$-$J_2$ Heisenberg chain in the context of condensed matter physics, and the MG model corresponds to the case where $J_2/J_1=1/2$, also known as the MG {\it point}. 
At the MG point, the ground state is two-fold degenerate with two different singlet-product states where closest neighbors are forming singlets periodically, as depicted in Fig. \ref{fig:lattices}(a): 
\begin{equation} \label{eq:MGGS}
    |\mathrm{GS}\rangle = 
    \prod_{i\in \mathrm{even}}^{\otimes}|s_{i,i+1}\rangle
    +\prod_{i\in \mathrm{odd}}^{\otimes}|s_{i,i+1}\rangle ,
\end{equation}
where by $|s_{i,j}\rangle$ we denote a singlet state between the spins on sites $i$ and $j$.

The exact ground state could be understood from the fact that the Hamiltonian at the MG point decomposes in the same sense of Eq. (\ref{eq:CBhamdecomp}). 
Specifically,
\begin{eqnarray}
    H=\frac{J_1}{2}\sum_{i=1}^L \Bigl(h_{i,i+1} + h_{i+1,i+2} + h_{i,i+2}\Bigr), \\
    \lVert H\rVert=\frac{J_1}{2}\sum_{i=1}^L \Big\lVert h_{i,i+1} + h_{i+1,i+2} + h_{i,i+2}\Big\rVert 
\end{eqnarray}
holds. The matrix norm $\lVert\cdot\rVert$ plays the same role as $\mu_{\max}$ here.
Since the individual terms after this decomposition reduces to a triangle with equal weights, we can reuse the exact $\SoS$ from Eq. (\ref{eq:WeakTriangleSoS}), obtaining 
\begin{equation}
    \frac{3 J_1 L}{4}\mathbb{I} - H 
    = 
    \sum_{k=1}^{L}
    \frac{3J_1}{4}
    \biggl\{
        \mathbb{I}-\frac{2}{3}
        \bigl(
            h_{k-1,k} + h_{k,k+1} + h_{k-1,k+1}
        \bigr)
        \vspace{-2mm}
    \biggr\}^2
\end{equation}
for the periodic boundary condition ($L+i\equiv i$). 
This corresponds to the standard projector expression for frustration-free models as we mentioned earlier, and implies that $NPA_1(\mathcal{Proj})$ is able to obtain the exact ground state energy at the MG point. 
As was the case for proving exactness of graphs considered in \cref{subsec:compbip}, we can see that this SoS is indeed exact (tight) from the fact that the ground state \cref{eq:MGGS} achieves energy $3L/4$. The latter calculation can be carried out rather easily by noticing that the singlet configuration fully satisfies the bonds ($h_{ij}$) they are on ($J_1$) while bonds without the singlets ($J_2$) still gain energy $1/4$ of their weights coming from the identity in \cref{eq:QMCHamDef}.

This is demonstrated in Fig. \ref{fig:MG}, where we compare the exact energy $E_{\mathrm{GS}}$ and the SDP energy $E_{\mathrm{SDP}}$ for the $L=16$ case with periodic boundary condition with various values of $J_2$ fixing $J_1=1$. 
The fact that the SDP algorithm obtains the exact ground state energy is reflected as the two energy density values coinciding at the MG point in the left panel, and correspondingly in the center panel the relative error becomes 0. 
Interestingly, the SDP algorithm seems to be ``more sensitive" to the MG point in this $J_1$-$J_2$ model compared to the actual ground state energy, when we try to detect it by looking into the derivatives of the energy (right panel).

The fact that in Fig. \ref{fig:MG} we see the SDP algorithm only obtaining the energy exactly at the MG point establishes that there are no other exactly-solvable points in the $J_1$-$J_2$ model, even if we allow non-local terms as long as they are limited to degree-2 in polynomials of singlet projectors.

\subsubsection{The Shastry-Sutherland model}\label{subsubsec:SS}

The Shastry-Sutherland (SS) model \cite{sha81exa} is a two-dimensional Heisenberg model that also admits an exact ground state representation for a certain parameter region. 
The Hamiltonian in the {\scshape QMaxCut} convention would read 
\begin{equation}
    H=J\sum_{\langle ij\rangle} h_{ij} + 2\alpha \sum_{\langle\hspace{-0.5mm}\langle ij\rangle\hspace{-0.5mm}\rangle} h_{ij}
\end{equation}
where $\langle ij\rangle$ represents bonds of the $L\times L$ square lattice (with periodic boundary condition) and $\langle\hspace{-1mm}\langle ij\rangle\hspace{-1mm}\rangle$ represents diagonal bonds of the Shastry-Sutherland lattice as illustrated in Fig. \ref{fig:lattices}(b). For simplicity, we fix $J=1$. 

This model has an obvious unique ground state when $\alpha$ is large enough, since the diagonal bonds with weight $2\alpha$ gives a perfect matching of the sites. In that parameter region, the unique ground state could be written as 
\begin{equation}\label{eq:SSGS}
    |\mathrm{GS}\rangle = 
    \prod_{\langle\hspace{-0.5mm}\langle ij\rangle\hspace{-0.5mm}\rangle}^{\otimes}|s_{i,j}\rangle , 
\end{equation}
again illustrated in Fig. \ref{fig:lattices}(b). 

For the SS model with $\alpha>1$, we are again able to decompose the Hamiltonian into triangles with weights 1, 1, and $\alpha$, as previously discussed in section \ref{subsubsec:crown}. 
It is easy to check that the Shastry Sutherland lattice (Fig. \ref{fig:lattices} (b)) can be decomposed into such triangles geometrically, with all triangles having two edges from the square lattice and one from the diagonal edges. 
Now, we can reuse Eq. (\ref{eq:StrongTriangleSoS}) to obtain 
\begin{eqnarray}\label{eq:SS-SOS}
    &&\left(\alpha+\frac{J}{2}\right)N\mathbb{I} - H =
    \sum_{\triangle}
    \left(\alpha+\frac{J}{2}\right)\times\\
    &&\hspace{-3mm}\biggl\{
    \mathbb{I}-
    \sum_{\mathrm{edges}\in\triangle} \hspace{-3mm}   \frac{4\alpha+2J\pm(-2)^{j}\sqrt{2\alpha(2\alpha-J)-2J^2}}{3J+6\alpha}h_{\mathrm{edge}}
    \biggr\}^2 , \nonumber
\end{eqnarray}
which gives the exact value only when $\alpha\geq J$. 
The summation $\sum_{\triangle}$ is taking the summation for all right triangles in the SS lattice as in the decomposition, and the summation inside of the square is for the three different edges for each such triangle. The $(-2)^j$ factor only appears for the edges with weight $J$ belonging to the square lattice where we set $j=1$, and not for the diagonal edges with weight $\alpha$ which we set $j=0$. Just as in Eq. (\ref{eq:StrongTriangleSoS}), the SoS has a degree of freedom in choosing $\pm$ for the square root term. 
As was the case for the Maujumdar Ghosh model, here again the SoS \cref{eq:SS-SOS} is upper bounding the ground state energy with $(\alpha+J/2)N$ where $N=L^2$ is the number of qubits in the lattice. Confirming that \cref{eq:SSGS} indeed achieves the energy $(\alpha+J/2)N$ completes the proof that the SoS is exact.

In Fig. \ref{fig:SS}, we demonstrate the performance of the $NPA_1(\mathcal{Proj})$ SDP algorithm applied to the SS model with system size $n=L^2=16$ and $J=1$ fixed. We can see that the algorithm obtains the exact ground state energy for the entirety of the $\alpha\geq 1$ region, which exactly coincides where Eq.~(\ref{eq:SS-SOS}) gives a proper SoS (otherwise it has no real coefficients), and also the decomposition exists. 
The true ground state actually becomes the dimer singlet state \cref{eq:SSGS} from $\alpha\geq3/4$ for this system size, although the SDP algorithm fails to obtain that. 
This means that while $\alpha\geq 1$ was the condition used to show frustration-freeness in \cite{sha81exa}, relaxing the notion to allow non-local terms (but still only having degree-1 terms in the SoS) does not enlarge the region of exact-solvability. It would be interesting to see how the exactly solvable region changes as a function of the level of the NPA hierarchy.

\begin{figure*}[t]
    \hspace{-1.75cm}
    \begin{minipage}[b]{0.32\linewidth} 
    \includegraphics[width=7cm]{./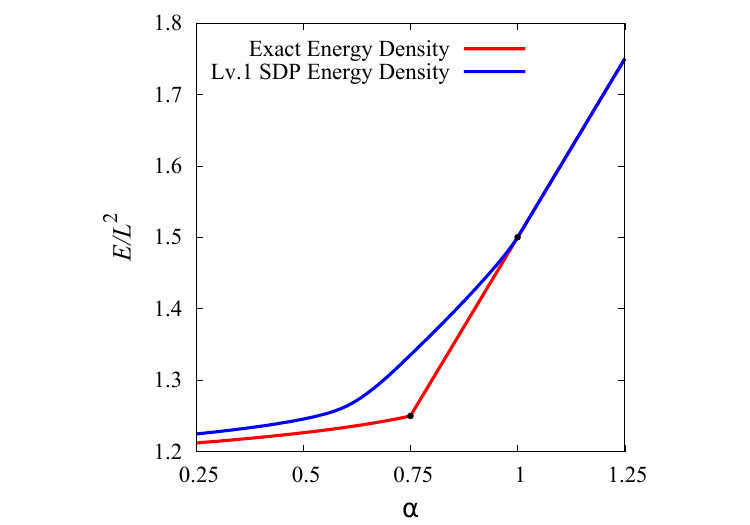}
    \end{minipage}
    \begin{minipage}[b]{0.32\linewidth} 
    \includegraphics[width=7cm]{./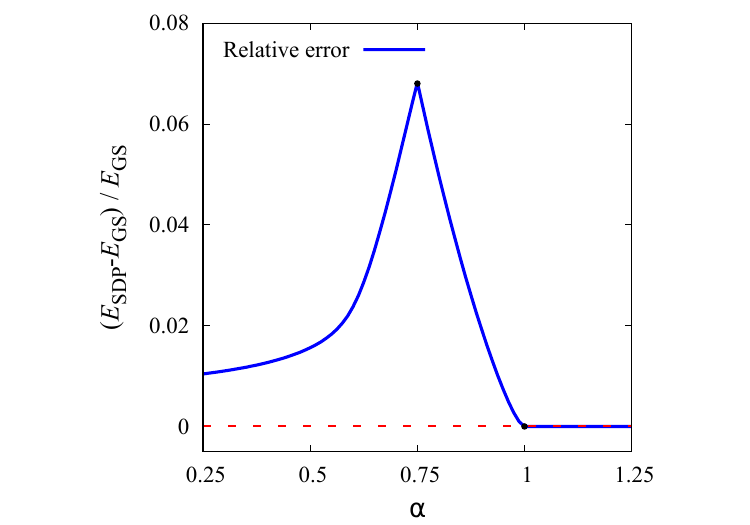}
    \end{minipage}
    \begin{minipage}[b]{0.3\linewidth} 
    \includegraphics[width=7cm]{./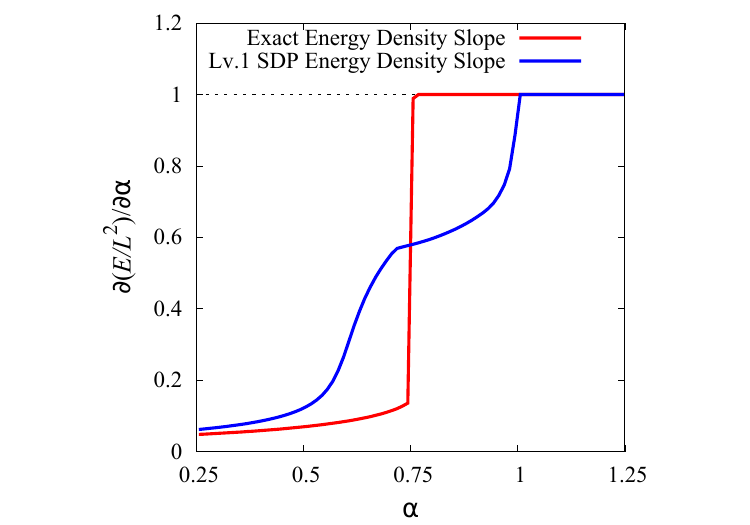}
    \end{minipage}
    \caption{Performance of the SDP algorithm for system size $n=L^2=16$ Shastry-Sutherland model with periodic boundary condition. Comparison is made between the exact ground state energy $E_{\mathrm{GS}}$ and the values obtained by the Lv.1 singlet projector SDP algorithm $E_{\mathrm{SDP}}$. The energy densities (left) are very close, and we can see that the relative error (center) becomes exactly 0 for $\alpha\geq 1$. When we take the derivative of the energy densities (right), the SDP algorithm seems to be detecting the two phase transitions in the model more sharply compared to the exact solution, at this system size.}
    \label{fig:SS}
\end{figure*}

Furthermore, by looking at the first derivative of the SDP energy as a function of $\alpha$, we can clearly see that there are two points where $\partial E/\partial \alpha$ has a singularity (\cref{fig:SS} right), namely $\alpha \simeq 0.73$ and $\alpha =1$. 
The existence and the nature of different phases in the SS model is actively discussed in the condensed-matter physics context, where there is expected to be at least two phase transition points, i.e., singular points \cite{kog00qua,lee19sig,yan22qua}. 
The fact that the SDP energy derivative exhibit two singular points from relatively small system sizes suggest the possibility of this approach be used to detect phase transitions in similar models, without relying on comparison with exactly obtained ground states. The SDP algorithm also allows us to calculate observables other than energy such as the squared N{\'e}el order parameter from the moments, with a guarantee that they converge to the true value when the NPA hierarchy converges to the true ground state energy value. 
This lets us to interpret such physical observables obtained this way to be regarded as a first-order approximation. In the next section we see that even such approximated quantities can show essential characteristics in physical systems. 

Another thing to note is that both Hamiltonians for the SS model and the MG model allowed decomposition of the Hamiltonian as in \cref{eq:CBhamdecomp} and \cref{eq:stardecomp}. In the cases of SS and MG models, the sub-Hamiltonians were the triangles with $\alpha,J,J$ bonds and $J_1/2, J_1/2, J_2=J_1/2$ bonds respectively. 
However, it should be noted that the existence of such decomposition is not a necessary condition for obtaining exact SoSs. For example, for the crown graph which we present an exact SoS in section \ref{subsubsec:crown}, it appears that there are no such decomposition, while still having an exact SoS. 
The same could be said for complete graphs with even number of vertices. 
This fact gives us hope on discovering new exactly-solvable Hamiltonians, since oftentimes the search for frustration-free Hamiltonians relies on the existence of such decomposition \cite{gho23exa,kum02qua}. 

\subsubsection{The Heisenberg chain}\label{subsec:HeisenbergChain}
The nearest neighbor antiferromagnetic Heisenberg chain is one of the simplest yet nontrivial quantum spin system that also happens to be a {\scshape QMaxCut} instance. 
The Hamiltonian we consider here is simply the chain 
\begin{equation}
H=\sum_{i=1}^L h_{i,i+1}, 
\end{equation}
with a periodic boundary condition $L+k \equiv k$. 
This corresponds to setting $J_2=0$ for the $J_1$-$J_2$ model in section \ref{subsubsec:MG}. 
Although the Heisenberg chain has an exact solution thanks to the Bethe ansatz \cite{bet31zur}, the exact solvability of the model is quite different from the previous two models: it does not involve frustration-freeness, and our SDP algorithm is therefore not expected to solve it exactly. 

\begin{figure}[t]
    \subfigure{\includegraphics[width=8.3cm]{./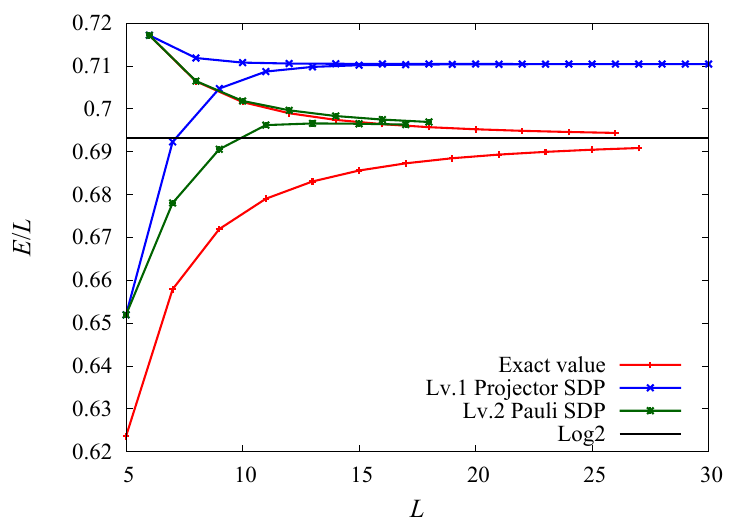}}
    \subfigure{\includegraphics[width=8.3cm]{./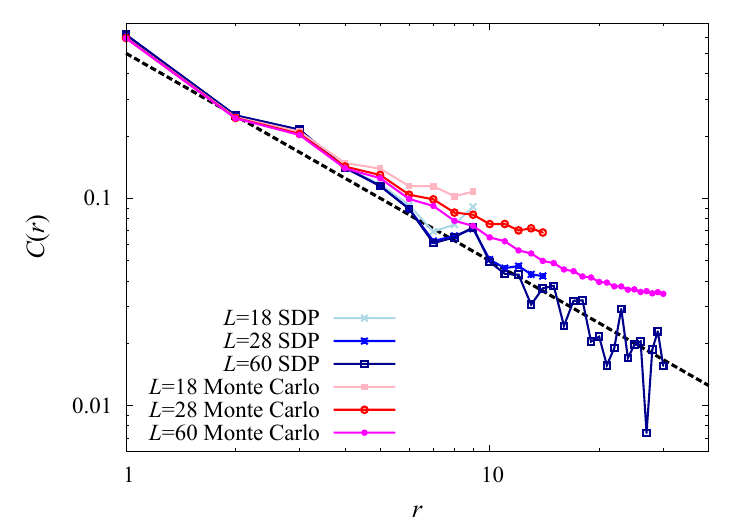}}
\caption{\label{fig:Cycle}Performance of the SDP algorithm on the Heisenberg chain with periodic boundary conditions(cycle graphs). The left panel shows the exact and SDP obtained energy densities for system sizes $L=5$-$30$. In the thermodyanmic limit $L\rightarrow\infty$, the true energy density converges to $\log 2$. The right panel shows the correlation function $C(r)$ as a function of distance, both obtained by Lv. 1 projector SDP and quantum Monte Carlo (essentially exact). The dotted line shows the theoretically known asymptotic decay exponent $C(r)\propto r^{-1}$.}
\end{figure}

In the left panel of \cref{fig:Cycle}, we show our numerical results on how the SDP algorithm performs on the Heisenberg chain, by comparing the $NPA_1(\mathcal{Proj})$, $NPA_2(\mathcal{Pauli})$, and the exact value for various system sizes. 
We plot the energy {\it density} $E/L$ here, so the fact that all three cases converge to different values indicate that the absolute error of the total energy increases linearly with the system size $L$ for large enough $L$. Still, the qualitative behavior of approaching the limiting value from above and below for even and odd $L$ is reproduced in both of the SDP methods. 

In the right panel, we show the correlation function 
\begin{equation}\label{eq:corr}
    C(r) := \langle \mathrm{GS} | (-1)^r Z_i Z_{i+r} |\mathrm{GS}\rangle = (-1)^r\frac{1-4 ~M(\mathbb{I}, h_{i,i+r})}{3}, 
\end{equation}
obtained by $NPA_1(\mathcal{Proj})$ and Monte Carlo (virtually exact value) for system sizes $L=18, 28,$ and $60$. 
Note that the translation symmetry of the cycle ensures the well-definedness of $C(r)$ regarding the choice of $i$ in the definition. 
The second equality in \cref{eq:corr} is valid only in the case where the SDP algorithm obtains the exact ground state. However, we can still measure the RHS quantity even in cases where the algorithm fails and consider the obtained result as an approximation.
Remarkably, when utilizing the SDP-obtained correlation function in this manner, it aligns very well with the true correlation function for small values of $r$ as shown in the figure. This can be attributed to the fact that the energy density exhibits a relative error of only $\sim2\%$.

One characteristic feature of the Heisenberg chain is that the ground state displays a power-law decaying correlation with critical exponent $\eta =1$, i.e., $C(r)\propto r^{-1}$, which is closely linked to the long-range entanglement it has. 
The fact that the SDP-obtained correlation function displays the same type of power-law decay with essentially the correct exponent (albeit the jagged feature) is quite interesting especially when it is compared to the exact correlation function of finite systems, since it appears to have even smaller finite-size corrections. 
This raises the intriguing possibility that SDP-derived quantities capture the underlying ``physics" of the ground state, even when there is no physical quantum state corresponding to the optimal moment matrix. 

Finally, an alert reader may notice from the figure that both $NPA_1(\mathcal{Proj})$ and $NPA_2(\mathcal{Pauli})$ are exact for the $L=6$ hexagon case. 
Although we were unable to obtain an analytic $\SoS$, we were able to study the structure of the ground state from the SDP perspective, which we provide in Appendix \ref{app:hexagon}.

\begin{acknowledgments}
We thank Claudio Procesi for bringing to our attention several important references.  J.T. thanks Hosho Katsura, Cristopher Moore, Sho Sugiura, and Seiji Takahashi for valuable discussions. C.Z. thanks Micha{\l} Adamaszek from MOSEK for helpful discussions regarding improving the efficiency of solving SDPs using MOSEK. 
K.T. and O.P. acknowledge discussions with Anirban Chowdhury.  J.T., C.R., and C.Z. thank Elizabeth Crosson for initially igniting this fruitful project. 
J.T. and C.Z. acknowledge support from the U.S. National Science Foundation under Grant No. 2116246, the U.S. Department of Energy, Office of Science, National Quantum Information Science Research Centers, and Quantum Systems Accelerator. 
O.P. and K.T. are supported by Sandia National Laboratories. Sandia National Laboratories is a multimission laboratory managed and operated by National Technology and Engineering Solutions of Sandia, LLC., a wholly owned subsidiary of Honeywell International, Inc., for the U.S. Department of Energy’s National Nuclear Security Administration under contract DE-NA-0003525. This work was supported by the U.S. Department of Energy, Office of Science, Office of Advanced Scientific Computing Research, Accelerated Research in Quantum Computing.
\end{acknowledgments}

\section*{Data Availability}
The data presented in this article are openly available at \cite{data}.

\appendix
\section{Proof that $\mathcal{Perm}(V, w)$ computes $\SWAP(V, w)$}

The proof is relatively simple given the well-known form of the irreducible representations of the Symmetric group.  For completeness we review the necessary background before proving the theorem at the end of the section.  

\subsection{Representation Theory}\label{sec:rep_theory}
Given a finite group $G$ a representation is a pair $(\rho, V)$ where $V$ is a vector space and $\rho: G\rightarrow GL(V)$ which is a homomorphism of groups.  As such, $\rho(g) \in GL(V)$ satisfies $\rho(g_1 g_2)=\rho(g_1) \rho(g_2)$ and $\rho(g^{-1})=\rho(g)^{-1}$.  We say two representations $(\rho_1, V)$ and $(\rho_2, W)$ are isomorphic if there is an isomorphism of vector spaces $T: V\rightarrow W$ which satisfies $T\rho_1(g) T^{-1} =\rho_2(g) $ for all $g\in G$.  If $(\rho, V)$ is a representation and $V'\subseteq V$ is a subspace satisfying $\rho(g) \ket{v'} \in V'$ for all $\ket{v'} \in V'$ and $g\in G$ then $V'$ is called a $G$-invariant subspace.  A representation $(\rho, V)$ is called irreducible if the only $G$-invariant subspaces are $V$ and $\{0\}$.  We will refer to an irreducible representation simply as an irrep. 
 We will often drop the function $\rho$ from a representation and talk about elements of the group as acting on vectors from $V$, i.e. $g \ket{v}:= \rho(g) \ket{v} $.  

The group algebra, denoted $\mathbb{C}[G]$, is a particular representation which fully captures the representation theory of a given group.  This is the vector space of formal complex linear combinations of the group elements $\mathbb{C}[G]=\{\sum_{g\in G} c_g g :\,\, c_g \in \mathbb{C} \,\, \forall g\}$.  $G$ acts on this space according to the multiplcation rule of the group: $g\sum_{g'\in G} c_{g'} g'=\sum_{g'\in G} c_{g'} g g' $.  The vector space is also an algebra with multiplication extended by linearity to the formal sums: $\left( \sum_{g\in G} b_g g \right) \left(\sum_{g'\in G} c_{g'} g' \right)=\sum_{g, g'\in G} b_g c_{g'} g g'$.

It is natural in the setting of representations to consider ``algebraic constraints''.  If $p=\sum_g c_g g \in \mathbb{C}[S_n]$, we say $(\rho, V)$ satisfies constraint $p$ if
$$
\sum_{g \in G} c_g \rho(g)=0.
$$
It is clear that if $(\rho_1, V)$ and $(\rho_2, W)$ are isomorphic representations then $(\rho_1, V)$ satisfies constraint $p$ if and only if $(\rho_2, W)$ satisfies constraint $p$:
\begin{align*}
\sum_g c_g \rho_1(g)=0 \Leftrightarrow T\left( \sum_g c_g \rho_1(g)\right) T^{-1}=0 \\
\Leftrightarrow \sum_g c_g \rho_2(g)=0.
\end{align*}
So, it well-defined to say a particular representation satisfies an algebraic constraint without reference to an explicit basis (any isomorphic representation also satisfies the constraint).  Similarly it is well-defined to talk about eigenvalues of a representation in abstraction since the characteristic polynomial is invariant under isomorphism:
\begin{align*}
&\det\bigg(\lambda\mathbb{I}-\sum_g c_g \rho_1 (g)\bigg)=0 \\
\Leftrightarrow& \det(T) \det(T^{-1})\det\bigg(\lambda\mathbb{I}-\sum_g c_g \rho_1 (g)\bigg)=0 \\
\Leftrightarrow& \det\bigg(T \left( \lambda\mathbb{I}-\sum_g c_g \rho_1 (g)\right)T^{-1}\bigg)=0 \\
\Leftrightarrow& \det\bigg(\lambda\mathbb{I}-\sum_g c_g \rho_2 (g)\bigg)=0
\end{align*}

The central theorem of representation theory for finite groups is that an arbitrary representation $(\rho', V')$ is isomorphic to one which decomposes into irreps.  For every group $G$ there is some finite list of irreps $\{V_1, V_2, ... V_q\}$ such that an arbitrary finite dimensional representation $(\rho', V')$ is isomorphic to a representation $(\rho, V)$ where for a set of non-negative integers $\{m_i\}_{i=1}^q$ $V$, decomposes as $V=\bigoplus_{i=1}^q m_i V_i$ where $m_iV_i= \bigoplus_{j=0}^{m_i} V_i$ ($m_i$ copies of $V_i$).  Further it holds that $\rho(g)$ is block diagonal in this decomposition for all $g\in G$.  In this decomposition the block of $\rho(g)$ corresponding to a particular $V_i$ is simply $\rho_i(g)$ where $\rho_i$ is the representation corresponding to $V_i$.  

Now we may reinterpret the previous observations about the spectrum and constraints in the context of this decomposition.  Since $\rho(g)=\bigoplus_i m_i \rho_i(g)$ is block diagonal, a particular constraint is satisfied by $(\rho, V) \cong \bigg(\bigoplus_i m_i V_i, \bigoplus_i m_i \rho_i(g) \bigg)$ if and only if it is satisfied by all the irreps $V_i$ in the decomposition with $m_i \geq 1$.  We say an irrep $V_i$ is involved in $(\rho, V)$ if $m_i\geq 1 $.  Further we can find the smallest/largest eigenvalue of some $p \in \mathbb{C}[G]$ in $(\rho, V)$ by finding the smallest/largest eigenvalue of each irrep and taking the minimum/maximum of the set of eigenvalues. 

For any set $X$, $G=\langle X \rangle_F$ is the free group generated by strings of elements from $X$.  The product of two strings is defined as their concatenation and $x^{-1}$ is formally defined so that the set of strings composed of $x$ and $x^{-1}$ forms a group: $\langle X \rangle_F= \{x_1... x_p:  x_i \in X \text{ or } x_i^{-1} \in X \,\, \forall i\}$.  If $R\subseteq \langle X\rangle$ then $G=\langle X | R \rangle_F := \langle X \rangle_F/N $ where $N$ is the smallest normal subgroup of $\langle X \rangle_F$ containing $R$.  If $H$ is some group generated by $X$, then we say group $H$ has a finite presentation given by generators $X$ and relations $R$ if $H\cong \langle X|R\rangle_F$ (isomorphism of groups).  

\subsection{Specht Modules}\label{sec:specht}

Before presenting the proof of \Cref{thm:perm_is_opt} we will need a little background on the irreps of the Symmetric group.  This material is all standard, please see \cite{jam06} or \cite{ful13} for a reference.  Irreps of $S_n$ are parameterized by partitions of $n$.  Formally, $Part_n=\{(\lambda_1, \lambda_2, ..., \lambda_d): \,\, d\leq n, \,\, \lambda_i\in \mathbb{Z}_{ > 0}, \text{ and } \lambda_i \geq \lambda_{i+1} \forall i\}$.  We will designate a partition $(\lambda_1, ..., \lambda_d)$ simply as $\lambda$.  Partitions correspond uniquely to Young diagrams.  A Young diagram consists of rows of adjacent squares where the number of squares in row $i$ is $\lambda_i$:
\begin{center}
$(4, 2, 1) \leftrightarrow $\,\,\,\ydiagram{4, 2, 1}.
\end{center}
A Young Tableaux is a numbering of the boxes of a diagram using $[n]$, i.e.
\begin{center}
\begin{ytableau}
2 & 1 & 7 & 6 \\
3 & 5 \\
4
\end{ytableau}.
\end{center}
We say that a Young tableaux is of shape $\lambda\in Part_n$ if it is obtained by numbering the diagram corresponding to partition $\lambda$.  For $d\leq n$ we define a{\it Young fragment} as a labeling of a diagram from $Part_d$ using a subset of the letters, $A\subseteq [n]$.  This is simply a Young Tableaux for $S_d$ but with some of the letters replaced by elements of $[n]$, i.e.
\begin{center}
\begin{equation}\label{eq:young_frag_ex}
\begin{ytableau}
2 & 1 & 7 & 6 \\
3 
\end{ytableau}.
\end{equation}
\end{center}

A group action of $S_n$ on Young Tableaux can be naturally defined where $\sigma \in S_n$ acts on a tableaux by permuting the letters in each box according to $\sigma$.  Formally if $[a_1, ..., a_p]$ is a row of $t$ then the corresponding row of $\sigma \,t$ is $[\sigma(a_1), ..., \sigma(a_p)]$:
\begin{center}
\begin{equation}\label{eq:action_on_tab}
\begin{array}{|c|c|c|}
\hline
1 & 2 & 3\\
\hline
 4 & 5 & 6\\
\hline
 7&\multicolumn{2}{c}{}\\
\cline{1-1}
\end{array}
\xrightarrow[]{\sigma}
\begin{array}{|c|c|c|}
\hline
\sigma(1) & \sigma(2) & \sigma(3)\\
\hline
 \sigma(4) & \sigma(5) & \sigma(6)\\
\hline
 \sigma(7)&\multicolumn{2}{c}{}\\
\cline{1-1}
\end{array}.
\end{equation}
\end{center}
Given a tableaux $t$ we define $R_t$ and $C_t$ as the row and column stabilizer respectively.  Formally, $\sigma \in R_t$ if every row of $\sigma \,t$ has the same elements as the corresponding row of $t$ (possibly with a permuted order).  In the example in \Cref{eq:action_on_tab}, $R_t$ is the set of all permutations $\sigma\in S_7$ such that $\sigma(\{1, 2, 3\})=\{1, 2, 3\}$, $\sigma(\{4, 5, 6\})=\{4, 5, 6\}$ and $\sigma(7)=7$.  $C_t$ is defined analogously.  Given a Young fragment $f$ the row stabilizer $R_f$ is defined as the set of all permutations which stabilize the rows of $f$ as well as the letters not included in $f$.  For the example in \Cref{eq:young_frag_ex} $n=7$ so $\sigma \in R_f$ if $\sigma(\{2, 1, 7, 6\})=\{2, 1, 7, 6\}$, $\sigma(3)=3$, $\sigma(4)=4$ and $\sigma(5)=5$.  

Given a tableaux or fragment $f$ (note that tableau are also fragments) we can now define the{\it Young symmeterizer} as:
\begin{equation}
Y_f=\sum_{\sigma \in R_f} \sum_{\gamma \in C_f} sign(\gamma) \sigma \gamma \in \mathbb{C}[S_n].
\end{equation}
In general we will denote $a_f =\sum_{\sigma \in R_f} \sigma $ and $b_f=\sum_{\gamma \in C_f } sign(\gamma) \gamma$ so that $Y_f=a_f b_f$.

Young symmeterizers can be used to construct all the representations of $S_n$ as subspaces of $\mathbb{C}[S_n]$.  For $t$ some tableaux of shape $\lambda$ define the subspace
\begin{equation}
    V_t=\left\{\left(\sum_{g\in G} c_g g\right) Y_t: c_g \in \mathbb{C} \,\, \forall g\right\}\subseteq \mathbb{C}[S_n].
\end{equation}
We interpret $V_t$ as a representation of $S_n$ by acting from the left with elements of $S_n$ (a left $S_n-$module) and interpreting the resulting expression using multiplication in the group algebra.  If $t$ and $t'$ are tableau of the same shape $\lambda$ then $V_t \cong V_{t'}$, so we will often denote $V_t$ simply as $V_\lambda$.  The set of subspaces $\{V_\lambda\}$ form a complete set of representatives for irreps of the symmetric group:
\begin{theorem}[\cite{ful13} Theorem 4.3]
For each $\lambda\in Part_n$ let $t(\lambda)$ be some tableaux of shape $\lambda$.  $\{V_{t(\lambda)}\}_{\lambda\in Part_n}$ is a complete set of irreps for $S_n$.
\end{theorem}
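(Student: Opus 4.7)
The plan is to prove this classical result via the standard three-step strategy: show each $V_{t(\lambda)}$ is irreducible, show that distinct shapes give non-isomorphic irreps, and then conclude by a counting argument that we have exhausted the list.

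First I would prove that the Young symmetrizer is quasi-idempotent: $Y_t^2 = c_\lambda Y_t$ for some nonzero scalar $c_\lambda$. The computation hinges on a combinatorial ``exchange'' lemma: if $g \in S_n$ satisfies $a_t \, g \, b_t \neq 0$, then $g \in R_t \cdot C_t$. This lemma is proved by contrapositive: if $g \notin R_t C_t$, then $g^{-1} R_t g$ and $C_t$ share a nontrivial transposition $(i\,j)$, and one pairs terms $\sigma \gamma$ with $\sigma \gamma (i\,j)$ in the expansion of $a_t g b_t$ so they cancel due to the sign twist in $b_t$. Applied to $Y_t^2 = a_t b_t a_t b_t$, this pins the product down to a scalar multiple of $Y_t$, with $c_\lambda = n!/\dim V_\lambda$ nonzero.

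Next I would prove the key one-dimensionality statement $Y_t \, \mathbb{C}[S_n] \, Y_t = \mathbb{C} \, Y_t$, which follows from the same exchange lemma since any element of the left-hand side is a linear combination of $a_t g b_t$ terms, each of which either vanishes or is a scalar multiple of $Y_t$. This immediately gives irreducibility of $V_t := \mathbb{C}[S_n] Y_t$: for any nonzero submodule $W \subseteq V_t$, choose $w = \alpha Y_t \in W$ with $\alpha \in \mathbb{C}[S_n]$ nonzero; then $Y_t W \subseteq \mathbb{C} Y_t$, and nonvanishing of $Y_t w$ for some choice (using quasi-idempotency to rule out total annihilation) forces $Y_t W = \mathbb{C} Y_t$, hence $W \supseteq \mathbb{C}[S_n] Y_t = V_t$.

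Then I would show $V_\lambda \not\cong V_\mu$ for $\lambda \neq \mu$. Equip $Part_n$ with the dominance order and establish the vanishing lemma: if $\lambda \not\trianglerighteq \mu$ and $t, s$ are tableaux of shapes $\lambda, \mu$, then $a_t b_s = 0$ via the same row/column transposition trick as in the exchange lemma. Without loss of generality assuming $\mu \not\trianglerighteq \lambda$, it follows that $Y_{t(\lambda)} \cdot V_{t(\mu)} = 0$ while $Y_{t(\lambda)} \cdot V_{t(\lambda)} = \mathbb{C} Y_{t(\lambda)} \neq 0$; since any $S_n$-isomorphism would intertwine the action of $Y_{t(\lambda)}$, this yields a contradiction.

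Finally I would invoke the general fact that the number of irreducible complex representations of a finite group equals the number of conjugacy classes, together with the bijection between conjugacy classes of $S_n$ and cycle types, i.e., partitions of $n$. Since we have produced $|Part_n|$ pairwise non-isomorphic irreps, the list $\{V_{t(\lambda)}\}_{\lambda \in Part_n}$ is complete. The main obstacle in this whole outline is the exchange lemma on $a_t g b_t$; once that is in hand, quasi-idempotency, irreducibility, and distinctness are essentially formal consequences, and the counting step is standard character theory.
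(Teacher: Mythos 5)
The paper does not prove this statement anywhere; it is imported verbatim from the cited reference (Fulton--Harris, Theorem~4.3) as background for Appendix~B, so there is no in-paper argument to compare against. Your outline is precisely the standard proof from that source: the exchange lemma gives quasi-idempotency and $Y_t\,\mathbb{C}[S_n]\,Y_t=\mathbb{C}\,Y_t$, hence irreducibility; a dominance-type vanishing lemma gives pairwise non-isomorphism; and counting conjugacy classes against partitions gives completeness. The architecture is correct and complete.

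Two details need tightening. First, in the irreducibility step, quasi-idempotency of $Y_t$ does not by itself rule out the case $Y_tW=0$ for a nonzero submodule $W\subseteq V_t$, since $Y_t$ itself need not lie in $W$. The standard fix is semisimplicity: if $Y_tW=0$ then $W\cdot W\subseteq \mathbb{C}[S_n]\,Y_t\,W=0$, yet a nonzero left ideal of the semisimple algebra $\mathbb{C}[S_n]$ is generated by an idempotent $e=e^2\in W\cdot W$, a contradiction. So Maschke's theorem (or an equivalent trace-form argument) is genuinely needed here, not just $Y_t^2=c_\lambda Y_t$. Second, your vanishing lemma has the dominance condition reversed: as stated (``if $\lambda\not\trianglerighteq\mu$ then $a_tb_s=0$'') it already fails for $\lambda=(1,1)$, $\mu=(2)$, where $R_t$ and $C_s$ are both trivial and $a_tb_s=e\neq 0$. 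The correct hypothesis is $\lambda\ntrianglelefteq\mu$ (equivalently $\mu\not\trianglerighteq\lambda$): then some two entries share a row of $t$ and a column of $s$, and the transposition trick kills $a_t\,x\,b_s$. Your subsequent application under the WLOG assumption $\mu\not\trianglerighteq\lambda$ is in fact consistent with the corrected statement, and by symmetry the conclusion $V_\lambda\ncong V_\mu$ survives in either case; but the lemma as literally written is false and should be restated before being used.
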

The Young symmeterizers act in many ways as projectors onto the corresponding irreps.  Indeed, a crucial property (see \cite{ful13} Lemma 4.26) of $Y_t$ is that $V_t Y_t =V_t$ or equivalently that:
\begin{equation}\label{eq:square_young}
    Y_t^2= n_t Y_t
\end{equation}
with $n_t > 0$.

Let $t$ be a Young tableaux of shape $\lambda$ and let $f$ be a fragment.  We say that $f${\it fits inside} $\lambda$ if $f$ is contained in $t$ when we place the two diagrams on top of each other with the top left corners coincident (see \Cref{fig:fit_no_fit}).  In order to formally define this let $t$ have $k$ rows $r_1, ..., r_k$ with $s_i$ elements in $r_i$.  Let $f$ have $\ell$ rows $q_1, ..., q_\ell$ with row $q_i$ having $x_i$ elements.  We say that $f$ fits inside $\lambda$ or $t$ if $\ell \leq t$  and $x_i \leq s_i$ for all $i \in [\ell]$.  The main fact that we will need in this context is that inside a subspace $V_t$, $V_f=0$ for all fragments $f$ of a given shape (as a constraint) if and only if $f$ does not fit in $t$.  We will only use a special case of this known result \cite{lit34, pro76} (Pieri's rule)\knote{edited here}, namely when $f$ is of shape $\lambda=(1, 1, 1)$, so we provide a simple proof of this fact for the readers convenience.  

\begin{theorem}[\cite{ful13} Exercise 4.44]\label{thm:fragment}
Let $t$ be some tableaux.  If $t$ is of shape $(n-d, d)$ then all fragments $f$ of shape $(1, 1, 1)$ evaluate to $0$ on $V_t$: $Y_f p Y_t=0$ for all $p\in \mathbb{C}[S_n]$.  If $t$ has more than two rows ($t$ is of shape $(\lambda_1, ... \lambda_d)$ for $d\geq 3$) then there exists a fragment $f$ of shape $(1, 1, 1)$ such that $Y_f\neq 0$ on $V_t$.   
\end{theorem}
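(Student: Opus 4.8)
The plan is to prove both halves by purely algebraic manipulations inside $\mathbb{C}[S_n]$, using only the properties of Young symmetrizers already recalled. The useful preliminary observation is that a fragment $f$ of shape $(1,1,1)$ with boxes labelled $a,b,c$ has trivial row stabilizer ($R_f=\{e\}$, since each row of $f$ is a single box and every letter outside $f$ is fixed), so $a_f=e$ and $Y_f=b_f=\sum_{\gamma\in C_f}\mathrm{sign}(\gamma)\gamma$ is simply the antisymmetrizer over $C_f\cong S_3$ (permutations of $\{a,b,c\}$ fixing all other letters). A second reduction uses the conjugation identities $\sigma a_t\sigma^{-1}=a_{\sigma t}$ and $\sigma b_t\sigma^{-1}=b_{\sigma t}$ for $\sigma\in S_n$: writing $p=\sum_\sigma c_\sigma\sigma$ one computes $Y_f p Y_t=b_f p a_t b_t=\sum_\sigma c_\sigma\,b_f\sigma a_t b_t=\sum_\sigma c_\sigma\,(b_f Y_{\sigma t})\,\sigma$, so for the first part it suffices to show $b_f Y_{t'}=0$ for \emph{every} tableau $t'$ of the same shape as $t$, and for the second part it suffices to exhibit one fragment $f$ with $b_f Y_t\neq 0$.

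For the first part, let $t'$ have shape $(n-d,d)$, hence two rows, and fix any $(1,1,1)$-fragment $f$ with labels $a,b,c$. By pigeonhole two of $a,b,c$ — say $a$ and $b$ — lie in the same row of $t'$, so the transposition $\tau=(a\,b)$ belongs to $R_{t'}$ (whence $\tau a_{t'}=a_{t'}$) and simultaneously to $C_f$ (whence $b_f\tau=\mathrm{sign}(\tau)b_f=-b_f$). Therefore $b_f a_{t'}=b_f(\tau a_{t'})=(b_f\tau)a_{t'}=-b_f a_{t'}$, which forces $b_f a_{t'}=0$ and hence $b_f Y_{t'}=(b_f a_{t'})b_{t'}=0$. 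By the reduction above, $Y_f p Y_t=0$ for all $p\in\mathbb{C}[S_n]$ and all $(1,1,1)$-fragments $f$.

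For the second part, assume $t$ has shape $(\lambda_1,\dots,\lambda_d)$ with $d\ge 3$ and take $f$ to be the fragment on the first three entries $t(1,1),t(2,1),t(3,1)$ of the first column of $t$ — this is precisely where the hypothesis $d\ge 3$ is used. Then $C_f$ is a subgroup of $C_t$, and a one-line count (each $\delta\in C_t$ is hit by exactly $|C_f|$ pairs $(\gamma,\rho)\in C_t\times C_f$ with $\gamma\rho=\delta$, all contributing sign $\mathrm{sign}(\delta)$) gives $b_t b_f=|C_f|\,b_t$. Consequently
\begin{equation*}
Y_t Y_f Y_t=a_t\,b_t\,b_f\,a_t\,b_t=|C_f|\,a_t b_t a_t b_t=|C_f|\,Y_t^2=|C_f|\,n_t\,Y_t,
\end{equation*}
using \cref{eq:square_young}, and this is nonzero since $n_t>0$ and $Y_t\neq 0$ (it generates the nonzero module $V_t$). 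In particular $Y_f Y_t\neq 0$, so taking $p=e$ shows that this $f$ does not vanish on $V_t$.

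The only step needing a genuine idea is the first part, and there the entire content is the pigeonhole remark that a two-row tableau cannot keep three letters in three distinct rows, combined with the fact that a transposition lying simultaneously in $R_{t'}$ and in $C_f$ annihilates $b_f a_{t'}$; everything else (the triviality of $R_f$, the conjugation relations for $a_t,b_t$, and the coset count $b_t b_f=|C_f|b_t$) is routine bookkeeping. I would close by remarking that this is exactly the $(1,1,1)$-specialization of the general statement ``$f$ vanishes on $V_t$ iff $f$ does not fit inside the shape of $t$'' coming from Pieri's rule / the Littlewood--Richardson rule, and that only this vertical-strip case is needed for the characterization of the $\SWAP$ operator algebra used elsewhere in the paper.
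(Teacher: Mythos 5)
Your proof is correct and takes essentially the same route as the paper's: part 1 is the same pigeonhole argument producing a transposition lying simultaneously in $R_{t'}$ and $C_f$ that annihilates $b_f a_{t'}$ (you phrase it as $b_f a_{t'}=-b_f a_{t'}$ rather than via the idempotents $\frac{1\pm(f_1,f_2)}{2}$), and part 2 picks the same fragment at the top of the first column and uses the same subgroup-collapse identity for the column antisymmetrizers together with $Y_t^2=n_tY_t\neq 0$. The only cosmetic difference is that you sandwich with $Y_t$ to conclude $Y_fY_t\neq 0$ (so $p=e$ already works), whereas the paper exhibits the explicit witness $p=b_ta_t$; both are valid.
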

\begin{proof}
Let $f$ be the fragment corresponding to a single column with the letters $f_1, f_2$ and $f_3$.  For the first part of the theorem by linearity it is sufficient to show $Y_f g Y_t=0 $ for all $g\in S_n$.  It is simple to verify that $gY_t g^{-1}=Y_{gt}$ so $Y_f g Y_t=Y_f Y_{t'} g$ for $t'$ the same shape as $t$.  Since $t$ has two rows, two of $\{f_1, f_2, f_3\}$ must be in the same row of $t'$.  WLOG assume $f_1$ and $f_2$ are in the same row of $t'$.  Since row and column stabilizers are subgroups of the $S_n$ and since $sign(\gamma (f_1, f_2))=-sign(\gamma)$ for all $\gamma \in S_n$, it holds that 
\begin{align}
a_{t'} =\frac{1+(f_1, f_2)}{2} a_{t'}  \,\,\,\,\,\, \text{and} \,\,\,\,\,\,  b_{f}=b_{f} \frac{1-(f_1, f_2)}{2},
\end{align}
where $(f_1, f_2)$ is the transposition of $f_1$ and $f_2$.
It follows that 
$$
Y_f g Y_t= b_f a_{t'} b_{t'}= b_f \frac{1-(f_1, f_2)}{2} \frac{1+(f_1, f_2)}{2} a_{t'} b_{t'} =0
$$

For the second part of the theorem given $t$ we want to give a fragment $f$ and $p\in \mathbb{C}[S_n]$ such that $Y_f p Y_t \neq 0$.  Let $f_1$, $f_2$ and $f_3$ be the first three numbers in the first column of $t$ and let $p=b_t a_t$.  Once again since $R_t$ is a subgroup of $S_n$ it holds that $a_t^2 = |R_t|\, a_t$.  By the definition of $f$ and the fact that $C_t$ is a subgroup it holds that: for any $\sigma \in C_f$, $sign(\sigma) \sigma b_t=b_t$.  Hence,
$$
Y_fp Y_t=b_f b_t a_t a_t b_t =|C_f| \,  b_t a_t a_t b_t= |C_f| \, |R_t|\, b_t a_t b_t. 
$$
Since $a_t (b_t a_t b_t) =Y_t^2 \neq 0$ by \Cref{eq:square_young}, $Y_fp Y_t= b_t a_t b_t\neq 0$.

\end{proof}

The quantum permutation unitaries have a well known decomposition into irreps of $S_n$, we will need the decomposition here.  If $\sigma\in S_n$ is a permutation we can define the permutation unitary 
\begin{equation}
p(\sigma) \ket{x_1} \ket{x_2} ... \ket{x_n}=\ket{x_{\sigma^{-1}(1) }} \ket{x_{\sigma^{-1}(1) }}... \ket{x_{\sigma^{-1} (n)}}.
\end{equation}
$(p, (\mathbb{C}^2)^{\otimes n})$ is a representation of the symmetric group with known decomposition into irreps:
\begin{equation}\label{eq:quantum_irreps}
(\mathbb{C}^2)^{\otimes n} \cong \bigoplus_{\substack{\lambda\in Part_n:\\ \lambda=(n-d, d)}} (n+1-2 d) V_\lambda
\end{equation}

\subsection{Proof of \Cref{thm:perm_is_opt}}\label{sec:perm_conv}

Let $Q_{ij}$ be some feasible solution for $\mathcal{Perm}$.  The set of matrices generated by $\{Q_{ij}\}$ is naturally a group since $Q_{ij}$ is self-inverse.  Let this group be denoted $G'$.  We will demonstrate a homomorphism $\psi: S_n \rightarrow G'$ which implies the operators $\{Q_{ij}\}$ correspond to a representation of the Symmetric group.  Then we note that the final set of constraints, \Cref{eq:anti_comm}, is only valid if the representation can be decomposed into certain irreps of the symmetric group (exactly those in \Cref{eq:quantum_irreps}).  This then implies that the permutation programs gets the optimal quantum eigenvalue by the discussion in \Cref{sec:rep_theory}.

Let $E=\{ij: i, j \in [n] \text{ and } i<j\}$ be the set of all possible undirected edges.  Define $X=\{q_{ij}\}_{ij\in E}$.  Recall from \Cref{sec:rep_theory} that $\langle X \rangle_F$ is the group set of strings with elements from $X$ where multiplication is defined through concatenation.  Here we are treating the operator variables as formal symbols in the context of the free group.  Let $R_1, R_2, R_3 \subseteq \langle X \rangle_F$ be defined as 
\begin{gather*}
R_1=\{q_{ij}^2:\,\,ij \in E\},\\
R_2=\{(q_{ij}q_{kl})^2: \,\, ij, kl \in E \text{ and $i$, $j$, $k$, $l$ all distinct}\},\\
\text{ and }R_3=\{(q_{ij} q_{jk})^3: ij, jk \in E \text{ and $i$, $j$ $k$ all distinct}\}.
\end{gather*}
Using the self-inverse property of the operators $Q_{ij}$ it is clear that $(Q_{ij}Q_{kl})^2=\mathbb{I}$ and $(Q_{ij} Q_{jk})^3=\mathbb{I}$ are equivalent to sets of constraints \Cref{eq:sym_const_2} and \Cref{eq:sym_const_3} respectively.  Let $R=R_1 \cup R_2 \cup R_3$ and let $N$ be the smallest normal subgroup containing $R$. It is known \cite{cox13gen} that $S_n$ has a finite presentation with generators $X$ and relations $R$ so $S_n \cong \langle X\rangle_F/N$.  In particular it is known that the map $\rho: S_n \rightarrow  \langle X\rangle_F/N$ defined by $ \rho((i, j)) = q_{ij} N$ (which is extended to $S_n$ by writing a permutation as a product of transpositions) is well-defined and an isomorphism of groups.  Let $\theta: \langle X \rangle_F \rightarrow G'$ be the map defined as 
\begin{gather*}
\theta(q_{i_1j_1}q_{i_2j_2} ...q_{i_qj_q}) = Q_{i_1j_1}\,\,Q_{i_2j_2} \,\,... \,\, Q_{i_qj_q},  \\
\theta(1) =\mathbb{I}.
\end{gather*}
Since $Q_{ij}$ is self-inverse it is clear that $\theta$ is a homomorphism of groups.  Since $N=\langle g r g^{-1} : \,\, r\in R, g\in \langle X \rangle_F\,\, \rangle $, 
\begin{align*}
&\theta((g_1 r_1 g_1^{-1})(g_2 r_2 g_2^{-1})... (g_q r_q g_q^{-1}))\\=
&\theta(g_1) \theta(r_1) \theta(g_1)^{-1}\theta(g_2) \theta(r_2) \theta(g_2)^{-1} ... \theta(g_q) \theta(r_q) \theta(g_q)^{-1}.
\end{align*}
The operators satisfy constraints from $R$ so $\theta(r_i)=\mathbb{I}$ for all $i$ and $\theta(n)=\mathbb{I}$ for all $n\in N$.  Let $b: \langle X \rangle_F\rightarrow \langle X \rangle_F/N $ be the homomorphism defined as $b(g)=g N$.  The ``Fundamental theorem on homomorphisms'' (Theorem 26 from \cite{sau99alg}) implies the existence of a homomorphism $\phi: \langle X \rangle_F/N \rightarrow G'$ such that $\theta= \phi \circ b$.  Note that $\phi$ must map $q_{ij}N \rightarrow Q_{ij}$.  It follows that the map $\psi:=\phi\circ\rho: S_n \rightarrow G'$ is a homomorphism of groups and that the operators $Q_{ij}$ must correspond to a valid representation of $S_n$. Let this representation be denoted $(\psi, W)$ where $W$ is the Hilbert space on which the $Q_{ij}$ act.

By \Cref{sec:rep_theory} $(\psi, W)$ must be isomorphic to a decomposition of the form $\left(\bigoplus_\lambda m_\lambda \rho_\lambda, \bigoplus_\lambda m_\lambda V_\lambda\right) $ where $V_\lambda$ are the spaces defined in \Cref{sec:specht}.  We will demonstrate that $m_{\lambda}=0$ unless $\lambda=(n-d, d)$.  First note that the final set of constraints  \Cref{eq:anti_comm} all correspond to young fragments (see \Cref{sec:specht})
\begin{align}
    &\mathbb{I}-Q_{ij}-Q_{jk} -Q_{ik}+Q_{ij}Q_{jk}+Q_{jk}Q_{ij}=0 \\ 
    &\Leftrightarrow ~~ \psi(1)-\psi((i, j))-\psi((j, k))-\psi((i, k)) \nonumber\\
    &~~~~~~ +\psi((i, j)(j, k))+\psi((j, k)(i, j))=0, \nonumber
\end{align}
which is then equivalent to $Y_f=0$ for $f$ the fragment
\begin{center}
\begin{equation} 
f=\begin{ytableau}
i \\
j \\
k
\end{ytableau}.
\end{equation}
\end{center}
By \Cref{thm:fragment}, $Y_f=0$ for all such $f$ on a subspace $ V_\lambda$ if and only if $\lambda=(n-d, d)$.  It follows that $m_\lambda>0$ only for $\lambda=(n-d, d)$ so $\mathcal{Perm}(G, w) \geq \SWAP(G, w)$.  Taking the quantum swap operators and the optimal eigenstate as a feasible solution ($p_{ij}=P_{ij}$ from \Cref{eq:swap_def}) we can see that $\mathcal{Perm}(G, w) \leq \SWAP(G, w)$ hence $\mathcal{Perm}(G, w)=\SWAP(G, w)$.

\begin{figure}[h]
\caption{Overall idea of proof.  All Young fragments $f$ of a given shape have value $0$ on subspaces $V_\lambda$ exactly when they don't ``fit inside" $\lambda$ (right side).  By enforcing that all the fragments with a single column of three entries evaluate to zero, we can exactly select for $V_\lambda$ with two rows (exactly those that appear in the decomposition \Cref{eq:quantum_irreps}).}\label{fig:fit_no_fit}
\centering
\includegraphics[width=0.5\textwidth]{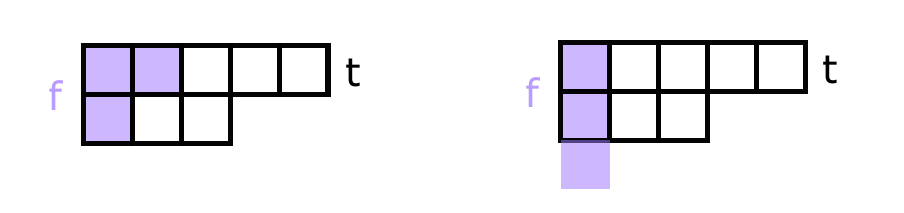}
\end{figure}

\section{Derivations of relations in $\mathcal{Proj}$ from the minimal constraints} \label{app:relationderivation}

In \cref{subsec:qmcop} we argued from \cref{thm:perm_is_opt} that all relations among singlet projectors are derivable just from the minimal relations, namely \cref{eq:anticommproj,eq:singprojnormalize,eq:singproj1,eq:singproj2,eq:singprojcomm}. 
Here, we give concrete examples of such derivation. %, since although the possibility of them is logically immediate from the theorem, the actual derivation can become nontrivial in general. 
Let us first look into the prominent fact that all Heisenberg Hamiltonians have total spin as a good quantum number.

\begin{proposition}\label{prop:totalspincons}
Any {\scshape QMaxCut} Hamiltonian $H$ preserves the total spin, i.e., commutes with $\sum_{i<j}h_{ij}$.
\end{proposition}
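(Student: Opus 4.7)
The plan is to show that $[h_{ab},\sum_{i<j}h_{ij}]=0$ for every edge $\{a,b\}$, which suffices since $H$ is a real-linear combination of singlet projectors. I will split the sum over pairs $\{i,j\}$ according to how $\{i,j\}$ intersects $\{a,b\}$. The three cases to consider are: (i) $\{i,j\}\cap\{a,b\}=\emptyset$, (ii) $\{i,j\}=\{a,b\}$, and (iii) $|\{i,j\}\cap\{a,b\}|=1$.

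Cases (i) and (ii) are immediate. For (i), \cref{eq:singprojcomm} gives $[h_{ab},h_{ij}]=0$. Case (ii) is trivial. All the content is in case (iii). Here I group the sum in pairs: for each vertex $k\neq a,b$, I collect the two terms $h_{ak}+h_{bk}$ and aim to show $[h_{ab},\,h_{ak}+h_{bk}]=0$.

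The key identity I would establish first is
\begin{equation*}
(h_{ak}+h_{bk})\,h_{ab}=\tfrac{1}{2}h_{ab},
\end{equation*}
together with its Hermitian conjugate $h_{ab}(h_{ak}+h_{bk})=\tfrac{1}{2}h_{ab}$, which immediately gives the commutator $[h_{ab},h_{ak}+h_{bk}]=0$. To derive the displayed identity, I would take the anticommutation relation \cref{eq:anticommproj} in the form
\begin{equation*}
h_{ab}h_{ak}+h_{ak}h_{ab}=\tfrac{1}{2}(h_{ab}+h_{ak}-h_{bk}),
\end{equation*}
multiply on the right by $h_{ab}$, use the projector relation $h_{ab}^2=h_{ab}$ from \cref{eq:singproj1}, and substitute the ``quarter formula'' $h_{ab}h_{ak}h_{ab}=\tfrac{1}{4}h_{ab}$ from \cref{prop:add_const}. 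Rearranging yields exactly $(h_{ak}+h_{bk})h_{ab}=\tfrac{1}{2}h_{ab}$. The other identity follows either by an identical calculation starting from the anticommutation on $\{h_{ab},h_{bk}\}$, or by taking the Hermitian adjoint and using \cref{eq:singproj2}.

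Having $[h_{ab},h_{ak}+h_{bk}]=0$ for every $k\notin\{a,b\}$, summing over $k$ handles all of case (iii), and combining the three cases gives $[h_{ab},\sum_{i<j}h_{ij}]=0$. Linearity in the weights $w_{ij}$ then yields $[H,\sum_{i<j}h_{ij}]=0$. The only non-routine step is identifying the grouping $h_{ak}+h_{bk}$ as the right object to pair with $h_{ab}$; individually, $[h_{ab},h_{ak}]$ is nonzero, and it is only the symmetric combination that cancels. Everything else is mechanical algebra in the constraints \crefrange{eq:singproj1}{eq:singproj2}.
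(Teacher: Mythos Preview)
Your proof is correct and shares the same overall architecture as the paper's: reduce to a single projector $h_{ab}$, dispose of cases (i) and (ii) trivially, and for case (iii) pair the overlapping terms as $h_{ak}+h_{bk}$ to show $[h_{ab},h_{ak}+h_{bk}]=0$.

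Where you diverge is in the algebra for that last commutator. The paper expands $h_{ab}$ via the anticommutation relation (with $k$ as pivot) as $h_{ab}=h_{ak}+h_{bk}-2(h_{ak}h_{bk}+h_{bk}h_{ak})$ and then computes the commutator directly; the resulting eight degree-3 terms cancel pairwise after one application of $h_{\cdot k}^2=h_{\cdot k}$. You instead establish the sharper identity $(h_{ak}+h_{bk})h_{ab}=\tfrac{1}{2}h_{ab}$ by multiplying the anticommutation relation on the right by $h_{ab}$ and invoking the quarter formula of \cref{prop:add_const}. Your route is a bit more conceptual---it exhibits $h_{ab}$ as a simultaneous right- and left-``eigenvector'' of $h_{ak}+h_{bk}$ with eigenvalue $\tfrac{1}{2}$, from which the commutator vanishes immediately---at the cost of appealing to \cref{prop:add_const}. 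The paper's route stays closer to the bare constraints \crefrange{eq:singproj1}{eq:singproj2} and makes explicit that a degree-3 intermediate expression is unavoidable, which is the point the appendix is trying to illustrate.
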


\begin{proof}
    Since any Hamiltonian is a summation of individual projector terms, we can consider a particular $h_{12}$ WLOG, and then if we can show $[h_{12},\sum_{i<j}h_{ij}]=0$, the proof is complete.
    From \cref{eq:singprojcomm}, $[h_{12},h_{ij}]=0$ if $i,j\neq 1,2$, so what remains in the sum $\sum_{i<j}h_{ij}$ that do not trivially commute are $\sum_{j\neq 1,2}(h_{1j}+h_{2j})$. 
    Now we show $[h_{12},h_{1j}+h_{2j}]=0$ for any $j$, proving the proposition. 
    The anticommutation relation \cref{eq:anticommproj} allows expanding $h_{12}$ in terms of $h_{1j}$ and $h_{2j}$, giving us
    \begin{eqnarray}
        &[h_{12},h_{1j}+h_{2j}]  \\
        &=[h_{1j}+h_{2j}-2(h_{1j}h_{2j}+h_{2j}h_{1j}),h_{1j}+h_{2j}]\nonumber\\
        %&= [h_{1j},h_{2j}]+[h_{2j},h_{1j}]\\&-2[h_{1j}h_{2j},h_{1j}+h_{2j}]-2[h_{2j}h_{1j},h_{1j}+h_{2j}]\nonumber\\
        &= -2~\bigl\{
        h_{1j}h_{2j}h_{1j}-h_{1j}^2h_{2j}+h_{1j}h_{2j}^2-h_{2j}h_{1j}h_{2j}\nonumber\nonumber\\
        &
        +h_{2j}h_{1j}^2-h_{1j}h_{2j}h_{1j}+h_{2j}h_{1j}h_{2j}-h_{2j}^2h_{1j}
        \bigr\}=0,\nonumber\label{eq:totalspinlastline}
    \end{eqnarray}
    where we used \cref{eq:singprojnormalize} for the last line. 
    Taking the sum over $j$ gives us the wanted expression. 
\end{proof}

Note again that the relation being derived here $[h_{12},h_{1j}+h_{2j}]=0$, 
like any other relation for singlet projectors, 
is 1. trivially verifiable by explicitly calculating matrices $H_{ij}$s, 
but 2. also derivable when $h_{ij}$s are regarded as abstract algebraic objects%purely defined by \cref{eq:anticommproj,eq:singprojnormalize,eq:singproj1,eq:singproj2,eq:singprojcomm}
, which was what we have shown here. 
In general, such derivation can become quite complex, since there are cases which require the order of the polynomial to become larger than the degree of the relation itself during the proof. 
%Just like any other relation for the singlet projectors, this relation could be easily verified by calculating all the operators as explicit matrices. 
%The point is that we can also treat $h_{ij}$s not as matrices but abstract algebraic objects purely defined by \cref{eq:anticommproj,eq:singprojnormalize,eq:singproj1,eq:singproj2,eq:singprojcomm}, and still derive these results as we did here.
%The example we provide above as well as \cref{prop:add_const} which we proved in Sec. \ref{subsec:qmcop} were both fairly short. However, in general the derivation may become rather complicated even when the relation seems superficially innocent since there are cases which require the order of the polynomial to become larger than the degree of the relation itself. 
In the above proof, the commutation $[h_{12},h_{1j}+h_{2j}]$ is a degree-2 expression, but the derivation required intermediate steps with a degree-3 polynomial as in Eq. (\ref{eq:totalspinlastline}).
To demonstrate how nontrivial the derivation can become, let us consider the following relation that reduces a degree-3 monomial into a degree-2 polynomial. This is one of the simplest cases with 4 qubits. 

\begin{proposition}\label{prop:line3reduce}
    For any distinct $i,j,k,$ and $l$, 
    \begin{align}\label{eq:line3reduce}
        h_{ij}h_{jk}h_{kl}=&\frac{1}{4}\bigl(h_{ij}h_{jk}+h_{jk}h_{kl}+h_{ij}h_{kl}+h_{ik}h_{jl}\\&~~-h_{ij}h_{jl}-h_{ik}h_{kl}-h_{il}h_{jk}\bigr). \nonumber
    \end{align}
\end{proposition}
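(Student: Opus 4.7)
The plan is to derive \cref{eq:line3reduce} purely from the anticommutation \cref{eq:anticommproj}, the commutation \cref{eq:singprojcomm}, and the idempotency \cref{eq:singproj1}, by setting up and solving a small linear system of relations among cubic monomials in the $h_{ab}$'s.

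As a first step, I would multiply the anticommutation relation $\{h_{jk},h_{kl}\}=\tfrac12(h_{jk}+h_{kl}-h_{jl})$ on the left by $h_{ij}$ and use $[h_{ij},h_{kl}]=0$ to obtain
\begin{equation*}
h_{ij}h_{jk}h_{kl}+h_{kl}h_{ij}h_{jk}=\tfrac12(h_{ij}h_{jk}+h_{ij}h_{kl}-h_{ij}h_{jl}),
\end{equation*}
and symmetrically, multiply $\{h_{ij},h_{jk}\}=\tfrac12(h_{ij}+h_{jk}-h_{ik})$ on the right by $h_{kl}$ to get
\begin{equation*}
h_{ij}h_{jk}h_{kl}+h_{jk}h_{kl}h_{ij}=\tfrac12(h_{ij}h_{kl}+h_{jk}h_{kl}-h_{ik}h_{kl}).
\end{equation*}

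The next step is to reduce the two remaining cubic terms $h_{kl}h_{ij}h_{jk}$ and $h_{jk}h_{kl}h_{ij}$ further. Using $[h_{ij},h_{kl}]=0$ to move the disjoint factor around, each of these can be rewritten with $h_{jk}$ on the outside; then iterated anticommutation at the $k$-junction (for pairs like $h_{jk}h_{kl}$ and $h_{ik}h_{kl}$) and at the $i$- and $l$-junctions (for the products involving the auxiliary diagonal edges $h_{jl},h_{ik},h_{il}$ that were just generated) expresses each reversed cubic as a combination of degree-$2$ monomials plus a residual multiple of the ``fully reversed'' cubic $h_{kl}h_{jk}h_{ij}$. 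Substituting these reductions back yields a $2\times 2$ linear system in the unknowns $h_{ij}h_{jk}h_{kl}$ and $h_{kl}h_{jk}h_{ij}$, whose unique solution (after collecting like monomials using \cref{eq:singprojcomm}) is precisely \cref{eq:line3reduce}.

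The main obstacle is that restricting oneself to anticommutation moves at the two interior junctions $j$ and $k$ alone produces relations equivalent modulo $[h_{ij},h_{kl}]=0$, so the linear system remains underdetermined and collapses to a tautology. Breaking this degeneracy requires invoking anticommutation at junctions involving the diagonal edges $h_{ik}$ and $h_{jl}$ — in particular $\{h_{ik},h_{kl}\}=\tfrac12(h_{ik}+h_{kl}-h_{il})$ and $\{h_{ij},h_{jl}\}=\tfrac12(h_{ij}+h_{jl}-h_{il})$ — which is exactly what generates the $h_{il}h_{jk}$ and $h_{ik}h_{jl}$ contributions on the right-hand side of \cref{eq:line3reduce}. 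The essential difficulty is combinatorial: one has to choose the substitution order so that every single-projector term produced along the way cancels, leaving only the seven quadratic monomials listed in \cref{eq:line3reduce}.
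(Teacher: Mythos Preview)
Your two starting identities are correct, but the linear system you propose to extract from them is degenerate in a way that your sketch does not repair. Concretely, once you rewrite $h_{kl}h_{ij}h_{jk}$ and $h_{jk}h_{kl}h_{ij}$ using only commutation of the disjoint pair and a single anticommutation, both of your displayed equations reduce to the \emph{same} relation
\[
h_{ij}h_{jk}h_{kl}-h_{kl}h_{jk}h_{ij}=\tfrac12\bigl(h_{ij}h_{jk}-h_{ij}h_{jl}-h_{kl}h_{jk}+h_{kl}h_{ik}\bigr),
\]
i.e.\ you only determine the \emph{commutator} $X-Y$, never the anticommutator $X+Y$. Your proposed remedy --- invoking $\{h_{ik},h_{kl}\}$ and $\{h_{ij},h_{jl}\}$ --- does not help: those are degree-$2$ identities, and multiplying them by the missing factor just regenerates cubics you have already accounted for. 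No amount of shuffling among the six ``path'' orderings $h_{\sigma(ij)}h_{\sigma(jk)}h_{\sigma(kl)}$ produces a second independent equation.

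The paper breaks this degeneracy by leaving the path configuration entirely: it first rewrites $h_{ij}h_{jk}h_{kl}$ in terms of the \emph{star} cubics $h_{ij}h_{jk}h_{jl}$ and $h_{ij}h_{jl}h_{kl}$ (all edges through $j$), using along the way the quarter-formula $4h_{ab}h_{bc}h_{ab}=h_{ab}$ of \cref{prop:add_const}. The key step is then the closed reduction of the star anticommutator,
\[
h_{ij}h_{jk}h_{jl}+h_{jl}h_{jk}h_{ij}=\tfrac12\bigl(h_{il}h_{jk}-h_{ij}h_{kl}-h_{jl}h_{ik}\bigr)+\tfrac14\bigl(h_{ij}+h_{jl}-h_{il}\bigr),
\]
which is what actually introduces the $h_{il}h_{jk}$ and $h_{ik}h_{jl}$ terms you need. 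A final touch-up uses the quarterformula-derived identity $(h_{ij}+h_{jk})h_{ik}=(h_{il}+h_{kl})h_{ik}$. None of these ingredients --- the detour through star cubics, the star anticommutator identity, or the quarterformula --- appear in your plan, and without at least one of them the system stays underdetermined.
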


\begin{proof}
    Similarly to the previous proof, we start by expanding $h_{jk}$ and $h_{kl}$ using the anticommutation relation \cref{eq:anticommproj}, with $l$ and $j$ as the ``pivot" respectively, obtaining
    \begin{equation}\label{eq:derivation1}
h_{ij}h_{jk}h_{kl}=\frac{1}{2}\left(\frac{1}{2}\left( h_{ij}h_{jk} + h_{ij}h_{kl}\right)-h_{ij}h_{jk}h_{jl} - h_{ij}h_{jl}h_{kl}\right)
    \end{equation}
    after organizing with \cref{eq:singprojnormalize}. 
    Since we can obtain
    \begin{equation}
        h_{jl}h_{jk}h_{ij}=\frac{1}{2}h_{jl}h_{ij}+h_{ij}h_{jl}h_{kl}-\frac{1}{2}(h_{ij}+h_{jl}-h_{il})h_{kl}, 
    \end{equation}
    by expanding $h_{jk}$ with $l$ and applying \cref{eq:anticommproj}, we can substitute the last term of \cref{eq:derivation1} to get 
    \begin{align}\label{eq:derivation2}
    h_{ij}h_{jk}h_{kl}=\frac{1}{2}\bigl(&\frac{1}{2} h_{ij}h_{jk} -h_{ij}h_{jk}h_{jl} - h_{jl}h_{jk}h_{ij} \nonumber\\
    &+\frac{1}{2}\left(h_{jl}h_{ij}-h_{jl}h_{kl}
    +h_{il}h_{kl}\right)\bigr),
    \end{align}
    after cancellation. 
    Thanks to the fact that the remaining degree-3 terms are exactly the same up to ordering, we can evaluate the sum of them by reordering $h_{jl}h_{jk}h_{ij}$ using the anticommutation relation \cref{eq:anticommproj} to make $h_{ij}h_{jk}h_{jl}$. If the reordering is done from right to left, we obtain 
    \begin{align}\label{eq:treesum}
        h_{ij}h_{jk}h_{jl}+h_{jl}h_{jk}h_{ij}= &\frac{1}{2}\left(h_{il}h_{jk}-h_{ij}h_{kl}-h_{jl}h_{ik}\right)\nonumber\\
        &~~+\frac{1}{4}\left(h_{ij}+h_{jl}-h_{il}\right). 
    \end{align}
    Plugging this into \cref{eq:derivation2} gives us 
    \begin{align}\label{eq:derivation3}
        h_{ij}h_{jk}h_{kl}=&\frac{1}{4}\bigl(h_{ij}\left(h_{jk}-h_{jl}+h_{kl}\right)-h_{jl}h_{kl}\nonumber\\
        &~~~~~~+h_{il}h_{kl}+h_{jl}h_{ik}-h_{il}h_{jk}\bigr),
    \end{align}
    after cancellation and using \cref{eq:anticommproj} to clean up the degree-1 terms. 
    Finally, we can use the relation 
    \begin{equation}
        (h_{ij}+h_{jk})h_{ik}=(h_{il}+h_{kl})h_{ik} \label{eq:SingProjBasicNonHermitian}
    \end{equation}
    obtainable by starting from \cref{eq:quarterformula} to have $h_{ik}(h_{ij}+h_{jk})h_{ik}=h_{ik}(h_{il}+h_{kl})h_{ik}$ and then reordering both sides again using \cref{eq:anticommproj}. 
    Applying this to \cref{eq:derivation3} results in \cref{eq:line3reduce}. 
    
\end{proof}

Of course from this proof alone we cannot completely conclude that {\it any} derivation of \cref{prop:line3reduce} must be at least this long.
However, it does show that a carefully designed sequence of formula application is needed to have the right cancellations to occur and finally enable us to use something like \cref{eq:treesum}, a somewhat easy degree-3 term to reduce. 

We also show several derivations of other relations that are needed to show convergence.

\begin{proposition}\label{prop:star3reduce}
    For any distinct $i,j,k,a, b$ and $c$ the following equations hold.
    \begin{align}\label{eq:2star_to_2star}
        h_{ij}h_{ik}-h_{ik} h_{jk}=(h_{ij}-h_{jk})/2.
    \end{align}
    \begin{align}\label{eq:triangle_reduce}
        h_{ij}h_{jk}h_{ik}=h_{jk}h_{ik}+(h_{ij}h_{ik}-h_{ik})/2.
    \end{align}
    \begin{align}\label{eq:star3reduce}
        h_{ij}h_{ik}h_{ia}=&\frac{1}{4}\bigl(h_{ij}h_{ik}+h_{ij}h_{ia}-h_{ij}h_{ka}+h_{ia}h_{ka}\\&~~+h_{ik}h_{ja}-h_{ja}h_{ka}-h_{ia}h_{jk}\bigr). \nonumber
    \end{align}   
    \begin{align}\label{eq:double2starreduce}
    4h_{ij}h_{jk}h_{ab}h_{bc}=h_{ia}h_{jb}h_{kc} + h_{ib}h_{jc}h_{ka} + h_{ic}h_{ja}h_{kb} \\
    \nonumber- h_{ia}h_{jc}h_{kb} - h_{ib}h_{ja}h_{kc} - h_{ic}h_{jb}h_{ka}\\
    \nonumber + h_{ij}h_{jk}(h_{ab} + h_{bc} - h_{ac}) + h_{ij}h_{ab}h_{bc}\\
    \nonumber - (h_{ij}h_{ja} - h_{ja}h_{ac} + h_{ij}h_{jc} + h_{ia}h_{ac})h_{kb}\\
   \nonumber  + (h_{ij}h_{ja} + h_{ja}h_{ab} - h_{ij}h_{jb} - h_{ia}h_{ab})h_{kc} \\
   \nonumber + (h_{ij}h_{jb} + h_{jb}h_{bc} - h_{ij}h_{jc} - h_{ib}h_{bc})h_{ka}\\
\nonumber + (h_{ia}h_{ac} - h_{ia}h_{ab} - h_{ib}h_{bc})h_{jk} \\
\nonumber + h_{ik}(h_{ja}h_{ab} + h_{jb}h_{bc} - h_{ja}h_{ac})
+ h_{ja}h_{ab}h_{bc} \\
\nonumber - h_{ij}h_{jb}h_{bc} - h_{ia}h_{ab}h_{bc} + h_{jb}h_{ab}h_{bc} - h_{ib}h_{ab}h_{bc} \\
\nonumber - h_{jb}h_{kb}h_{bc}
+ h_{ib}h_{kb}h_{bc} - h_{ij}h_{ja}h_{ab} - h_{ja}h_{ka}h_{ab} \\
\nonumber + h_{ia}h_{ka}h_{ab} + h_{ij}h_{ja}h_{ac} + h_{ja}h_{ka}h_{ac} - h_{ia}h_{ka}h_{ac}.
    \end{align}
\end{proposition}

\begin{proof}
In order to prove \Cref{eq:2star_to_2star} we can observe 
\begin{align}\label{eq:4}
     \frac{(h_{ij} - h_{jk})}{2} - h_{ij}  h_{ik} +  h_{ik}  h_{jk} + \\
    \nonumber \frac{1}{2} (-\frac{h_{ij}}{2} - \frac{h_{ik}}{2} + \frac{h_{jk}}{2} + h_{ij}  h_{ik} + h_{ik}  h_{ij}) \\
    \nonumber -  h_{ik}  (\frac{h_{ij}}{2} - \frac{h_{ik}}{2} - \frac{h_{jk}}{2} + h_{ik}  h_{jk} + h_{jk}  h_{ik}) \\
    \nonumber + (-h_{ik} +     h_{ik}^2)  h_{jk} -  h_{jk}  (-h_{ik} + h_{ik}^j)  \\
    \nonumber +\frac{1}{2} (-\frac{h_{ij}}{2} +\frac{h_{ik}}{2}+ \frac{h_{jk}}{2} - h_{ik}  h_{jk} - h_{jk}  h_{ik}) \\
    \nonumber + (\frac{h_{ij}}{2} - \frac{h_{ik}}{2} - \frac{h_{jk}}{2} + h_{ik}  h_{jk} +     h_{jk}  h_{ik})  h_{ik}=0
\end{align}
when we expand it.  Since the first three terms correspond to the identity we want to prove and the rest of the terms correspond to constraints, we can conclude \Cref{eq:2star_to_2star}.

In order to prove \Cref{eq:triangle_reduce} we can relabel \Cref{eq:2star_to_2star} to obtain the substitution $h_{ij}h_{jk} \rightarrow h_{jk}h_{ik}+(h_{ij}-h_{ik})/2$.
\begin{align}\label{eq:5}
    h_{ij}h_{jk} h_{ik}=(h_{jk}h_{ik}+(h_{ij}-h_{ik})/2)h_{ik}\\
    \nonumber = h_{jk}h_{ik}+(h_{ij}h_{ik}-h_{ik})/2.
\end{align}

    In order to prove \Cref{eq:star3reduce} we first start by expanding the final term $h_{ia}$ with $k$ as the pivot using \cref{eq:anticommproj}. This will yield 
    \begin{align}
        h_{ij}h_{ik}h_{ia}=& h_{ij}h_{ik} (h_{ka}+h_{ik}-2h_{ik}h_{ka}-2h_{ka}h_{ik})\\
        =&h_{ij}h_{ik}h_{ka} + h_{ij}h_{ik}^2\nonumber\\
          &~~~~~~~-2h_{ij}h_{ik}^2h_{ka} -2h_{ij}h_{ik}h_{ka}h_{ik}\label{eq:StarReductionProofIntermediate}\\
          =&\frac{1}{2}h_{ij}h_{ik}-h_{ij}h_{ik}h_{ka}\label{eq:StarReductionProofIntermediate2}
    \end{align}
    after applying \cref{eq:quarterformula} to the last term in \cref{eq:StarReductionProofIntermediate}. Now the second term in \cref{eq:StarReductionProofIntermediate2} is the same as the degree-3 term reduced in \cref{eq:line3reduce} up to relabeling of qubits. We can repeat the derivation in the proof of \cref{prop:line3reduce} to apply \cref{eq:line3reduce} to the second term, and obtain 
    \begin{align}
        h_{ij}h_{ik}h_{ia}=& 
        \frac{1}{4}(h_{ij}h_{ik}-h_{ik}h_{ka}-h_{ij}h_{ka}+h_{ij}h_{ia}\nonumber\\
        &~~~~~~~ +h_{jk}h_{ka}
        -h_{ia}h_{jk}+h_{ik}h_{ja}).\label{eq:StarReductionProofIntermediate3}
    \end{align}
    Finally, we can apply the relation $(h_{ia}-h_{ja})h_{ka}=(h_{jk}-h_{ik})h_{ka}$ to replace the two terms in \cref{eq:StarReductionProofIntermediate3} to obtain \cref{eq:star3reduce}. The relation is a  relabeling of \cref{eq:SingProjBasicNonHermitian}.

  For \Cref{eq:double2starreduce} we begin with $h_{ij} h_{jk} h_{ab} h_{bc}$ and expand $h_{jk} \rightarrow h_{ka}+h_{ja}-2 h_{ka}h_{ja}-2 h_{ja} h_{ka}$ to obtain
\begin{align}\label{eq:6}
h_{ij}h_{jk} h_{ab} h_{bc}=h_{ij}h_{ka}h_{ab}h_{bc}+h_{ij}h_{ja}h_{ab}h_{bc}\\
\nonumber-2(h_{ij} h_{ka}h_{ja}h_{ab}h_{bc}+h_{ij}h_{ja}h_{ka}h_{ab}h_{bc})
\end{align}
Relabelling \Cref{eq:star3reduce} and \Cref{eq:line3reduce} yields
\begin{align}\label{eq:relabelled_star}
    h_{ak}h_{aj}h_{ab}=\frac{1}{4}(h_{ak}h_{aj}+h_{ak}h_{ab}-h_{ak}h_{jb}+h_{ab}h_{jb}+\\
    \nonumber h_{aj}h_{kb}-h_{kb}h_{jb}-h_{ab}h_{jk})
\end{align}
and 
\begin{align}\label{eq:relabelledline}
    h_{ka}h_{ab}h_{bc}=\frac{1}{4}(h_{ka}h_{ab}+h_{ab}h_{bc}+h_{ka}h_{bc}+h_{kb}h_{ac}\\
    \nonumber -h_{ka}h_{ac}-h_{kb}h_{bc}-h_{kc}h_{ab}).
\end{align}
From \Cref{eq:relabelled_star} and \Cref{eq:relabelledline} we obtain
\begin{align}\label{eq:1}
    h_{ij} h_{ka}h_{ja}h_{ab}h_{bc}+h_{ij}h_{ja}h_{ka}h_{ab}h_{bc}\\
    \nonumber=\frac{1}{4}(h_{ij}h_{ak}h_{aj}h_{bc}
    +h_{ij}h_{ak}h_{ab}h_{bc}-h_{ij}h_{ak}h_{jb}h_{bc}\\
    \nonumber +h_{ij}h_{ab}h_{jb}h_{bc}+h_{ij}h_{aj}h_{kb}h_{bc}-h_{ij}h_{kb}h_{jb}h_{bc}\\
\nonumber-h_{ij}h_{ab}h_{jk}h_{bc}+ h_{ij}h_{ja}h_{ka}h_{ab} +h_{ij}h_{ja}h_{ab}h_{bc}\\
\nonumber+h_{ij}h_{ja}h_{ka}h_{bc}+h_{ij}h_{ja}h_{kb}h_{ac}-h_{ij}h_{ja}h_{ka}h_{ac}\\
\nonumber-h_{ij}h_{ja}h_{kb}h_{bc}-h_{ij}h_{ja}h_{kc}h_{ab})
\end{align}
The 5th term and 13th term cancel in \Cref{eq:1}.  Applying a commutation relation to the 7th term yields
\begin{align}\label{eq:2}
    h_{ij} h_{ka}h_{ja}h_{ab}h_{bc}+h_{ij}h_{ja}h_{ka}h_{ab}h_{bc}\\
    \nonumber=-\frac{1}{4}h_{ij}h_{jk}h_{ab}h_{bc}+\frac{1}{4}(h_{ij}h_{ak}h_{aj}h_{bc}
    +h_{ij}h_{ak}h_{ab}h_{bc}\\
    \nonumber -h_{ij}h_{ak}h_{jb}h_{bc}+h_{ij}h_{ab}h_{jb}h_{bc}-h_{ij}h_{kb}h_{jb}h_{bc}\\
    \nonumber+ h_{ij}h_{ja}h_{ka}h_{ab} +h_{ij}h_{ja}h_{ab}h_{bc}+h_{ij}h_{ja}h_{ka}h_{bc}\\
    \nonumber+h_{ij}h_{ja}h_{kb}h_{ac}-h_{ij}h_{ja}h_{ka}h_{ac}-h_{ij}h_{ja}h_{kc}h_{ab}).
\end{align}
Substituting \Cref{eq:2} into \Cref{eq:6} we obtain
\begin{align}
    h_{ij}h_{jk} h_{ab} h_{bc}=h_{ij}h_{ka}h_{ab}h_{bc}+h_{ij}h_{ja}h_{ab}h_{bc} \\
    \nonumber +\frac{1}{2} h_{ij}h_{jk}h_{ab}h_{bc} -\frac{1}{2}(h_{ij}h_{ak}h_{aj}h_{bc}
    +h_{ij}h_{ak}h_{ab}h_{bc}-\\
    \nonumber h_{ij}h_{ak}h_{jb}h_{bc}+h_{ij}h_{ab}h_{jb}h_{bc}-h_{ij}h_{kb}h_{jb}h_{bc}\\
    \nonumber+ h_{ij}h_{ja}h_{ka}h_{ab} +h_{ij}h_{ja}h_{ab}h_{bc}+h_{ij}h_{ja}h_{ka}h_{bc}\\
    \nonumber+h_{ij}h_{ja}h_{kb}h_{ac}-h_{ij}h_{ja}h_{ka}h_{ac}-h_{ij}h_{ja}h_{kc}h_{ab}),
\end{align}
or
\begin{align}\label{eq:3}
    \frac{1}{2}h_{ij}h_{jk} h_{ab} h_{bc}=h_{ij}h_{ka}h_{ab}h_{bc}+h_{ij}h_{ja}h_{ab}h_{bc} \\
    \nonumber -\frac{1}{2}(h_{ij}h_{ak}h_{aj}h_{bc}
    +h_{ij}h_{ak}h_{ab}h_{bc}-h_{ij}h_{ak}h_{jb}h_{bc}\\
    \nonumber +h_{ij}h_{ab}h_{jb}h_{bc}-h_{ij}h_{kb}h_{jb}h_{bc}
    + h_{ij}h_{ja}h_{ka}h_{ab} \\ \nonumber+h_{ij}h_{ja}h_{ab}h_{bc}+h_{ij}h_{ja}h_{ka}h_{bc}
    +h_{ij}h_{ja}h_{kb}h_{ac}\\
    \nonumber-h_{ij}h_{ja}h_{ka}h_{ac}-h_{ij}h_{ja}h_{kc}h_{ab}).
\end{align}
Finally we note all terms on the R.H.S. of \Cref{eq:3} may be reduced to degree-3 using either the 3-star constraint (\Cref{eq:star3reduce}) or the 3-line constraint (\Cref{eq:line3reduce}).  The final expression is given in \Cref{eq:double2starreduce}.
\end{proof}

The proofs above constitute implicit proofs that the identities above are elements of $U^\ell(\{\eta_j\})$ for small $\ell$ where $U^\ell$ is defined in \Cref{eq:U_ell_def} and $\{\eta_j\}$ corresponds to the constraints described in \Cref{eq:singproj1}-\Cref{eq:singproj2}.  Indeed algebraic substitutions can be interpreted as adding elements from $U^\ell(\{\eta_j\})$: $h_{ij}h_{jk} h_{ab}\rightarrow (-h_{jk}h_{ij}+(h_{ij}+h_{jk}-h_{ik}))h_{ab}$ can be interpreted as starting with $h_{ij}h_{jk} h_{ab}$ and adding $-(h_{ij}h_{jk} +h_{jk}h_{ij}-(h_{ij}+h_{jk}-h_{ik})/2)h_{ab}$.  The entire proof of a particular identity can be interpreted as starting with some identity which we wish to prove, i.e.  $h_{ij}h_{ik}-h_{ik} h_{jk}-(h_{ij}-h_{jk})/2$ and adding elements of $U^\ell(\{\eta_j\})$ until we reach zero at which point we have proven the identity plus some element of $U^\ell(\{\eta_j\})$ is zero.  This implies the identity itself is inside $U^\ell(\{\eta_j\})$.  Note that we wrote the proof of \Cref{eq:2star_to_2star} exactly in this way.  As is common in quantifying the degree of Sum of Squares proofs \cite{bar14}, in order to determine the minimal $\ell$ for which a particular constraint is in $U^\ell$ we must examine the proof and keep track of the max degree of any polynomial used.  This is crucial in examining i.e. the proof of \Cref{eq:star3reduce}, which is actually a degree-$5$ proof since it invokes \Cref{eq:quarterformula} rather than degree-$4$ as one might expect from a cursory glance.  We can restate the polynomial identities we have proven as follows.  We will use the notation ``$a=b$ $\in U^\ell(\{\eta_j\})$'' to mean that $a-b \in U^\ell(\{\eta_j\})$
\begin{corollary}\label{cor:alg1}
    Let $\{\eta_j\}$ be the constraints described in \Cref{eq:singproj1}-\Cref{eq:singproj2}.  For any distinct $i,j,k,a, b$ and $c$ the following statements hold.
    \begin{enumerate}
        \item  \label{item:alg3}\Cref{eq:triangle_reduce} $\in U^3(\{\eta_j\})$.
        \item\label{item:alg1} \Cref{eq:line3reduce} $\in U^4(\{\eta_j\}$.
        \item  \label{item:alg2} \Cref{eq:2star_to_2star} $\in U^3(\{\eta_j\})$
        \item  \label{item:alg4}\Cref{eq:star3reduce} $\in U^5(\{\eta_j\})$.
        \item  \label{item:alg5}\Cref{eq:double2starreduce} $\in U^7(\{\eta_j\})$
    \end{enumerate}
\end{corollary}

Using the NC Algebra Mathematica Package \cite{NCalg} we were able to prove a slightly stronger result for \Cref{item:alg4} which propagates and proves a slightly stronger result for \Cref{item:alg5}.  Convergence of the projector hierarchy at level $\ell^*=\lceil n/2 \rceil+3$ requires this stronger result while we can prove convergence at level $\ell^*=\lceil n/2 \rceil+4$ with \Cref{cor:alg1}.  Writing out the full proof (a very long equation of the form of \Cref{eq:2star_to_2star}) would be tedious so we only give the statement here.  
\begin{corollary}\label{cor:alg2}
     Let $\{\eta_j\}$ be the constraints described in \Cref{eq:singproj1}-\Cref{eq:singproj2}.  For any distinct $i,j,k,a$ \Cref{eq:star3reduce} $\in U^4(\{\eta_j\})$ and \Cref{eq:double2starreduce} $\in U^6(\{\eta_j\})$.
\end{corollary}

The previous examples can be seen as a degree-reduction formula of singlet projectors from degree-$i$ to degree-$j$ for $i\geq j$. These formulae let us reduce the degree of even-higher degree terms such as $h_{ij}h_{jk}h_{kl}h_{lm}$ because the derivation can be analogously done even when an additional term like $h_{lm}$ is present on the leftmost or rightmost of the monomial. 

\section{Nonexactness proofs of $NPA_1(\mathcal{Proj})$ with eigenvalue enumeration}
In this section, we provide proofs that $NPA_1(\mathcal{Proj})$ cannot obtain the exact extremal eigenvalue for two different types of graphs. The first is a crown graph \cref{fig:SmallGraphs} with a specific weight, and the second is uniform complete graphs with odd number of vertices. 
For both proofs, the essence is that we can explicitly construct a PSD $NPA_1(\mathcal{Proj})$ moment matrix that has an exceedingly large cost function value compared to the true extremal eigenvalue. The most nontrivial part is always showing that the matrix is PSD, which here we show by enumerating all eigenvalues explicitly. 
While for the odd complete graph case, we can show PSDness by constructing all the Gram vectors of the moment matrix analytically (which we do in \cref{subsec:complete}), that becomes too complicated for the crown graph case. Thus we provide proofs for both cases with eigenvalue enumeration here for completeness. 

\subsection{Nonexactness of $NPA_1(\mathcal{Proj})$ for certain weighted crown graphs}\label{app:crown}

Here, we prove that $NPA_1(\mathcal{Proj})$ fails to obtain the exact extremal eigenvalue for the crown graph 
\begin{equation}
    H=x h_{ab} + \sum_{j=1}^n \left( h_{aj}+h_{bj}\right) ,
\end{equation}
with certain range of the weight $x$ for the ``center edge" $h_{ab}$. Note that the $n+2$ vertices in total are labeled as $a, b, 1, 2, 3, \ldots, n$ and we assume $n\geq 3$. 
The true extremal eigenvalue of this Hamiltonian is 
\begin{equation}\label{eq:crownenergy}
    E_{\mathrm{GS}}=
    \begin{cases}
        n+1, & x\leq 1+\frac{n}{2},\\
        x+\frac{n}{2}, & x\geq 1+\frac{n}{2}.
    \end{cases},
\end{equation}
with two types of ground states for each $x$ range, depicted in Fig. \ref{fig:SmallGraphs} (b). 
As we proved in Sec. \ref{subsubsec:crown}, $\SoS$ obtains the exact ground state for ranges $x\leq (n+2)^2/4(n+1)$ and $x\geq n$. What we prove here is that $\SoS$ fails for the region $(n+2)/3<x<n$, i.e., that it gives a strictly larger value than the true extremal eigenvalue. We do this by explicitly constructing a moment matrix $M_1^{\mathbb{R}}$ that is a feasible solution for $NPA_1^\mathbb{R}(\mathcal{Proj})$ achieving the value $(3n^2 + 3(n-2)x)/4(n-1)$, which is strictly larger than the true value \cref{eq:crownenergy} in the aforementioned region. 
Note that the region we prove inexactness here matches the boundary where $\SoS$ fails/succeeds at $x=n$, but not for the $x=(n+2)^2/4(n+1)$ boundary. The intermediate region $(n+2)^2/4(n+1)<x<(n+2)/3$ is left as an open problem, although numerics strongly suggest that SDP indeed fails in that region.

\begin{proof}
Consider the following symmetric moment matrix $M$ which has columns and rows labeled by the identity $\mathbb{I}$ and $h_{ai}, h_{bi}, h_{ab}, h_{ij}$ for $i, j\in [n]$ and $i<j$. 
%in the order of $\mathbb{I}, h_{ab}, h_{a1}, h_{a2}, \ldots, h_{an}, h_{b1}, \ldots, h_{bn}, h_{12}, h_{13}, \ldots, h_{n-1~n}$, with the following elements: 
We set the matrix elements as 
\begin{align}\label{eq:oddcompmm}
    &M(\mathbb{I},h_{ab})=M(h_{ab},h_{ab})=\frac{3(n-2)}{4(n-1)},\\    &M(\mathbb{I},h_{ai})=M(h_{ai},h_{ai})
    %=M(\mathbb{I},h_{bi})=M(h_{bi},h_{bi})
    =\frac{3n}{8(n-1)},\\
    &M(h_{ai},h_{aj})=\frac{3n}{16(n-1)},\\
    &M(h_{ai},h_{bi})=\frac{3}{8(n-1)},\\
    &M(h_{ai},h_{bj})=\frac{3(n+2)}{16(n-1)},\\
    &M(h_{ai},h_{ab})=M(h_{bi},h_{ab})=\frac{3(n-2)}{16(n-1)}\\
    &M(\hat O,h_{ij})=0,
\end{align}
where $\hat O$ is any operator, i.e., the rows and columns for $h_{ij}$ are all 0. 
By construction, $M$ satisfies the requirements \cref{eq:singprojnormalize,eq:singprojcomm,eq:anticommproj,eq:singproj1,eq:singproj2}. 
The only nontrivial constraint that needs to be checked is $M\succeq 0$, which we show by listing all the eigenvalues and associated eigenvectors of $M$: 
\begin{enumerate}
    \item[(i)] $n(n-1)/2$ eigenvectors with eigenvalue trivially 0.
    \item[(ii)] $(n-1)$ eigenvectors with eigenvalue $3n/8(n-1) >0$. 
    \item[(iii)] $(n+1)$ eigenvectors with eigenvalue nontrivially 0. 
    \item[(iv)] Two eigenvectors with positive eigenvalues of the form $(\alpha,\beta,1,\ldots,1,0,\ldots,0)$ . 
\end{enumerate}
Since these $n(n-1)/2+(n-1)+(n+1)+2=1+{n+2 \choose 2}$ eigenvalues exhaust all of the eigenvalues of $M$ (size $1+{n+2 \choose 2}$), confirming the above list concludes the proof. 
In the following we confirm the eigenvectors belonging to the eigenvalues $(\mathrm{i})$ - $(\mathrm{iv})$ above. The vector elements are labeled by the operators in the same way as the moment matrix. We use subscripts and superscripts for labeling the eigenvectors and use brackets for specifying the element.

(i) The eigenvectors of the form 
\begin{equation}
    V^{\mathrm{(i)}}_{ij}(\hat O) = 
    \begin{cases}
    1,  & \hat{O}=h_{ij},\\
    0,  & \mathrm{otherwise},
    \end{cases}
\end{equation}
for all $i,j\in [n], i<j$. Such vectors have eigenvalue trivially 0 and all $n(n-1)/2$ of them are linearly independent. 

(ii) The eigenvectors of the form 
\begin{equation}
    V^{\mathrm{(ii)}}_{ij}(\hat O) = 
    \begin{cases}
    +1,  & \hat{O}=h_{ai} \mathrm{~or~} \hat{O}=h_{bj},\\
    -1,  & \hat{O}=h_{bi} \mathrm{~or~} \hat{O}=h_{aj},\\
    0,  & \mathrm{otherwise}. 
    \end{cases}
\end{equation}
It is straightforward to confirm that these vectors indeed have eigenvalue $3n/8(n-1)$, and there are $n-1$ linearly independent such vectors. One example of such a linearly independent set would be $\{ V^{\mathrm{(ii)}}_{12},V^{\mathrm{(ii)}}_{13},\ldots,V^{\mathrm{(ii)}}_{1n} \}$. Specifically, $V^{\mathrm{(ii)}}_{jk}$ is a linear combination of $V^{\mathrm{(ii)}}_{ij}$ and $V^{\mathrm{(ii)}}_{ik}$.

(iii) The eigenvectors of the form 
\begin{equation}
    V^{\mathrm{(iii)}}_{j}(\hat O) = 
    \begin{cases}
    +1,  & \hat{O}=h_{aj} \mathrm{~or~} \hat{O}=h_{bj} \mathrm{~or~} \hat{O}=h_{ab},\\
    -3/2, & \hat{O}=\mathbb{I},\\
    0,  & \mathrm{otherwise},
    \end{cases}
\end{equation}
for $j=1,2,\ldots,n$ and one other:
\begin{equation}
    V^{\mathrm{(iii)}}_{a}(\hat O) = 
    \begin{cases}
    +1,  & \hat{O}=h_{ai},\\
    -3n/4, & \hat{O}=\mathbb{I},\\
    n/2, & \hat{O}=h_{ab},\\
    0,  & \mathrm{otherwise}.
    \end{cases}
\end{equation}
It is again straightforward to confirm that these vectors indeed have eigenvalue 0, and there are $n+1$ linearly independent such vectors, namely $n$ of the form $V^{\mathrm{(iii)}}_{j}$ and a single $V^{\mathrm{(iii)}}_{a}$.

(iv) The eigenvectors of the form 
\begin{equation}
    V^{\mathrm{(iv)}}_{\pm}(\hat O) = 
    \begin{cases}
    \alpha_{\pm}, & \hat{O}=\mathbb{I},\\
    \beta_{\pm}, & \hat{O}=h_{ab},\\
    1,  & \mathrm{otherwise}.
    \end{cases}
\end{equation}
The equation for this vector to be an eigenvector yields the following two solutions: 
\begin{eqnarray}
    \alpha_{\pm} &=& \frac{6n^2-34n+64\pm 2\sqrt{9n^4+24n^3+121n^2-368n+592}}{3(6-7n)}\nonumber\\    
    \beta_{\pm} &=& \frac{3n^2-3n+20\pm\sqrt{9n^4+24n^3+121n^2-368n+592}}{6-7n}\nonumber
\end{eqnarray}
where the $\pm$ sign in $\alpha$ and $\beta$ must be set the same. This gives two solutions $V^{\mathrm{(iv)}}_{+}$ and $V^{\mathrm{(iv)}}_{-}$, where the eigenvalues are
\begin{equation}\label{eq:crowneigen}
    \lambda_{\pm}=\frac{20-17n-3n^2\pm\sqrt{9n^4+24n^3+121n^2-368n+592}}{16(1-n)},
\end{equation}
again corresponding to the two solutions $V^{\mathrm{(iv)}}_{\pm}$. These two eigenvalues are always positive when $n\geq3$, which concludes the proof. 

\end{proof}

\subsection{List of all eigenvectors of odd complete graphs}\label{app:compeigen}

Here, we list up all the eigenvectors and eigenvalues of the $NPA_1(\mathcal{Proj})$ moment matrix constructed in Sec. \ref{subsec:complete} for the odd complete graphs. 
This would provide an alternative proof for the inexactness of $NPA_1(\mathcal{Proj})$ for odd complete graphs, with the same approach as Appendix \ref{app:crown}, but more importantly follows how the analgous {\it classical} case was proved more generally \cite{gri01lin,lau03low,kun22spec}. 
Using the same notation as in Sec. \ref{subsec:complete}, the eigenvalues are: 
\begin{enumerate}
    \item[(i)] $(n-1)$-degenerate eigenvalues of $an/4 - (n-3)b = 0$.
    \item[(ii)] $n(n-3)/2$-degenerate eigenvalues of $a/2 - b >0$. 
    \item[(iii)] Two eigenvectors of the form $(x,1,1,\ldots,1)$ with eigenvalues $0$ ($x<0$) and positive ($x>0$),  
\end{enumerate}
which adds up to $n-1+n(n-3)/2+2=1 + {n \choose 2}$ in total, matching the size of $M$. The explicit forms of each eigenvectors and their degeneracy (dimension of eigenspace) are shown in the following. 

(i) The eigenvectors of the form 
\begin{equation}
    V^{(\mathrm{i})}_{\alpha\beta}(\hat O) = 
    \begin{cases}
    0,  & \hat{O}=\mathbb{I} \mathrm{~or~} h_{\alpha\beta}\mathrm{~or~} h_{ij} \mathrm{~with~}i,j\neq \alpha,\beta,\\
    +1, & \hat{O}=h_{ij}, ~i=\alpha \mathrm{~or~} j=\alpha,\\
    -1, & \hat{O}=h_{ij}, ~i=\beta \mathrm{~or~} j=\beta,\\
    \end{cases}
\end{equation}
where $\alpha$ and $\beta$ are two distinct vertices chosen beforehand and the vector elements are labeled with the operators corresponding to rows and columns of the moment matrix. 
While there are superficially $n(n-1)$ different possible eigenvectors of the form $V_{\alpha\beta}$, 
many of them are not linearly independent, 
the most obvious ones being $V_{\alpha\beta}=-V_{\beta\alpha}$. 
It turns out that there are exactly $n-1$ of these eigenvectors that are linearly independent. 
Confirming linear independence of the set $\{ V_{1\beta}\}_{\beta=2,3,\ldots,n}$ is straightforward, as well as confirming that these are indeed eigenvectors and have eigenvalue 0. 

(ii) The eigenvectors of the form 
\begin{equation}
    V^{(\mathrm{ii})}_{C}(\hat O) = 
    \begin{cases}
    0,  & \hat{O}=\mathbb{I}\mathrm{~or~} h_{ij} \mathrm{~with~}(ij)\notin C,\\
    \pm 1, & \hat{O}=h_{ij}, ~(ij) \in C\mathrm{~with~parity~}\pm 1,\\
    \end{cases}
\end{equation}
where $C$ is a simple cycle of any even length, with alternating signs associated to each edge it passes.  
Again, although there are many distinct cycles $C$, 
there are only $n(n-3)/2$ linearly independent $V_{C}$'s. 
Confirming the linear independence of the set $\{V_{C}\}_{C\in \mathcal{C}}$ is also easy when we set 
\begin{align}
    \mathcal{C} = \bigl\{ (i,i+1,i+2,\ldots,i+k,i) &~|~ \nonumber\\ i=1,2,\ldots n ;
    ~k=&4,6,\ldots ,n-1\bigr\}, 
\end{align}
(notice that each even ``chord" of the complete graph $(i+k,i)$ appears exactly once)
as well as confirming that these are indeed eigenvectors that have eigenvalues $a/2-b$. 

(iii) Finally, it is straight forward to see that 
\begin{equation}
    V^{(\mathrm{iii})}_{x}(\hat O) = 
    \begin{cases}
    x,  & \hat{O}=\mathbb{I},\\
    1, & \hat{O}=h_{ij},\\
    \end{cases}
\end{equation}
is an eigenvector when either 
\begin{equation}
\begin{cases}
    x= -a {n \choose 2} &, \mathrm{~Eigenvalue~} 0,\\
    x= a^{-1} &, \mathrm{~Eigenvalue~} 1+\frac{n(n-2)^2}{32(n-1)}>0.\\
\end{cases}
\end{equation}

\section{Description of $NPA_1(\mathcal{Proj})$ for the hexagon}\label{app:hexagon}
Numerically, we observe clear evidence that both $NPA_1(\mathcal{Proj})$ and $NPA_2(\mathcal{Pauli})$ obtain the exact ground state for the cycle graph with $n=6$ qubits. 
While unfortunately we have not been able to obtain an analytic $\SoS$ that verifies the numerically observed solvability, we here present the structure of the Gram vectors corresponding to the optimal (exact) moment matrix. 

\begin{figure*}[t]
    \begin{minipage}[b]{0.66\linewidth} 
    %\centering
    \includegraphics[width=10cm]{./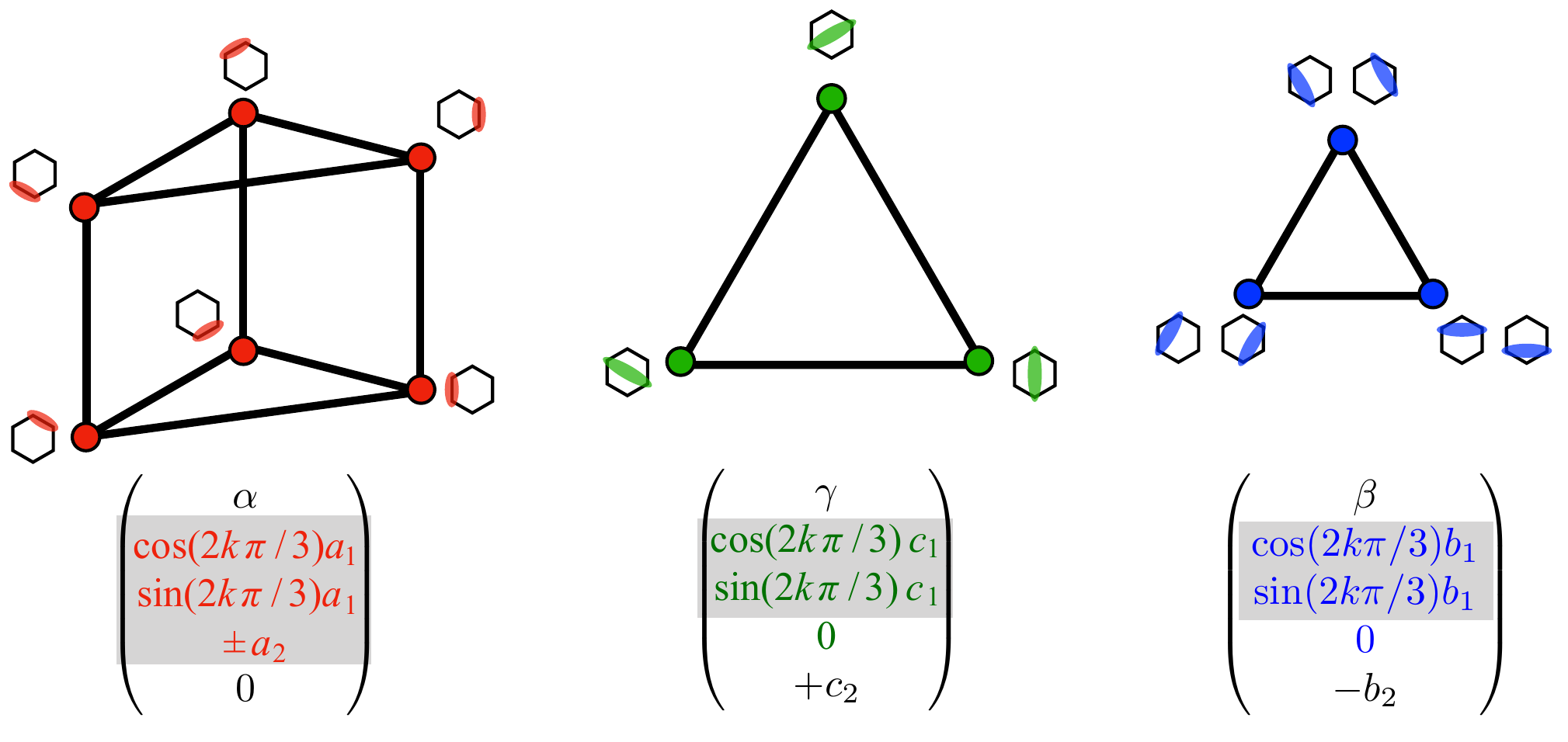}
    \end{minipage}
    \begin{minipage}[b]{0.32\linewidth} 
    %\centering
    \hspace{-10mm}
    \includegraphics[width=6cm]{./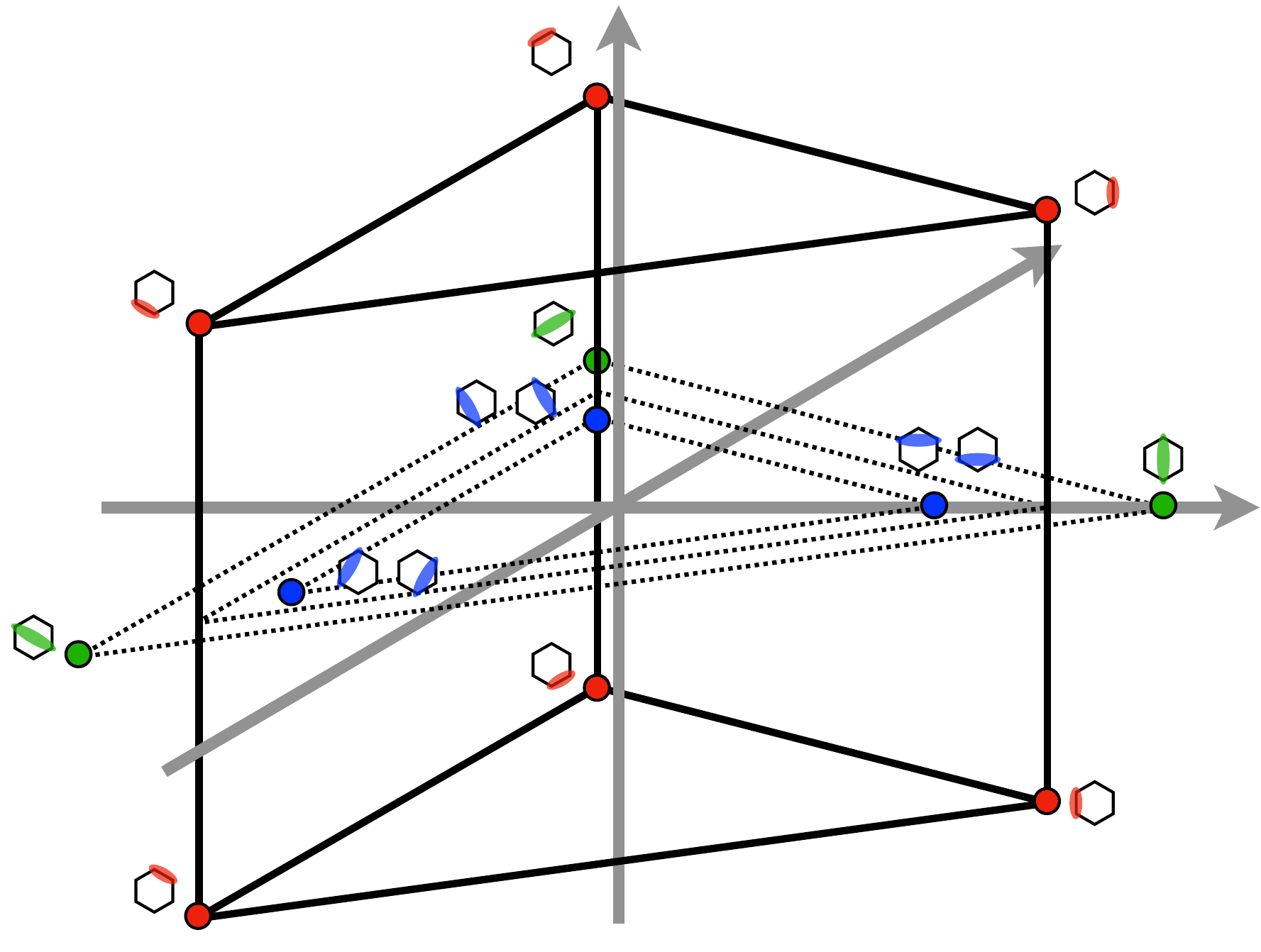}
    \end{minipage}
    \caption{The relation of all of the Gram vectors of $NPA_1(\mathcal{Proj})$ for the hexagon. The three different types of singlet projections (nearest neighbor, next nearest, and opposite position) each form either a triangle or a prism shape (left) which could be depicted as the right side figure when all are combined. 
    Although they form a 5-dimensional relation, we project them into lower dimensions for visualization. $k$ in the vector elements label the vertices with $k=0, 1,$ and $2$.}
    \label{fig:5dHex}
\end{figure*}

The Gram vectors corresponding to all $1+{6 \choose 2}=16$ rows/columns of the moment matrix forms a 5-dimensional relation, which is shown in Fig. \ref{fig:5dHex}. 
When we break it down, the Gram vectors for $h_{i,i+1}, h_{i,i+2}, \text{ and } h_{i,i+3}$ form 3, 2, and 2-dimensional relation respectively as shown on the left side of the figure.
Note that the Gram vectors for $h_{i,i+2}$ and $h_{i+3,i+4}$ are exactly the same. 
All three kinds of Gram vectors combine to form a 5-dimensional relation, but we only show the three-dimensional subspace in the figure (right) for visualization. 

The actual values in the coordinates of the Gram vectors are 
\begin{eqnarray*}
    \alpha = \frac{5+\sqrt{13}}{12}\approx0.717,  &
    \beta = \frac{1}{4}(1-3\phi)\approx 0.042, & \\
    \gamma = \frac{2}{3} (1-2\phi)\approx 0.482,~  &
    a_1 = \frac{1}{6}\sqrt{\frac{1}{2}(5+\phi)}\approx0.271 &\\
    b_1 = \sqrt{\frac{1}{8}(1-3\phi)}\approx 0.145,  &
    c_1 = \frac{1}{3}\sqrt{1+2\phi}\approx 0.416,& \\
    a_2 = \sqrt{\frac{1}{12}(1+2\phi)}\approx0.360, &
    b_2 = \phi/2\approx 0.139,  &\\
    c_2 = \phi \approx 0.278, \nonumber
\end{eqnarray*}
where $\phi=1/\sqrt{13}$. 

The basis we chose here is the most simplest, which intuitively could be understood in the following way. We choose the Gram vector for the ground state (identity in the moment matrix) to be $(1,0,0,0,0)$ without loss of generality and for simplicity. 
Then the first coordinate for all of the remaining Gram vectors ($\alpha, \beta$ and $\gamma$) simply correspond to the the expectation value of each kind of projectors. The next three coordinates correspond to the nontrivial prism-shape the $\{h_{i,i+1}\}$s form, and the other Gram vectors happen to follow the same triangular pattern, but without the height dimension. 
The last coordinate $b_2$ and $c_2$ could be seen as an additional constant term in order to ensure the normalization $h_{ij}^2=h_{ij}$, i.e., each vector must square to become $\alpha, \beta$, and $\gamma$. 

Interestingly, the hexagon Hamiltonian admits a decomposition that was used heavily in this study: 
\begin{equation}
    H=\left(\frac{1}{2}h_{12}+h_{23}+h_{34}+\frac{1}{2}h_{45}\right)+
    \left(\frac{1}{2}h_{45}+h_{56}+h_{61}+\frac{1}{2}h_{12}\right),
\end{equation}and
\begin{equation}
    \lVert H\rVert=\Big\lVert\frac{1}{2}h_{12}+h_{23}+h_{34}+\frac{1}{2}h_{45}\Big\rVert+
    \Big\lVert\frac{1}{2}h_{45}+h_{56}+h_{61}+\frac{1}{2}h_{12}\Big\rVert 
\end{equation}
simultaneously, {\it however}, $NPA_1(\mathcal{Proj})$ fails for the decomposed sub-hamiltonians, which is only observed for this particular case. 
It is quite likely that the $\SoS$ for the hexagon becomes severely more complicated than any other $\SoS$ we provided in this work.

\bibliography{refs}
\end{document}